\PassOptionsToPackage{table,dvipsnames,svgnames}{xcolor}
\documentclass[acmsmall]{acmart}

\AtBeginDocument{%
  }



\setcopyright{acmlicensed}
\acmJournal{PACMMOD}
\acmYear{2025} \acmVolume{3} \acmNumber{3 (SIGMOD)}
\acmArticle{185} \acmMonth{6} \acmPrice{}
\acmDOI{10.1145/3725322}





\usepackage{amsmath}
\usepackage{amsthm}
\usepackage{amssymb}
\usepackage{mathtools}
\usepackage{bm}
\usepackage{bbm}
\usepackage{booktabs}
\usepackage{xspace}
\usepackage{color,soul}
\usepackage{multirow}
\usepackage{multicol}
\usepackage{caption}
\usepackage{enumitem}

\usepackage{microtype}
\usepackage{graphicx}
\usepackage{subfigure}
\usepackage{booktabs}
\usepackage{hyperref}
\usepackage{nicefrac}
\usepackage[normalem]{ulem}
\usepackage{tikz}
\usetikzlibrary{trees}
\usepackage{float}
\usepackage{subfigure}
\usepackage[ruled,noend,linesnumbered,vlined]{algorithm2e}
\usepackage{colortbl}
\usepackage{hhline}

\usepackage{listings}
\lstset{emph={getGridDim,getBlockDim,kernel,get	BlockIdx},emphstyle={\textbf},emph={[2]virtual,Schedule,Signal,Wait,GetBlockExecutorId},emphstyle={[2]\color{purple!80!black}}, language={[]c},morekeywords={Dim3, elseif, size_t}, basicstyle=\fontsize{7.5}{8.5}\ttfamily, keywordstyle=\color{blue}\bfseries, commentstyle=\itshape\color{green!40!black}, frame=single,numbers=left,
stepnumber=1}

\newcommand{\specialcell}[2][c]{%
	\begin{tabular}[#1]{@{}c@{}}#2\end{tabular}}
\newcommand{\mytextcircled}[1]{\textcircled{\raisebox{-0.8pt}{#1}}}

\newcommand{\blue}[1]{\textcolor{blue}{#1}}

\definecolor{myred}{rgb}{1.0,0.7,0.8}
\definecolor{mygreen}{RGB}{0,166,0}
\definecolor{lightgreen}{rgb}{0.56, 0.93, 0.56}
\definecolor{myorange}{RGB}{252,107,4}
\definecolor{darkgreen}{RGB}{0,153,102}
\definecolor{lightblue}{rgb}{0.53, 0.81, 0.92}
\definecolor{lightgray}{gray}{0.9}

\newcommand{\mysubsubsection}[1]{{\vspace{0.25em}\noindent\ul{\textbf{#1.\xspace}}}}

\DeclareMathOperator*{\argmin}{arg\,min}

\newcommand{\conj}[1]{\overline{#1}}

\newcommand{\system}{{\textsc{Malleus}}\xspace}

\newcommand{\dpd}{\conj{DP}}
\newcommand{\ppd}{\conj{PP}}
\newcommand{\ppdi}{\conj{PP}_{i}}
\newcommand{\tpd}{\conj{TP}}

\newtheorem{theorem}{Theorem}

\newtheorem{proposition}{Proposition}

\newtheorem*{assumption*}{\assumptionnumber}
\providecommand{\assumptionnumber}{}
\makeatletter

\makeatother

\newtheorem*{statement*}{\statementnumber}
\providecommand{\statementnumber}{}
\makeatletter
\newenvironment{statement}[1]
 {%
  \renewcommand{\statementnumber}{Statement #1}%
  \begin{statement*}%
  \protected@edef\@currentlabel{#1}%
 }
 {%
  \end{statement*}
 }
\makeatother


\begin{document}

\title{\system: Straggler-Resilient Hybrid Parallel Training of Large-scale Models via Malleable Data and Model Parallelization}
\renewcommand{\shorttitle}{\system}

\author{Haoyang Li}
\authornote{Equal contribution}
\authornote{School of Computer Science \& Key Lab of High Confidence Software Technologies (MOE), Peking University}
\email{lihaoyang@stu.pku.edu.cn}
\affiliation{
\institution{Peking University}
\country{China}
}

\author{Fangcheng Fu}
\authornotemark[1]
\authornotemark[2]
\email{ccchengff@pku.edu.cn}
\affiliation{
\institution{Peking University}
\country{China}
}

\author{Hao Ge}
\authornotemark[2]
\email{gehao@stu.pku.edu.cn}
\affiliation{
\institution{Peking University}
\country{China}
}

\author{Sheng Lin}
\authornotemark[2]
\email{linsh@stu.pku.edu.cn}
\affiliation{
\institution{Peking University}
\country{China}
}

\author{Xuanyu Wang}
\authornotemark[2]
\email{wxyz0001@pku.edu.cn}
\affiliation{
\institution{Peking University}
\country{China}
}

\author{Jiawen Niu}
\authornotemark[2]
\email{niujiawen705@stu.pku.edu.cn}
\affiliation{
\institution{Peking University}
\country{China}
}

\author{Yujie Wang}
\authornotemark[2]
\email{alfredwang@pku.edu.cn}
\affiliation{
\institution{Peking University}
\country{China}
}

\author{Hailin Zhang}
\authornotemark[2]
\email{z.hl@pku.edu.cn}
\affiliation{
\institution{Peking University}
\country{China}
}

\author{Xiaonan Nie}
\authornotemark[2]
\email{xiaonan.nie@pku.edu.cn}
\affiliation{
\institution{Peking University}
\country{China}
}

\author{Bin Cui}
\authornotemark[2]
\authornote{Institute of Computational Social Science, Peking University (Qingdao)}
\email{bin.cui@pku.edu.cn}
\affiliation{
\institution{Peking University}
\country{China}
}

\renewcommand{\shortauthors}{Haoyang Li et al.}

\begin{abstract}
As the scale of models and training data continues to grow, there is an expanding reliance on more GPUs to train large-scale models, which inevitably increases the likelihood of encountering dynamic stragglers that some devices lag behind in performance occasionally. However, hybrid parallel training, one of the de facto paradigms to train large models, is typically sensitive to the stragglers. 

This paper presents \system, a straggler-resilient hybrid parallel training framework for large-scale models. 
\system quantifies the stragglers at the nuanced, per-GPU granularity during training, and develops a novel planning algorithm to deduce the optimal parallelization of GPU devices, pipeline stages, model layers, and training data, maximizing training efficiency when stragglers exist. 
In addition, once a shift in the straggler situation is detected, \system adaptively adjusts the parallelization via a re-planning process, and seamlessly and efficiently migrates the model states on the fly, without sacrificing the stability of the training tasks.
Empirical results on large language models with up to 110B parameters show that \system consistently outperforms existing parallel training frameworks under various straggler situations, delivering on average 2.63-5.28$\times$ of efficiency improvement. 
Source code is available at \url{https://github.com/PKU-DAIR/Hetu}.
\end{abstract}


\begin{CCSXML}
<ccs2012>
   <concept>
       <concept_id>10010147.10010919</concept_id>
       <concept_desc>Computing methodologies~Distributed computing methodologies</concept_desc>
       <concept_significance>500</concept_significance>
       </concept>
   <concept>
       <concept_id>10010147.10010257</concept_id>
       <concept_desc>Computing methodologies~Machine learning</concept_desc>
       <concept_significance>300</concept_significance>
       </concept>
 </ccs2012>
\end{CCSXML}

\ccsdesc[500]{Computing methodologies~Distributed computing methodologies}
\ccsdesc[300]{Computing methodologies~Machine learning}

\keywords{Straggler-resilience, Hybrid Parallel Training}

\received{October 2024}
\received[revised]{January 2025}
\received[accepted]{February 2025}

\maketitle

\section{Introduction}
\label{sec:intro}

\mysubsubsection{Background}
Recent years have witnessed the remarkable advancements of machine learning (ML) models in various data processing and data management applications~\cite{unlearn_db,NLP_SQL_EA,archetype,gptuner,dbgpt,neurdb}. 
In turns, the data management community has made essential efforts in supporting ML pipelines, including pre-ML data preprocessing~\cite{saga_data_cleaning,goldminer,dl_preprocess_pipeline,tf_data,data_cleaning_meets_ai}, in-ML model training~\cite{BladeDISC,angelptm,dl_on_ds,dl_across_clouds_ea}, and post-ML model deployment~\cite{dm4ml_survey,flashllm,Biathlon,SmartLite}. 
In particular, how to design systems for the efficient parallel training of ML models is of paramount significance to our community and has attracted great research interests.
To begin with, given the increasing volume of training data, numerous efforts have been made to optimize the parallelization and layout of training data~\cite{pytorch_ddp,pytorch_fsdp,DimmWitted,SketchML}. 
Subsequently, in addition to the training data, there is a trend to expand the scale of models themselves to strengthen their powers~\cite{scalinglaws,attn,gpt3,llama3}. 
This has made the management of model data an ever increasingly important topic~\cite{data_and_ai_model_markets,dm4llm,ai_meets_db}, and how to parallelize large-scale models over massive GPU devices has drawn extensive attention in the data management community~\cite{galvatron,MiCS,FlexMoE,saturn}. 

Typically, hybrid parallel~\cite{megatron_2} is known as the combination of data parallel and model parallel, and has become a pivotal foundation for training large-scale models over multiple GPU devices. 
When training with hybrid parallel, synchronization is expected among these GPUs. For instance, model parallel techniques like tensor parallel and pipeline parallel necessitate exchanging the intermediate results~\cite{gpipe,pipedream_flush,megatron_1}, whilst data parallel techniques require synchronizing the model replicas~\cite{pytorch_ddp,horovod,ps_limu}. 
Undoubtedly, it prefers a homogeneous environment to avoid idle waiting. 
Nevertheless, the hardware environment would be far from ideally homogeneous in the training of large-scale models --- involving more GPUs in the training inevitably raises the probability of encountering \textit{dynamic stragglers}. 
To be specific, several studies~\cite{megascale,imbue_report} have observed that a few GPUs occasionally become slower than expected, resulting in stragglers. 
Although such stragglers are relatively rare in terms of their population within a cluster, it is timely and important to address the dynamic straggler problem.

For one thing, despite the small population, the occurrences of stragglers are complex and unforeseeable, with a diverse range of manifestation patterns, such as device auto-throttling, heat issues, network jitter, and unknown sharing~\cite{imbue_report}. 
In fact, as the scale of GPU clusters increases, the dynamic straggler problem has become increasingly important. 
For instance, NVIDIA has also released an official library that facilitates straggler detection~\cite{nvidia_resiliency_ext}. 
However, given the diverse straggler situations, there lacks a concrete mathematical framework to analyze the impact on training efficiency, impeding the development of straggler-resilient solutions for large-scale model training.

For another, even a small number of stragglers could severely damage the training efficiency. 
As we will empirically show in \S\ref{sec:expr}, even in the mildest straggler situation (only 1 straggling GPU), the training time of existing frameworks slows down by around 2$\times$. 
The fundamental reason is that distributed training necessitates synchronization among GPUs, whilst existing works mainly adopt a uniform partitioning of model and data to evenly distribute the training workloads, ignoring the discrepancy in hardware capabilities when straggler(s) exist. 
As a result, there is a need to redesign the parallelization plan to address the stragglers.

\mysubsubsection{The Current Landscape}
To mitigate the dynamic straggler problem, there have been many efforts and they can be categorized into two lines. 
The first line of efforts~\cite{flexrr,ssp_ps,hetero_ps,dynamic_ssp,partial_reduce} relaxes the synchronization protocol of data parallel, allowing the non-straggling devices to update models earlier than the straggling ones. 
The second line~\cite{torch_elastic,horovod_elastic,dynamic_minibatch_sgd,resource_elasticity} elastically changes the number of devices involved in training to remove the stragglers, and dynamically adjusts the global batch size (i.e., the number of training data per step). 
However, these approaches are developed for data parallel, and address the dynamic straggler problem at the granularity of model replicas. Whilst in hybrid parallel, each model replica is served by multiple devices, making existing works ineffective (detailed in \S\ref{sec:pre_limit_analysis}). 
Worse still, impacts on the model convergence are inevitable with these approaches. Since large-scale models require an extremely long time to train, it is impractical to run numerous trials to tune hyper-parameters for these lossy approaches. 

Beyond data parallel, straggler-resilient hybrid parallel for large models is under-explored yet. In practice, it demands engineers to discover the stragglers during training and manually replace them with backup devices~\cite{megascale}. However, let alone the substantial time cost in problem shooting and restarting, this works only when there are spare nodes in the cluster, which is unrealistic in general due to the GPU shortage problem~\cite{skypilot,gpu_shortage_cross_region_cloud}. 
When there are no spare nodes, it requires extra expert experiences to manually adjust the parallel configuration. 
Besides, such an approach needs to remove the entire node (machine), yet existing studies have found that the straggler problem usually appears at the GPU granularity rather than the node granularity~\cite{megascale,imbue_report}. 
Thus, simply removing the entire node would lead to a waste of computing resources when there are non-straggling GPUs on it, calling for a more nuanced solution.

In short, to fill the gap between straggler-resilience and hybrid parallel training, there are three key challenges as follows.
\begin{itemize}[leftmargin=*]
\item
\textbf{Data-driven Device Management}: 
Existing parallel training systems usually organize the devices into equal-sized meshes and distribute the training workloads evenly among each mesh. 
However, when there are stragglers, devices within the same mesh must wait in idle to synchronize with the slowest straggler(s).
Therefore, given the nuanced granularity of the straggler problem, it is essential to quantify the hardware capability of each device in real time to model the straggler situation. 
Based on which, a data-driven approach is expected to organize the devices so that the fast ones would not be affected severely. 
Unfortunately, this is unexplored so far. 
\item
\textbf{Model Data and Training Data Parallelization}: 
It is an extremely complicated problem to determine the optimal model and data parallelization configuration when devices have diverse capabilities. 
For one thing, it is common to break down the model into small parts (e.g., pipeline stages) and assign them to different devices. However, it is non-trivial to balance the running time of each part by matching the workloads of different parts and the divergent hardware capabilities. 
For another, unlike existing works that evenly distribute the training data among all data parallel groups (i.e., model replicas), we must meticulously adjust the training data assignment to balance the running time among all model replicas.

\item
\textbf{Model Data Migration}: 
Due to the dynamicity of the straggler problem, the optimal configuration changes over time. 
An intuitive approach is to restart the training task with a new parallelization configuration to maintain high performance. 
However, since the model size is substantial, restarting is extremely expensive as we need to save the model checkpoint before each restart and re-load it later. 
Thus, there expresses a need for migrating the model data on the fly without interrupting the training task. 
\end{itemize}

\mysubsubsection{Our Solution}
To address these challenges, this work introduces \system\footnote{In our work, the parallelization is designed to be malleable in order to accommodate the dynamic straggler problem. Thus, we name our framework ``Malleus'', which is the Latin etymological origin of the word ``Malleable''.}, a straggler-resilient hybrid parallel training framework for large-scale models. 
\system pinpoints the dynamic stragglers at the nuanced, per-GPU granularity by quantifying the \textit{straggling rate} of each device, which denotes how much a straggler is slower than a normal one. 
Based on this, \system develops a novel \textit{parallelization planning algorithm} to determine the optimal parallelization plan that maximizes training efficiency. It takes the real-time straggler situation into account and strikes a good balance among the training workloads on different GPUs through a series of non-uniform partitioning w.r.t. GPU devices, pipeline stages, model layers, and training data. 
Lastly, \system achieves \textit{malleable parallelization} by promptly adjusting the parallelization on the fly to accommodate the dynamicity of straggler situations.

Specifically, given the GPUs of diverse straggling rates and the training task (the model and training data), we formulate a bi-level hierarchical optimization problem for the deduction of a parallelization plan that maximizes training efficiency. 

To begin with, we aim to address the first challenge in the upper-level problem, whose objective is to discover how to partition the GPUs into tensor parallel groups and how to orchestrate multiple training pipelines with these groups, with each group serving as one stage of a training pipeline. 
We tackle this problem with two major efforts. 
First, we establish theorems to analyze how to partition the available GPUs into groups with performance guarantees, based on which a partition-then-split approach is devised. Second, given these groups, we formulate a mixed-integer non-linear programming (MINLP) problem in response to the varying efficiencies among groups. By solving the problem, we achieve the optimal orchestration of training pipelines. 

Given any possible solution to the upper-level problem, the lower-level problem aims to tackle the second challenge by determining the best parallelization of model layers within each pipeline as well as the best assignment of training data across the pipelines. 
Particularly, we reformulate it as a joint-optimization problem of model layer and training data assignments, which aims to minimize the training time by balancing the workloads across the stages within each pipeline and across the pipelines simultaneously. Then, we decouple the joint problem into multiple sub-problems in the form of integer linear programming (ILP), and solve them to obtain the best plan. 

To address the third challenge and facilitate the adaptation to the dynamic stragglers, we introduce an efficient re-planning process that adjusts the training task to accommodate the straggler situations in real time. 
For one thing, we design an asynchronous re-planning mechanism, which derives the optimal parallelization based on the immediate straggler situation without halting the training task. 
For another, according to the new plan, \system automatically migrates the model partitioning and device mapping on the fly, enhancing the stability and straggler-resilience in the training of large-scale models. 

The contributions of this work are summarized as follows:
\begin{itemize}[leftmargin=*]
\item We develop \system, a brand new straggler-resilient hybrid parallel training framework for large-scale models. 
\item \system quantifies the straggler problems into straggling rates at the nuanced, per-GPU granularity, and introduces a novel parallelization planning algorithm to automatically deduce the optimal model and data parallelization that maximizes training efficiency given the straggling rates. 
\item In response to the dynamic changes in straggler situations, \system supports adjusting the parallelization plan on the fly, eliminating the need of re-starting the training task. 
\item We conduct experiments using LLMs with up to 110B parameters. The results in various straggler situations show that \system outperforms existing parallel training frameworks in terms of training speed by up to 6.73$\times$ and 2.63-5.28$\times$ on average. Furthermore, \system accommodates dynamic changes in straggler situations and consistently achieves at least 90\% of the theoretic optimal performance. 
\end{itemize}

\section{Preliminaries}
\label{sec:pre}

\subsection{Parallelization in Model Training}
\label{sec:pre_parallel}

With the explosive growth in model sizes and training data, parallelization of data and models has become an essential bedrock to train large-scale models. 

\mysubsubsection{Data parallel}
\textit{Data parallel (DP)}~\cite{pytorch_ddp,horovod,ps_limu,SketchML,bagua,angelptm,tinyscript} scatters the training data to multiple devices, yet requires each device to hold one model replica. In each training step, the devices execute forward and backward propagation individually to compute the model gradients, which are then synchronized through communication (e.g., \texttt{all-reduce}) to ensure the model replicas on all devices are updated in the same way. 

\mysubsubsection{Model parallel}
Model parallel splits the model states across multiple devices to support large-scale models. Notably, \textit{tensor parallel (TP)}~\cite{megatron_1} and \textit{pipeline parallel (PP)}~\cite{gpipe,pipedream,pipedream_flush,pipeline_survey} are two well-known categories of model parallel.

\begin{figure}[!t]
\centering
\includegraphics{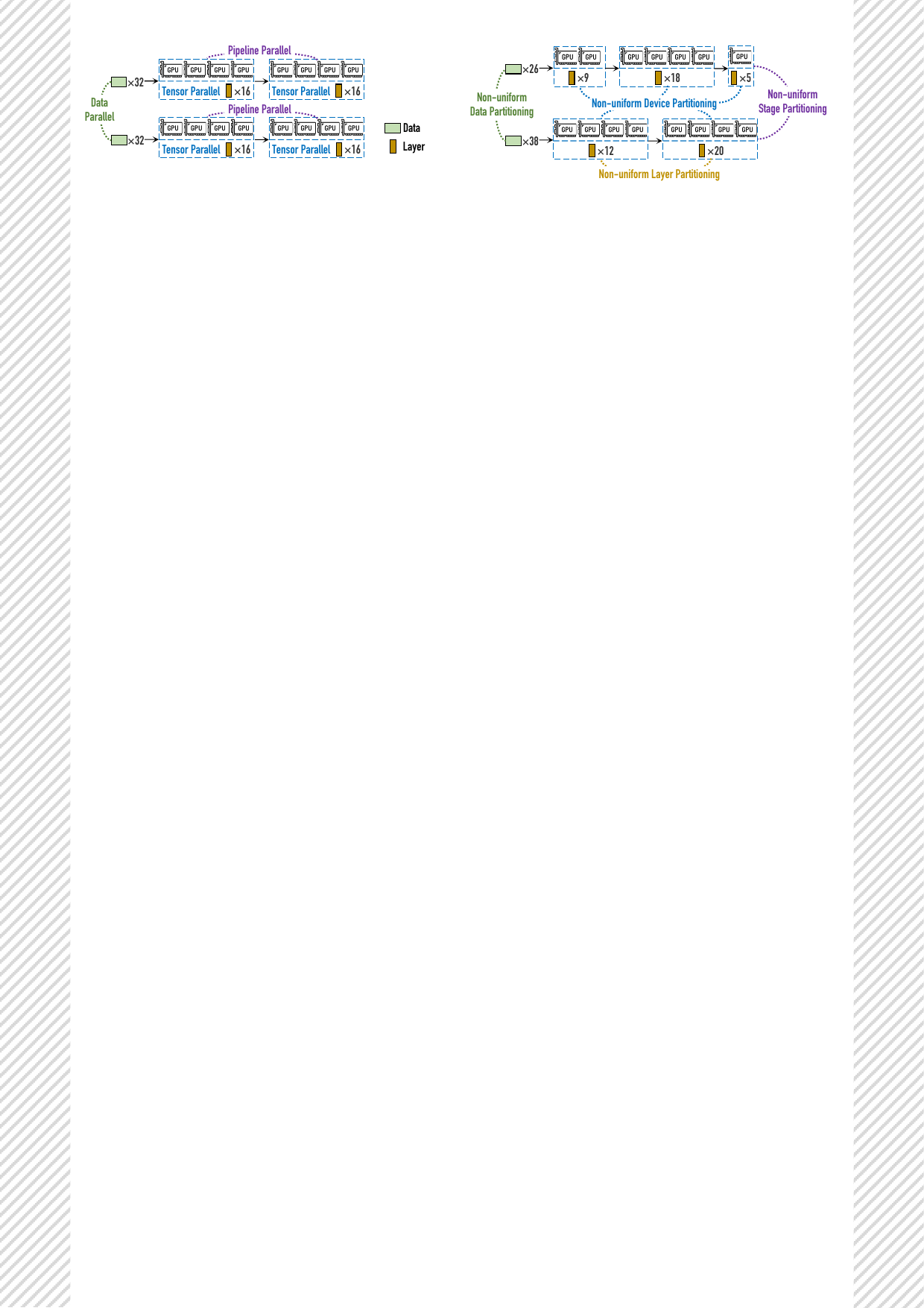}
\caption{\small{An example of 3-Dimensional (3D) parallel encompassing data parallel (DP), tensor parallel (TP), and pipeline parallel (PP)}. The model consists of 32 layers and the global batch size is 64.}
\label{fig:3d_parallel}
\end{figure}

\begin{figure}[!t]
\centering
\includegraphics{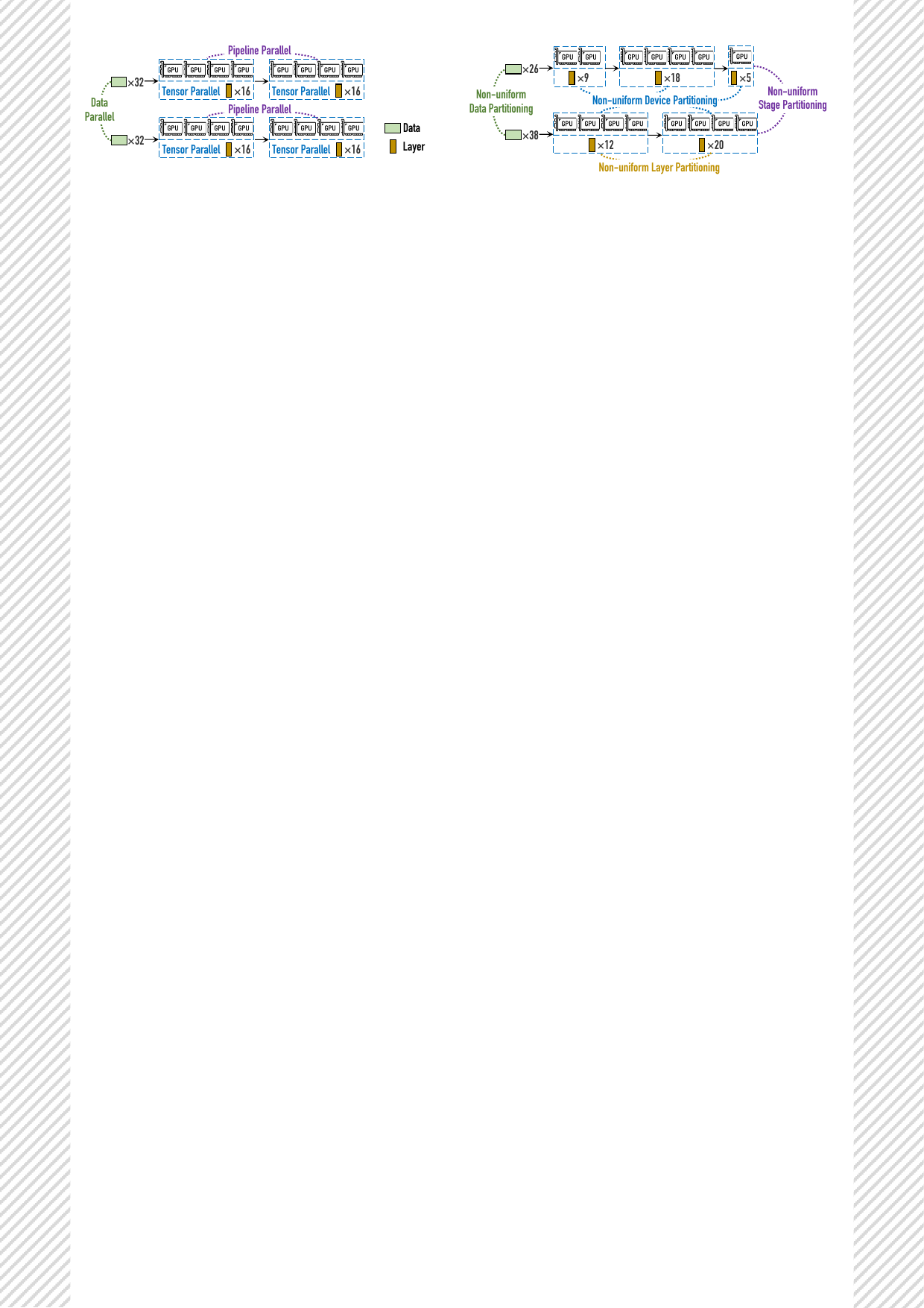}
\caption{\small{Key characteristics of the parallelization plans in our work, illustrated with the example workload in Figure~\ref{fig:3d_parallel}.}}
\label{fig:para_plan_design}
\end{figure}

TP splits the model parameters of computationally intensive operations (e.g., matrix multiplication) across devices. However, TP necessitates network communications to exchange the intermediate results in both forward and backward propagation, implying a need for high communication bandwidth among the devices. Thus, TP is typically applied on GPUs within the same node (machine), because intra-node connections have higher communication bandwidth than inter-node connections in most cases. 

PP treats a model as a sequence of layers, and partitions the layers into multiple stages. These stages are distributed across devices to form a pipeline, and peer-to-peer communication is leveraged to transmit the intermediate results between consecutive stages. Since only the activations across stages need to be transmitted, PP usually entails a much lower communication volume than TP, and is able to accommodate both intra- and inter-node network connections.

\mysubsubsection{Hybrid parallel}
Hybrid parallel~\cite{megatron_2,flexflow_soap,alpa,osdp,galvatron,galvatron_bmw,piper,MiCS,flexsp} incorporates the strengths and weaknesses of different parallelization. Particularly, one of the most widely used hybrid parallel approach is the 3-Dimensional (3D) parallel approaches in Megatron-LM~\cite{megatron_2,megatron_3}, which is illustrated in Figure~\ref{fig:3d_parallel}. Such a 3D parallel approach encompasses DP, TP, and PP, and has become fundamental for training large-scale models.

\subsection{Straggler-resilient Parallel Training}
\label{sec:straggler_resilient}

Since parallel training needs to synchronize model gradients and/or intermediate results to ensure correct convergence, when stragglers exist, most GPUs must sit idle until the slowest stragglers have caught up. Obviously, it results in a huge waste of resources and poor efficiency. 
There have been numerous approaches to mitigate the straggler problem. We classify them into two lines. 

The first line of approaches~\cite{flexrr,ssp_ps,hetero_ps,dynamic_ssp,partial_reduce} is developed based on the idea of \textit{stale synchronous parallel (SSP)}, which relaxes the synchronization protocol and allows updating different model replicas asynchronously. Specifically, fast DP groups can synchronize model gradients and update models without waiting for the slow DP groups. There is usually a threshold that constrains the differences in terms of the training steps between the fastest and slowest groups. By doing so, the SSP-based approaches reduce the idle periods of the non-straggling devices to improve efficiency. 

The second line~\cite{torch_elastic,horovod_elastic,dynamic_minibatch_sgd,resource_elasticity} addresses the straggler problem by dynamically adjusting the number of devices during training. Particularly, once any stragglers are detected, these approaches remove the related devices from the training task, and (optionally) try to request new devices from the cluster. Consequently, all devices involved in the training task will not be slowed down by stragglers. In essence, these approaches manage the devices at the granularity of model replicas, which is beneficial to ensure aligned performance among DP groups. When the number of DP groups changes, the global batch size is also adjusted proportionally.

\subsection{Limitation Analysis}
\label{sec:pre_limit_analysis}

Despite the numerous efforts of straggler mitigation, we find that they are developed for DP, but fall short in the scenario of hybrid parallel training of large-scale models. 

First, existing efforts solely aim at DP, whilst model parallel, which is indispensable due to the tremendous scale of models, is not their primary focus. In hybrid parallel, each model replica is served by a training pipeline consisting of multiple GPUs, making existing efforts ineffective. For one thing, SSP-based approaches cannot prevent the non-straggling GPUs from being affected by the stragglers within the same pipeline. For another, for approaches based on straggler removal, removing one straggler would make an entire pipeline inexecutable, so they must keep the other GPUs in the pipeline idle until a new device is ready to join. 

Second, existing efforts inevitably impact model convergence of the training task. For instance, SSP-based approaches are proven to slow down the convergence rate and may even lead to divergence due to the staleness in model gradients~\cite{ssp_ps}. 
For approaches that based on straggler removal, they also need to adapt the batch size to the DP degree, which influences model convergence. 
Since the training of large-scale models is time-consuming (e.g., it takes weeks or even months to train an LLM), it is impractical to run numerous trials to tune the hyper-parameters. 

To tackle these limitations, this work proposes to solve the straggler problem with a nuanced, per-GPU granularity, rather than the coarse, per-replica or per-node granularity. Besides, regardless of the straggler situations, our work does not adjust the global batch size and the synchronization protocol across pipelines to be lossless.

\section{System Overview}
\label{sec:system}

In this section, we present an overview of our straggler-resilient hybrid parallel training framework, namely \system.

\subsection{Design of Parallelization Plans}
\label{sec:system_plan_design}

Figure~\ref{fig:para_plan_design} illustrates the key characteristics of the hybrid parallel training in \system. To harness the stragglers, \system integrates non-uniform partitioning of GPU devices, pipeline stages, model layers, and training data, as discussed below.
\begin{itemize}[leftmargin=*]
\item[\mytextcircled{1}] 
Since GPUs in a TP group should synchronize frequently, in response to the efficiency variation among GPUs, \system supports \ul{\textit{non-uniform device partitioning}} by strategically allowing the numbers of GPUs in TP groups to be different. 
\item[\mytextcircled{2}] 
In hybrid parallel, each TP group serves as one unit that executes a pipeline stage. To accommodate the efficiency variation among TP groups, \system enables \ul{\textit{non-uniform stage partitioning}}, which allows the pipelines to have varying numbers of stages (i.e., TP groups). 
\item[\mytextcircled{3}] 
Within each pipeline, the TP groups would differ in efficiency. To address this issue, \system uses \ul{\textit{non-uniform layer partitioning}}, which supports appointing different numbers of model layers to the stages within each pipeline. 
\item[\mytextcircled{4}] 
Since the efficiencies of different pipelines would also be unaligned, \system facilitates \ul{\textit{non-uniform data partitioning}} that allocates different volumes of training data to the pipelines to balance their training time. 
\end{itemize}
In short, a solution of such non-uniform partitioning is determined to accommodate the stragglers. We call the solution a \ul{\textbf{\textit{parallelization plan}}}. And the goal of \system is to adaptively adjust the parallelization plan against the dynamic straggler situations to maximize the training efficiency.

\begin{figure}[!t]
\centering
\includegraphics{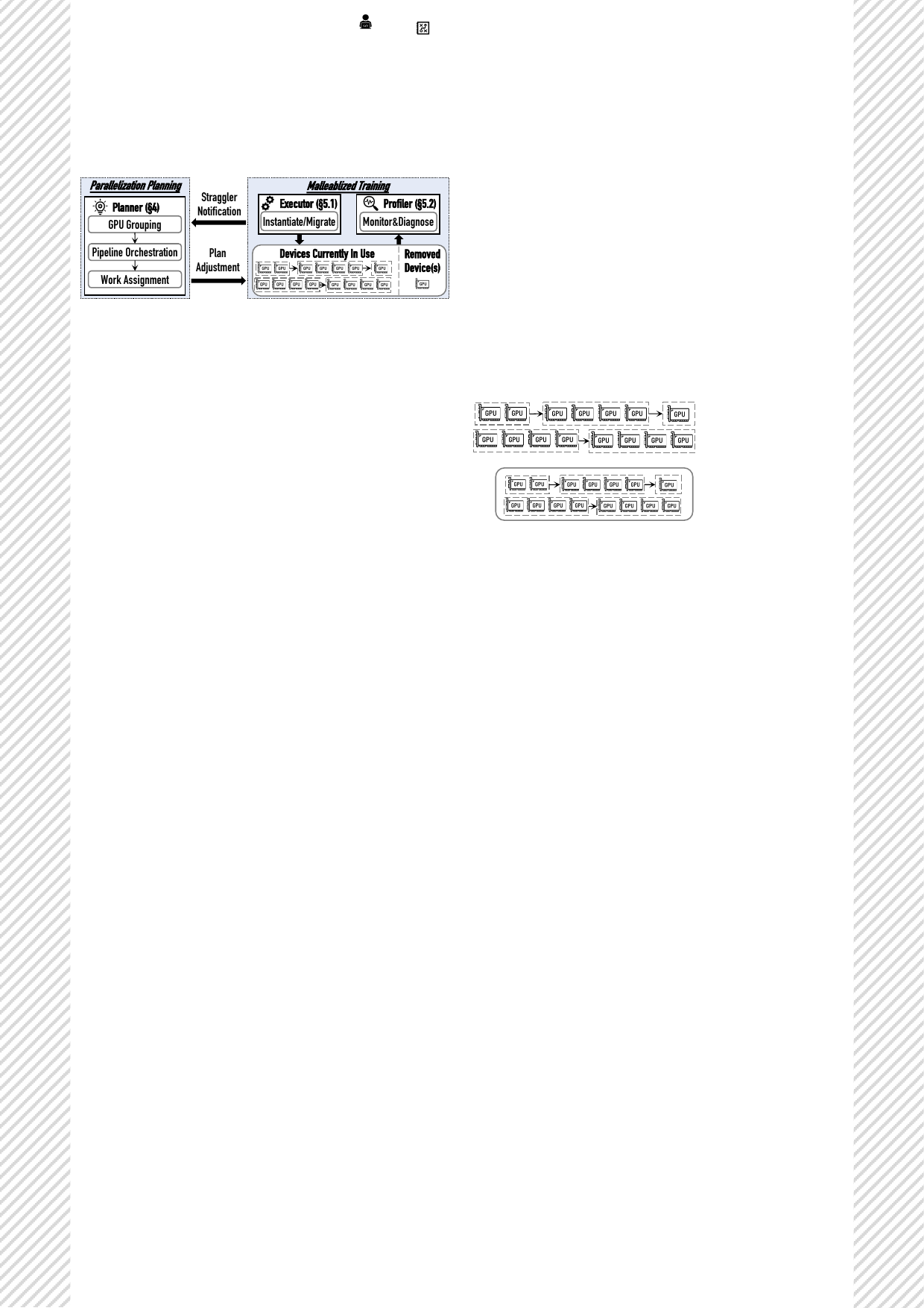}
\caption{\small{Architecture overview of \system.}}
\label{fig:system_overview}
\end{figure}

\subsection{System Components}
\label{sec:system_components}

The architecture overview of \system is shown in Figure~\ref{fig:system_overview}. 
There are three major components, as introduced below.

\mysubsubsection{Planner}
The planner is responsible for analyzing and deducing the most suitable parallelization plans. 
It takes as input the task description (e.g., model architecture, mini-batch size, etc.) provided by the user and the profiled information collected by the profiler. 
Then, it deduces the parallelization plan aiming to adapt the training plan to the stragglers in real time. The deduction is done based on a brand new planning algorithm, which will be detailed in \S\ref{sec:planner}.

\mysubsubsection{Executor}
The executor is in charge of the training. Whenever the planner makes a new decision about the parallelization plan, the executor triggers a migration process that adjusts the model partitioning and device mapping. Besides, to realize efficient training of our parallelization plans, the executor manages the model sharding and gradient synchronization across the pipelines in a non-uniform manner. 
More details will be elaborated in \S\ref{sec:malleable_training_model_management}.

\mysubsubsection{Profiler}
The profiler monitors the real-time hardware efficiency of each device and provides the GPU straggling rates on the fly. 
To be specific, it measures the running time of each GPU, identifies the stragglers, and calculates their straggling rates by comparing to the non-stragglers.
In addition to recording the running time on devices that are currently in use, it also examines devices that were previously removed due to high straggling rates since these devices may be useful later (detailed in \S\ref{sec:malleable_training_device_management}). 
During training, if any straggling rate gives a change greater than 5\% between two consecutive iterations, the profiler will consider it as an obvious shift in the straggling situation and notify the planner immediately.

\mysubsubsection{Overall Routine}
The overall routine of \system is as follows.
\mytextcircled{1} The training process begins with an initial parallelization plan, which can be deduced by the planner (with all straggling rates being 1) or manually configured by the user.
\mytextcircled{2} The executor instantiates the model states over the GPUs and carries out the training. 
\mytextcircled{3} Meanwhile, the profiler consistently tracks the performance and examines the GPU straggling rates. 
\mytextcircled{4} 
Once any straggling rate changes over 5\% between two consecutive iterations, the planner is informed and then triggers a re-planning process to adjust the parallelization plan to adapt to the dynamic stragglers.

\begin{table}[!t]
\small
\centering
\caption{\small{Notations used in this work. DP, TP, and PP refer to data parallel, tensor parallel, and pipeline parallel, respectively.}}
\label{tb:notatiions}
\begin{tabular}{ll}
\hline
\toprule
$\dpd$ & DP degree (the number of pipelines) 
\\
$\ppdi$ & PP degree (the number of stages) of the $i$-th pipeline
\\
$x$ & The GPU straggling rate of a specific GPU, which denotes its slowdown rate \\
& compared to a normal GPU (higher indicates slower and $x=1$ indicates the GPU is not a straggler)
\\
$y$ & The group straggling rate of a specific TP group
\\
$L$ & The number of layers in the model
\\
$l_{i,j}$ & The number of model layers assigned to the $j$-th stage in the $i$-th pipeline
\\
$B$ & The global batch size (number of training data per step) 
\\
$b$ & The micro-batch size 
\\
$m_i$ & The number of micro-batches assigned to the $i$-th pipeline
\\
$\mathbb{N}_0$ & The set of non-negative integers, i.e., $\{0, 1, 2, \cdots\}$
\\
\bottomrule
\end{tabular}
\end{table}

\begin{figure*}[!t]
\centering
\includegraphics[width=\textwidth]{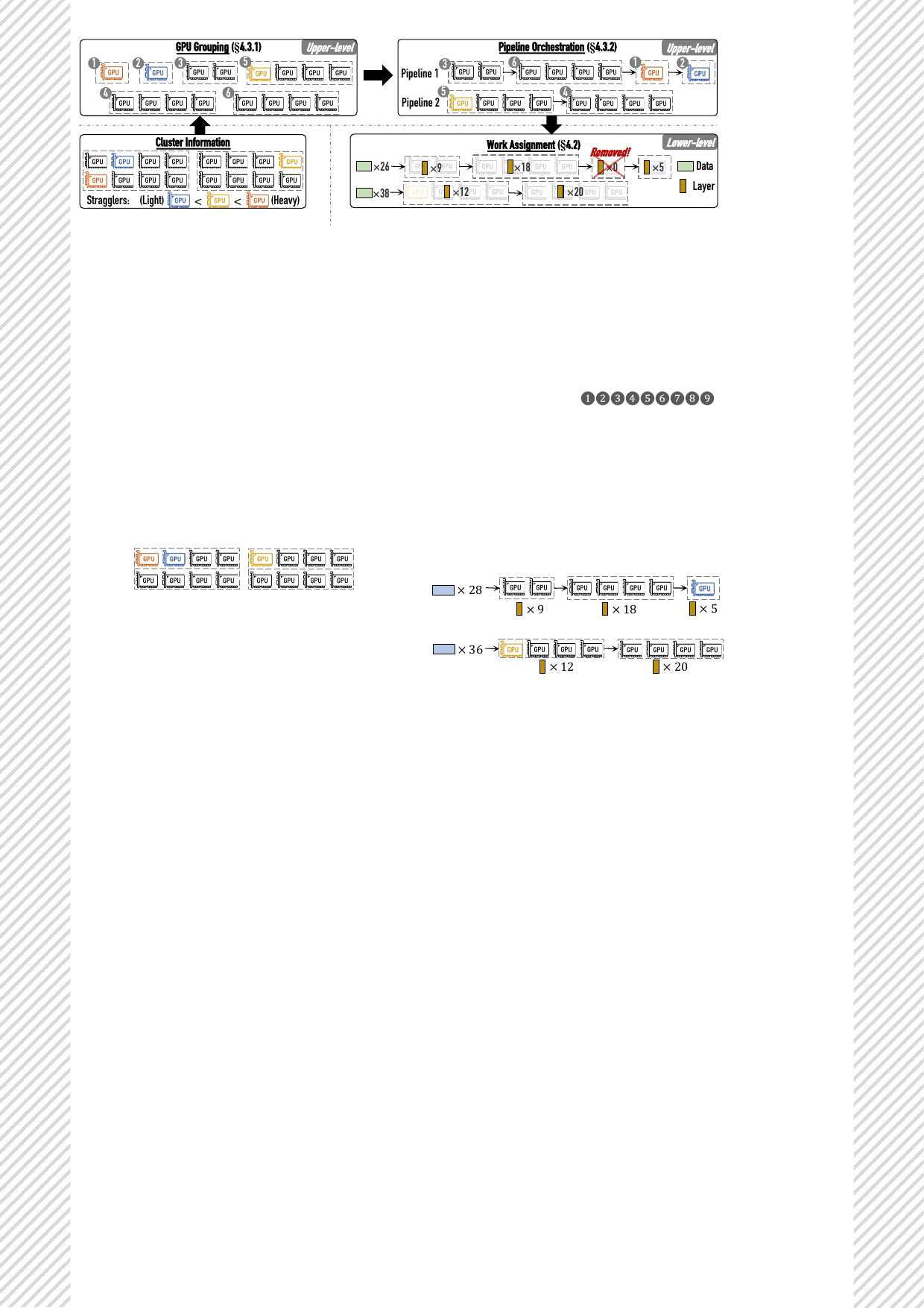}
\caption{\small{Overview of the planning routine, illustrated with an example where 3 out of 16 GPUs are stragglers (depicted in the lower left). The model consists of 32 layers and the global batch size is 64. The planning routine partitions the GPUs into 6 groups (upper left), organizes them into 2 pipelines (upper right), and assigns the model layers across different pipeline stages as well as assigns the training data across the pipelines (lower right).}}
\label{fig:planner_routine}
\end{figure*}

\section{Parallelization Planning}
\label{sec:planner}

In this section, we introduce our planning algorithm, which aims to optimize the overall training efficiency.
Frequently used notations are listed in Table~\ref{tb:notatiions}.

\subsection{Problem Formulation}
\label{sec:planner_formulation}

In a nutshell, a parallelization plan consists of four components. 
\mytextcircled{1} The \textit{GPU grouping} that indicates how the available GPUs are partitioned into different TP groups.
\mytextcircled{2} The \textit{pipeline orchestration} that describes how to organize multiple training pipelines with the TP groups, where each group serves as one stage of one pipeline.
\mytextcircled{3} The \textit{layer assignment} of each pipeline, which represents how the model layers are assigned to the corresponding groups.
\mytextcircled{4} The \textit{training data assignment} of each pipeline, which corresponds to how a global batch of training data is scattered among the pipelines.

To ease the planning, we formulate the deduction of parallelization plan as a bi-level optimization problem as follows.
\begin{itemize}[leftmargin=*]
\item \textbf{Upper-level problem:} 
Suppose there are $N$ GPUs available, and denote the straggling rate of the $i$-th GPU as $x_i$. The objective of the upper-level problem is to find out the best combination of GPU grouping and pipeline orchestration that minimizes the training time. 
\item \textbf{Lower-level problem:}
Suppose there is an arbitrary combination of GPU grouping and pipeline orchestration, and denote the straggling rate of the $j$-th stage in the $i$-th pipeline as $y_{i,j}$. The objective of the lower-level problem is to determine the best combination of layer assignment and training data assignment minimizing the training time. 
\end{itemize}
Undoubtedly, the Cartesian product of the solution regions of the two problems covers all feasible parallelization plans, so we only need to solve the bi-level problem. 

Figure~\ref{fig:planner_routine} presents an overview of our planning routine. There are three processes. 
Given the profiled information, \system initiates the GPU grouping process (\S\ref{sec:planner_upper_level_gpu_grouping}), which aims to construct TP groups given the available GPUs. 
Then, the pipeline orchestration process (\S\ref{sec:planner_upper_level_pipeline_orchestration}) determines how to organize multiple training pipelines with these groups. Both the GPU grouping and pipeline orchestration processes are for the upper-level problem. 
Finally, the work assignment process (\S\ref{sec:planner_lower_level}) solves the lower-level problem by a joint optimization of layer and training data assignments.

\subsection{Solving the Lower-level Problem}
\label{sec:planner_lower_level}
Suppose we have orchestrated $\dpd$ pipelines, and the $i$-th pipeline consists of $\ppdi$ stages. As introduced in \S\ref{sec:system_plan_design}, the pipeline stages may vary in the number of GPUs and performance due to the stragglers. 
The lower-level problem aims to jointly deduce the optimal assignments of model layers and training data, in order to minimize the training time. 

\mysubsubsection{Definition of Group Straggling Rates}
We first define the straggling rates of TP groups, which are utilized to estimate the efficiency of a group in our work. 
Suppose there is a TP group containing $n$ GPUs with straggling rates of $\{x_{k_1}, \cdots, x_{k_n}\}$. 
Such GPU straggling rates are given by the profiler, as introduced in \S\ref{sec:system_components}. 
Then, the group straggling rate is calculated based on two considerations:
\begin{itemize}[leftmargin=*]
\item The slowest straggler dominates the running time of the whole group due to the synchronous nature of TP. Hence, the group straggling rate is dependent on $\max\{ x_{k_{1}}, \cdots, x_{k_{n}} \}$. 
\item Since the workload in a TP group is evenly distributed, the number of GPUs in a group matters. Denote $\zeta_{n}$ as the time cost of a unit workload (e.g., one Transformer layer with a batch size of 1) with $n$ non-straggling GPUs. Then, we compute $\rho_{n} = \zeta_{n} / \max_{n^\prime}\{\zeta_{n^\prime}\}$ as the coefficient of efficiency degradation when the group consists of $n$ GPUs.  
Such coefficients can be profiled and computed beforehand. 
\end{itemize}
Putting them together, the group straggling rate is calculated as $y = \rho_n \times \max\{x_{k_1}, \cdots, x_{k_n}\}$.

\mysubsubsection{Cost Model}
For all $ i \in [1, \dpd], j \in [1, \ppdi]$, denote $m_i$ as the number of micro-batches assigned to the $i$-th pipeline, and $l_{i,j}$ as the number of layers assigned to the $j$-th stage in the $i$-th pipeline. We introduce the cost modelling for training time and memory consumption, respectively. 

To begin with, we focus on the training time. 
For a micro-batch size of $b$, we denote $\tau(b)$ as the running time (including forward and backward) of one layer when the group straggling rate is 1, which can be profiled in advance. Since modern large models (e.g., LLMs) typically consist of identical layers, the running time of the $j$-th stage in the $i$-th pipeline for one micro-batch can be modelled as $t_{i,j} = y_{i,j} \times l_{i,j} \times \tau(b)$. 

Following prior works~\cite{alpa,oobleck,galvatron_bmw}, we model the running time of the $i$-th pipeline as 
$T_i = (m_i - 1) \times \max_{j}\{t_{i,j}\} + \sum_{j}{t_{i,j}}$,
where the first term represents the running time of the 1F1B phase and the second term is for the warm-up and cool-down phases. 
Since large models are usually trained with a huge batch size (e.g., the global batch size in LLM training is usually a few millions of tokens) whilst the maximum accommodable micro-batch size is relatively small due to the limited GPU memory, the number of micro-batches ($m_i$) is usually much larger than that of pipeline stages ($\ppdi$). 
Therefore, we simplify the modelling for training time as 
$T_i \approx m_i \times \max_{j}\{t_{i,j}\} = m_i \times \max_{j}\{y_{i,j} \times l_{i,j}\} \times \tau(b)$. 
Despite the simplification, empirical results in \S\ref{sec:expr_e2e} show that the estimated running time given by our cost model is extremely close to the actual running time.

Next, for memory modelling, 
similar to prior works \cite{alpa,galvatron_bmw}, we focus on the memory of model states and forward activations. Both are proportional to the number of layers, whilst the memory of activations is also related to the stage index $j$ in 1F1B pipeline execution. For simplicity, we denote $l_{i,j} \times \mu_{i,j}(b) + \nu_{i,j}(b)$ as the memory usage for the $j$-th stage in the $i$-th pipeline, where $\mu_{i,j}(b), \nu_{i,j}(b)$ are stage-specific coefficients calculated by profiling. Besides, as groups can vary in the number of GPUs, we denote $C_{i,j}$ as the memory capacity. Due to space constraints, we leave the details of how to calculate $\mu_{i,j}(b), \nu_{i,j}(b), C_{i,j}$ to Supplementary Appendix~B.4 

\mysubsubsection{Deriving the Assignments}
Based on the cost model, we wish to deduce the optimal values for $l_{\cdot, \cdot}, m_{\cdot}, b$. Recalling that the maximum accommodable micro-batch size is usually small, we enumerate the value for $b \in \{1, 2, \cdots\}$ until all assignments exceed the memory constraint. 
For each enumerated $b$, the lower-level problem, i.e., how to minimize the running time of the slowest pipeline by partitioning the model layers ($l_{\cdot, \cdot}$) and training data ($m_{\cdot}$), can be formulated as
\begin{equation}
\label{eq:lower_problem_one_case_origin}
\small
\begin{aligned}
& \argmin_{l_{i,j}, m_i} 
\max_{i \in \left[1, \dpd\right]} \left\{ \max_{j \in \left[1, \ppdi\right]} \left\{y_{i,j} \times l_{i,j}\right\} \times m_i \times \tau(b) \right\} \\
& \text{ s.t. }  
 \sum_{i=1}^{\dpd} m_i \times b = B, \;
\sum_{j=1}^{\ppdi} l_{i,j} = L \text{, } \forall i \in \left[1, \dpd\right] \\
& \;\;\;\;\;\;\; l_{i, j} \times \mu_{i,j}(b) + \nu_{i,j}(b) \leq C_{i,j} \text{, } \forall i \in \left[1, \dpd\right], \forall j \in {\left[1, \ppdi\right]} \\
& \;\;\;\;\;\;\; \vphantom{\sum_{i=1}^{\dpd}} l_{i,j}, m_i \in \mathbb{N}_0 \text{, } \forall i \in \left[1, \dpd\right], \forall j \in {\left[1, \ppdi\right]} 
\end{aligned}
\end{equation}
It can be proved that the partitioning of training data and model layers is orthogonal, so solving Eq.~\eqref{eq:lower_problem_one_case_origin} is equivalent to solving two orthogonal types of sub-problems (detailed proofs provided in Supplementary Appendix~B.5. 
The first type consists of $\dpd$ sub-problems, with the $i$-th sub-problem written as
\begin{equation}
\label{eq:lower_problem_one_case_layer}
\small
\begin{aligned}
& \argmin_{l_{i,j}} \max_{j \in \left[1, \ppdi\right]} \left\{ y_{i,j} \times l_{i,j} \right\} \\
& \text{ s.t. } 
\sum_{j=1}^{\ppdi} l_{i,j} = L \text{, } \forall i \in \left[1, \dpd\right] \\
& \;\;\;\;\;\;\; l_{i, j} \times \mu_{i,j}(b) + \nu_{i,j}(b) \leq C_{i,j}, \;\; 
l_{i,j} \in \mathbb{N}_0 \text{, } \forall j \in {\left[1, \ppdi\right]} 
\end{aligned}
\end{equation}
Eq.~\eqref{eq:lower_problem_one_case_layer} is a well-formulated integer linear programming (ILP) problem, which can be solved efficiently. Denote $o_i$ as the optimal value for the $i$-th sub-problem of the first type, then the second type of sub-problem is as follows.
\begin{equation}
\label{eq:lower_problem_one_case_data}
\small
\begin{aligned}
& \argmin_{m_i} \max_{i \in \left[1, \dpd\right]} \left\{ o_i \times m_i \right\} \times \tau(b) \\
& \text{ s.t. } 
\sum_{i=1}^{\dpd} m_i \times b = B, \;\;
m_i \in \mathbb{N}_0 \text{, } \forall i \in \left[1, \dpd\right]
\end{aligned}
\end{equation}
Eq.~\eqref{eq:lower_problem_one_case_data} is also a well-formulated ILP problem. By solving these $\dpd + 1$ sub-problems, 
we obtain the optimal layer assignment for each stage and the optimal training data assignment for each pipeline. 

It is worth noting that solving these ILP problems can automatically assign zero layers to groups with high straggling rates. The GPUs in these groups are removed from the training for better efficiency, as exemplified in Figure~\ref{fig:planner_routine}. 

Last but not least, the numbers of decision variables in Eq.~\eqref{eq:lower_problem_one_case_layer} and Eq.~\eqref{eq:lower_problem_one_case_data} are $\ppdi$ and $\dpd$, respectively, which are very small in practice. Thus, the lower-level problem is solvable within a short period of time. 
In Supplementary Appendix~A.2, 
we perform a detailed time breakdown of our parallelization planning algorithm, which shows that the time cost of the lower-level problem is negligible.

\subsection{Solving the Upper-level Problem}
\label{sec:planner_upper_level}

Given $N$ available GPUs, there are tremendous ways to organize the pipelines, and finding the best one is far from trivial. 
For one thing, as introduced in \S\ref{sec:system_plan_design}, we allow the number of stages to be varied among pipelines and the number of GPUs to be varied among the stages, so how to partition the GPUs into different groups is challenging. 
For another, to construct the pipelines given a feasible grouping result, we are required to determine which pipeline stage should each group serve as, and it is impossible to enumerate all candidate constructions. 
In this section, we focus on the two processes of GPU grouping and pipeline orchestration.

\subsubsection{\ul{\textbf{GPU Grouping}}}
\label{sec:planner_upper_level_gpu_grouping}
We first describe the GPU grouping process in our work. 

\mysubsubsection{Even Partitioning}
Suppose all GPU straggling rates are close, then the GPUs are expected to be evenly partitioned into TP groups to balance their performance. 
As mentioned in \S\ref{sec:pre_parallel}, it is a common practice to enforce TP within the same node due to its high communication cost. Thus, we can break down the GPU grouping into different nodes individually. 
In particular, we have the following theorem (proofs provided in Supplementary Appendix~B.1). 
\begin{theorem}
\label{thm:tp_in_one_node}
Suppose there are $n$ GPUs in a node with straggling rates $\{x_1, \cdots, x_n\}$, and we need to partition them into $n/k$ groups (each with $k$ GPUs). 
Denote $\{i_1, \cdots, i_n\}$ as the ordering satisfying $x_{i_1} \geq \cdots \geq x_{i_n}$. 
Then, the best grouping result that minimizes the running time is 
$\{ 
    \{ x_{i_1}, \cdots, x_{i_{k}} \},
    \{ x_{i_{k+1}}, \cdots, x_{i_{2k}} \}, 
    \cdots,
    \{ x_{i_{n - k + 1}}, \cdots, x_{i_n} \}
\}$.
\end{theorem}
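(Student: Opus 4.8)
The plan is to show that the proposed "sorted-and-blocked" grouping is optimal via an exchange argument, reducing the problem to a statement about the objective value of a grouping. First I would make the objective explicit: by the definition of group straggling rate and the cost model established just above, a group $G$ containing GPUs with straggling rates $\{x_{j} : j \in G\}$ has running time proportional to $\rho_k \times \max_{j \in G} x_j$, where $\rho_k$ is fixed since every group has exactly $k$ GPUs. Hence minimizing the running time of the node — which, in a pipeline, is governed by the slowest stage realized on that node — amounts to a min-max over groups of $\max_{j \in G} x_j$, i.e. we want to partition into $n/k$ blocks so that the largest block-maximum is as small as possible, and more strongly so that the sorted vector of block-maxima is lexicographically smallest. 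I would state this reduction as a short lemma so the rest of the argument is purely combinatorial.

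Next I would carry out the exchange argument. Take the sorted order $x_{i_1} \ge \cdots \ge x_{i_n}$ and let $\mathcal{B}^\star$ be the blocked partition $\{\{x_{i_1},\dots,x_{i_k}\},\dots,\{x_{i_{n-k+1}},\dots,x_{i_n}\}\}$. Given any optimal partition $\mathcal{B}$, I would show it can be transformed into $\mathcal{B}^\star$ without increasing the (sorted vector of) block-maxima. The key observation: the largest element $x_{i_1}$ must be the maximum of whatever block contains it, so that block contributes $x_{i_1}$ regardless; we may as well fill the rest of that block with the next $k-1$ largest elements $x_{i_2},\dots,x_{i_k}$, since moving them into this block can only decrease the maxima of the blocks they came from and does not change this block's maximum. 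Formally, if $x_{i_p}$ ($p \le k$) currently lies in a different block $G'$, swap it with the element of $\mathcal{B}$'s "first" block that does not belong to $\{x_{i_1},\dots,x_{i_k}\}$; one checks the first block's max stays $x_{i_1}$ and $G'$'s max does not increase. Peeling off this block and inducting on the remaining $n-k$ GPUs (which are exactly the $n-k$ smallest, and whose sorted blocked partition is the tail of $\mathcal{B}^\star$) completes the argument.

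I would then tie the combinatorial optimum back to the original running-time objective: since the running time is a nondecreasing function of each block-maximum (through $\rho_k \times \max$ and then through the pipeline cost model $T_i \approx m_i \times \max_j\{y_{i,j} l_{i,j}\} \times \tau(b)$, which is monotone in the $y_{i,j}$), a partition that lexicographically minimizes the sorted block-maxima also minimizes the running time, so $\mathcal{B}^\star$ is optimal. A remark would note uniqueness holds only up to permutation of groups and up to ties among equal straggling rates.

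The main obstacle I anticipate is making the exchange step watertight when there are ties or when the "other" block $G'$ already contains some of the top-$k$ elements — one has to be careful that the swap is well-defined and genuinely does not increase any block-maximum, and that the induction hypothesis applies cleanly to the residual set. A secondary subtlety is justifying that lexicographic minimization of block-maxima (not merely minimizing the single largest one) is the right target: this is needed because two partitions can share the same worst block yet differ in the running time of the node once layer assignment redistributes work, so I would argue that the cost model is monotone in the entire multiset of $y$-values, which licenses the stronger lexicographic claim. I expect the rest — the reduction lemma and the monotonicity observations — to be routine.
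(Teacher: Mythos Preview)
Your approach is essentially the paper's: both prove optimality by an exchange argument powered by the single fact that the overall cost $T(y_1,\ldots,y_{n/k})$ is monotone nondecreasing in each group straggling rate $y_i$. The paper flattens the grouping into an array and swaps inversions bubble-sort style, while you peel off the top block and induct; these are interchangeable bookkeeping choices for the same idea.

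One remark: your detour through ``min-max over block-maxima'' and then ``lexicographic minimization of the sorted block-max vector'' is unnecessary and slightly off-target. The cost is not a min-max, and lexicographic optimality would additionally require $T$ to be symmetric in the $y_i$, which the paper neither assumes nor needs. The paper isolates exactly the property you end up using in your swaps --- if one $y_i$ weakly decreases and the others stay fixed, $T$ cannot increase --- and runs the exchange argument directly from that (their Statement~2). Your swap step already uses only this monotonicity, so you can drop the reduction lemma and the lexicographic discussion entirely; the induction then goes through cleanly, and your worries about ties and about which objective to target disappear.
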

Theorem~\ref{thm:tp_in_one_node} suggests that putting GPUs with similar performance into the same group is preferable --- with the slower GPUs grouped together, we can mitigate their mutual delays and ensure that they do not affect the performance of the other (faster) GPU groups.
Besides, Theorem~\ref{thm:tp_in_one_node} does not make any assumptions on how the pipelines are constructed based on these groups nor how the layers and training data are assigned. 
Recalling that we do not consider cross-node grouping, we can perform the partitioning within each node and summarize the groups on all nodes.

\begin{figure}[!t]
\centering
\includegraphics{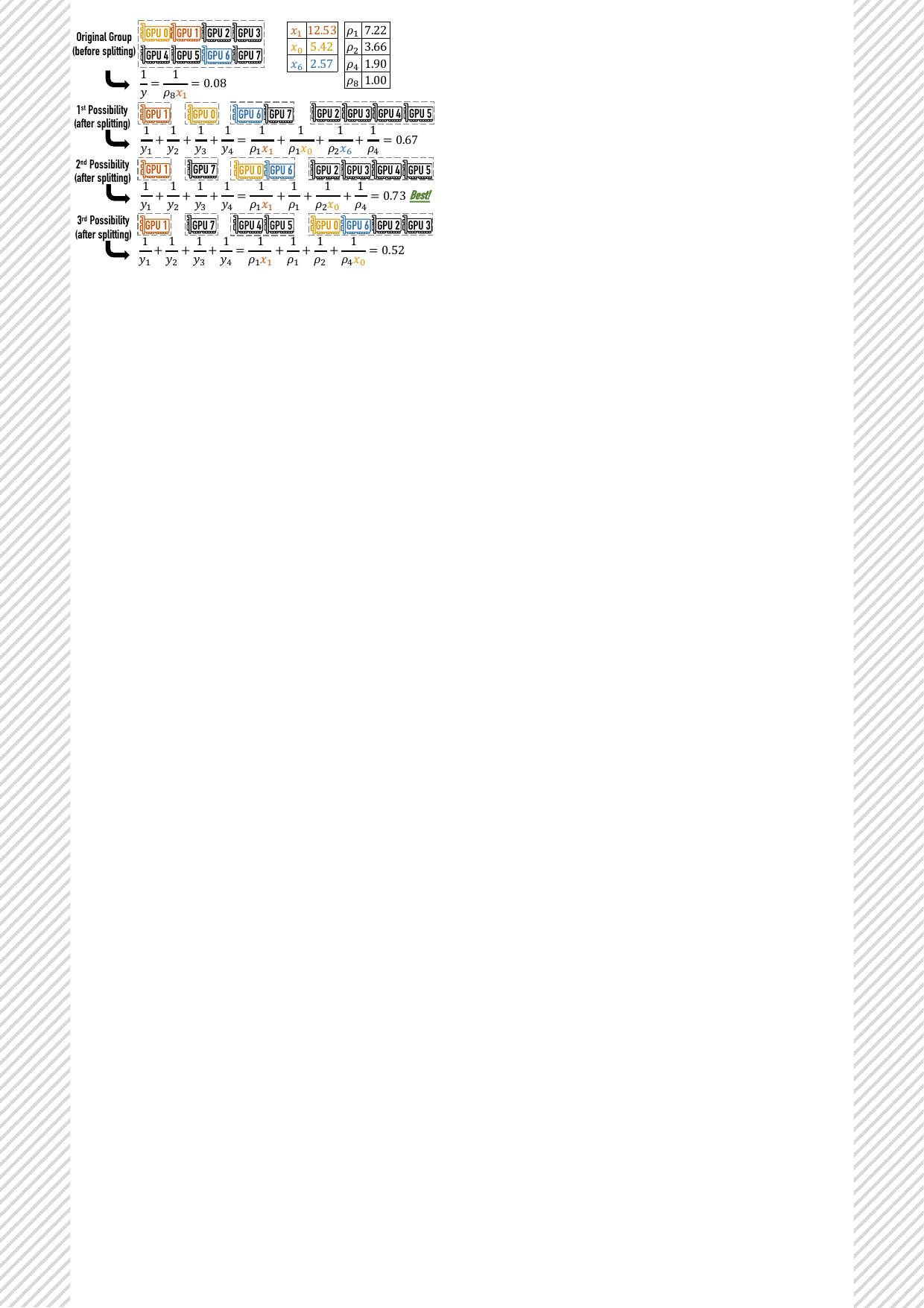}
\caption{\small{An example of GPU grouping with splitting.}}
\label{fig:split_and_group}
\end{figure}

\mysubsubsection{Group Splitting}
In practice, there may exist heavy stragglers that slow down the corresponding groups severely. In such cases, solving the lower-level problem (Eq.~\eqref{eq:lower_problem_one_case_origin} in \S\ref{sec:planner_lower_level}) would give the result that very few or even none of the layers should be assigned to the corresponding groups, leading to the waste of resources since the other GPUs in these groups become extremely under-utilized. Undoubtedly, it would be preferable if we split the groups to isolate the heavy stragglers (i.e., letting them form individual groups with a TP degree of 1).

Suppose there are 8 GPUs in the original group, after a heavy straggler has been isolated, then we need to re-group the rest 7 GPUs. 
However, Theorem~\ref{thm:tp_in_one_node} becomes inapplicable as 7 GPUs cannot be partitioned into groups with equal size. Besides, to obtain three groups with 1, 2, and 4 GPUs, respectively, there exist up to 6 possible grouping results (see details in Supplementary Appendix~B.7). 
To address this, we devise a mechanism to estimate the theoretical efficiency of an arbitrary grouping result, motivated by 
the following theorem (proofs provided in Supplementary Appendix~B.2). 

\begin{theorem}
\label{thm:theoretic_optimality}
Suppose there are two different grouping results that consist of $M^{\prime}$ and $M^{\prime\prime}$ groups with straggling rates of $\{ y^{\prime}_1, \cdots, y^{\prime}_{M^{\prime}} \}$ and $\{ y^{\prime\prime}_1, \cdots, y^{\prime\prime}_{M^{\prime\prime}} \}$, respectively. 
If we ignore the memory constraints in Eq.~\eqref{eq:lower_problem_one_case_origin}, and further assume the layer and training data assignments are not restricted to integers (i.e., $l_{i,j}, m_i$ in Eq.~\eqref{eq:lower_problem_one_case_origin} can be any positive real numbers), then the minimum training time of the two grouping results satisfy 
${T^{\prime}}/{T^{\prime\prime}} = {(\sum_{i=1}^{M^{\prime\prime}} 1 / y^{\prime\prime}_i)}/{(\sum_{i=1}^{M^{\prime}} 1 / y^{\prime}_i)}$.
\end{theorem}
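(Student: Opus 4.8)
The plan is to reduce both optimization problems to a clean closed form and then take the ratio. Consider a single grouping result with $M$ groups of straggling rates $\{y_1, \dots, y_M\}$. After dropping the memory and integrality constraints, the lower-level problem of Eq.~\eqref{eq:lower_problem_one_case_origin} separates (by the orthogonality already invoked in \S\ref{sec:planner_lower_level}) into the layer sub-problems of Eq.~\eqref{eq:lower_problem_one_case_layer} and the data sub-problem of Eq.~\eqref{eq:lower_problem_one_case_data}. Here, however, the pipeline structure is not yet fixed: a grouping result with $M$ groups can be orchestrated into pipelines in many ways, and Theorem~\ref{thm:theoretic_optimality} is a statement about the theoretically best achievable time over all such orchestrations. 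So the first step is to argue that, in the relaxed (real-valued, memory-free) setting, the optimal configuration is the degenerate one in which every group is its own single-stage pipeline — i.e.\ $\dpd = M$ and $\ppdi = 1$ for all $i$. Intuitively, merging two groups into one pipeline only forces their per-micro-batch times to add up in the $\sum_j t_{i,j}$ sense while the warm-up/cool-down term we dropped would reappear; more to the point, with real-valued $l_{i,j}$ the layer sub-problem for a pipeline of rates $\{y_{i,1}, \dots, y_{i,\ppdi}\}$ has optimal value $o_i = L / \sum_j (1/y_{i,j})$ (balance the products $y_{i,j} l_{i,j}$), and one checks that splitting that pipeline into single-stage pipelines never increases the global objective. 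I would state this as a short lemma and verify it by a direct exchange/convexity argument.

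**Second step: solve the relaxed problem in closed form.** Once every group is its own pipeline, the layer sub-problems are trivial ($l_i = L$, $o_i = L y_i$ for the $i$-th single-group pipeline), and the whole problem collapses to the data sub-problem Eq.~\eqref{eq:lower_problem_one_case_data} with $o_i = L y_i$:
\begin{equation*}
\min_{m_i > 0} \; \max_{i \in [1, M]} \{ L y_i m_i \} \cdot \tau(b) \quad \text{s.t.} \quad \sum_{i=1}^{M} m_i b = B.
\end{equation*}
This is the classic "balance-the-loads" problem: the optimum is attained when $y_i m_i$ is constant across $i$, i.e.\ $m_i \propto 1/y_i$, giving $m_i = \frac{B/b}{y_i \sum_{k} 1/y_k}$ and optimal objective
\begin{equation*}
T \;=\; L \, \tau(b) \cdot \frac{B/b}{\sum_{k=1}^{M} 1/y_k}.
\end{equation*}
The minimum over $b$ of this expression is governed entirely by the $b$-dependent factor $L\,\tau(b)\,(B/b)$, which is the \emph{same} for both grouping results; hence the optimal $b$ is identical, and the grouping-dependent part of $T$ is exactly $\big(\sum_{k=1}^{M} 1/y_k\big)^{-1}$.

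**Third step: take the ratio.** Applying the closed form to the two grouping results $\{y'_i\}_{i=1}^{M'}$ and $\{y''_i\}_{i=1}^{M''}$ and cancelling the common factor $L\,\tau(b)\,(B/b)$ yields
\begin{equation*}
\frac{T'}{T''} \;=\; \frac{\big(\sum_{i=1}^{M'} 1/y'_i\big)^{-1}}{\big(\sum_{i=1}^{M''} 1/y''_i\big)^{-1}} \;=\; \frac{\sum_{i=1}^{M''} 1/y''_i}{\sum_{i=1}^{M'} 1/y'_i},
\end{equation*}
which is exactly the claim. The interpretation is pleasant: $\sum_i 1/y_i$ is the effective aggregate throughput of a grouping, so a grouping with larger $\sum 1/y_i$ (e.g.\ one obtained by isolating a heavy straggler into its own TP-degree-1 group, freeing the remaining GPUs to form faster groups) runs proportionally faster.

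**Main obstacle.** The delicate point is the first step — justifying that in the relaxed setting single-stage pipelines are optimal, so that Theorem~\ref{thm:theoretic_optimality} really is a statement about the grouping alone and not about a co-optimized orchestration. I would handle it by showing that for any pipeline with stage rates $\{y_{i,1},\dots\}$, its relaxed optimal value $o_i = L/\sum_j(1/y_{i,j})$ means the pipeline contributes $1/o_i = \frac{1}{L}\sum_j (1/y_{i,j})$ to the aggregate-throughput sum $\sum_i 1/o_i$, and that this sum — which is what the data sub-problem's optimum depends on — equals $\frac{1}{L}\sum_{\text{all groups }g} 1/y_g$ regardless of how the groups are partitioned into pipelines. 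That makes the orchestration irrelevant in the relaxed model and reduces everything to the grouping. Everything else (the min–max load balancing, the $b$-factorization) is routine.
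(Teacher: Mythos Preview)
Your proposal is correct and lands on the same computation as the paper: solve the relaxed layer problem to get $o_i = L/\sum_j (1/y_{i,j})$, then the relaxed data problem to get $T = L\,\tau(b)\,(B/b)\big/\sum_{\text{all groups }g} 1/y_g$, and observe this depends only on the multiset of group rates. The only difference is organizational: the paper goes directly to the orchestration-invariance argument you put in your ``Main obstacle'' paragraph (computing $T$ for an \emph{arbitrary} pipeline partition and noting the double sum $\sum_i\sum_j 1/y_{i,j}$ collapses to $\sum_g 1/y_g$), so your Step~1 detour through ``single-stage pipelines are optimal'' is unnecessary---what you actually prove in the obstacle section is the stronger statement that every orchestration achieves the same relaxed optimum, which subsumes Step~1 and is exactly the paper's route.
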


Theorem~\ref{thm:theoretic_optimality} mathematically proves the ratio of optimal training times achievable by different grouping results under several assumptions to relax the constraints in Eq.~\eqref{eq:lower_problem_one_case_origin}. 
Despite the assumptions, it provides a constant-time method to examine how each possible grouping result after splitting performs compared with that before splitting.
If none of the possible grouping results improve the estimated efficiency, then we will keep the straggling GPU. Otherwise, the best one after splitting will be chosen, as depicted in Figure~\ref{fig:split_and_group}. 

To sum up, all GPUs are first evenly partitioned into groups by Theorem~\ref{thm:tp_in_one_node}. Then, we iterate the straggling GPUs in descending order w.r.t. their straggling rates. For each straggling GPU, we examine whether we should isolate it and update the grouping by Theorem~\ref{thm:theoretic_optimality}. Such a routine will be executed for each candidate TP degree in $\{1,2,4,8\}$, producing 4 grouping results in total, which are then forwarded to the pipeline orchestration process.

\subsubsection{\ul{\textbf{Pipeline Orchestration}}}
\label{sec:planner_upper_level_pipeline_orchestration}
For each grouping result, to orchestrate multiple pipelines, there are two decisions to make: (i) how to divide the groups into multiple pipelines, and (ii) how to order the groups within each pipeline. \system makes the decisions through two steps, namely pipeline division and group ordering.

\mysubsubsection{Pipeline division}
We first focus on the first step. 
A na\"ive approach is to enumerate all possible division results and solve the lower-level problem to achieve the best one. However, the number of possible division results (a.k.a. the Bell number) grows exponentially w.r.t. the number of groups, making the na\"ive approach infeasible. 

To cope with this issue, we simplify the problem from two perspectives.
First, we leverage the fact that most GPUs are not stragglers, which further indicates most groups are associated with the same value of $y$. Thus, we treat these groups as identical to reduce the possible division results. 
Second, we loosen the constraints in Eq.~\eqref{eq:lower_problem_one_case_origin} in order to accelerate the evaluation of each division result. 

To elaborate, suppose there are $M$ groups in total, where the majority of them share the same straggling rate of $\hat{y}$ (fast groups), and meanwhile $M_{s}$ of them are different from the majority with straggling rates $\{y_1, \cdots, y_{M_s}\}$ (slow groups). 
If we ignore the memory constraints in Eq.~\eqref{eq:lower_problem_one_case_origin}, and further assume the layer assignments are not restricted to integers (i.e., $l_{i,j}$ in Eq.~\eqref{eq:lower_problem_one_case_origin} can be any positive real numbers), then the problem of finding the best pipeline division to minimize the running time can be formulated as follows (detailed deduction provided in Supplementary Appendix~B.6). 

\begin{equation}
\label{eq:pipeline_division_estimation}
\small
\begin{aligned}
& \argmin_{m_i, h_i, q_{i,k}} 
\max_{i \in \left[1, \dpd\right]} \left\{ \frac{m_i \times \tau(b)}{h_i \times \hat{y} + \sum_{k=1}^{M_{s}} {q_{i,k}}/{y_{k}}} \right\} \\
& \text{ s.t. }
\sum_{i=1}^{\dpd} m_i = \frac{B}{b}, \sum_{i=1}^{\dpd} h_i = M - M_{s}, \sum_{i=1}^{\dpd} q_{i,k} = 1 \text{, } \forall k \in \left[1, M_{s}\right] \\
& \;\;\;\;\;\;\; q_{i,k} \in \{0, 1\}, m_i, h_i \in \mathbb{N}_0\text{, } \forall i \in \left[1, \dpd\right], \forall k \in \left[1, M_{s}\right]
\end{aligned}
\end{equation}
where $h_i$ denotes the number of fast groups in the $i$-th pipeline, $q_{i,k} = 1$ indicates the $k$-th slow group is partitioned to the $i$-th pipeline, and $q_{i,k} = 0$ vice versa.
The problem in Eq.~\eqref{eq:pipeline_division_estimation} is a mixed-integer non-linear programming (MINLP) problem, a kind of complex combinatorial optimization problems that is usually more time-consuming to solve compared with ILP problems. 
Nevertheless, thanks to the reduction in possible division results and the loosened constraints, there is only a handful of decision variables in Eq.~\eqref{eq:pipeline_division_estimation}. Thus, we can solve the problem within a manageable time.

\mysubsubsection{Group ordering} 
Since Eq.~\eqref{eq:pipeline_division_estimation} is irrelevant to the ordering of groups within each pipeline, before feeding the pipelines to the lower-level problem, we need to determine the group ordering. 

When there are no heavy stragglers to isolate, our GPU grouping will produce groups with the same number of GPUs. In such cases, the group ordering can be determined straightforwardly. Formally, we have the following theorem (proofs provided in Supplementary Appendix~B.3). 
\begin{theorem}
\label{thm:group_ordering_in_pipeline}
Suppose the groups assigned to the same pipeline have the same number of GPUs, then the best ordering of pipeline stages satisfies that the groups are in descending order w.r.t. the group straggling rates. 
\end{theorem}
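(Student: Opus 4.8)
The plan is to prove the claim by an exchange (adjacent-transposition) argument, after first pinning down the unique way in which the stage ordering can influence the lower-level problem. Fix a pipeline $i$ and set $P=\ppdi$. Both the simplified per-pipeline cost $m_i\times\max_{j}\{y_{i,j}\,l_{i,j}\}\times\tau(b)$ and the full cost $T_i=(m_i-1)\max_j t_{i,j}+\sum_j t_{i,j}$, as well as the layer budget $\sum_j l_{i,j}=L$, are invariant under permuting the stage index $j$. Hence the ordering enters only through the memory constraints $l_{i,j}\mu_{i,j}(b)+\nu_{i,j}(b)\le C_{i,j}$. Since all groups in the pipeline have the same number of GPUs, $C_{i,j}$ and the model-state contribution to $\mu_{i,j}(b)$ are the same for every $j$; the only stage-dependent ingredient is the in-flight activation memory, which under 1F1B scheduling is non-increasing in $j$ (stage $j$ stashes activations for $P-j+1$ micro-batches). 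Consequently the maximum number of layers a stage \emph{position} $j$ can hold, $c_j:=\max\{\,l\in\mathbb{N}_0 : l\mu_{i,j}(b)+\nu_{i,j}(b)\le C_{i,j}\,\}$, satisfies $c_1\le c_2\le\cdots\le c_P$. This monotonicity is the only structural fact I will use.

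Now the exchange step. Take any placement of the groups onto the $P$ positions, let $(l_j)_{j=1}^{P}$ be an optimal layer assignment for it, and let $V:=\max_j y_{i,j}l_j$ be the optimal objective (assume $V>0$, else $L=0$ is trivial). If the placement is not in descending order of straggling rate, there is an adjacent inverted pair, i.e.\ a position $j$ with $y_{i,j}<y_{i,j+1}$; write $a:=y_{i,j}<b:=y_{i,j+1}$, $c:=c_j\le c':=c_{j+1}$ and $S:=l_j+l_{j+1}$. I claim that after swapping the two groups there is still a feasible assignment of objective $\le V$: keep every other stage's layer count, and redistribute the $S$ layers between positions $j$ and $j+1$. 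After the swap, position $j$ carries rate $b$ and capacity $c$, position $j+1$ carries rate $a$ and capacity $c'$, so it suffices to choose nonnegative integers $l'_j+l'_{j+1}=S$ with $l'_j\le\min(\lfloor V/b\rfloor,c)$ and $l'_{j+1}\le\min(\lfloor V/a\rfloor,c')$; such integers exist iff $S\le\min(\lfloor V/b\rfloor,c)+\min(\lfloor V/a\rfloor,c')$. On the other hand, feasibility of $(l_j,l_{j+1})$ before the swap gives $l_j\le\min(\lfloor V/a\rfloor,c)$ and $l_{j+1}\le\min(\lfloor V/b\rfloor,c')$, hence $S\le\min(\lfloor V/a\rfloor,c)+\min(\lfloor V/b\rfloor,c')$. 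So it is enough to show
\[
\min(\lfloor V/a\rfloor,c)+\min(\lfloor V/b\rfloor,c')\ \le\ \min(\lfloor V/b\rfloor,c)+\min(\lfloor V/a\rfloor,c').
\]
Writing $q:=\lfloor V/a\rfloor\ge p:=\lfloor V/b\rfloor$ (as $a<b$), this is exactly $f(c)\le f(c')$ for $f(t):=\min(q,t)-\min(p,t)$, and $f$ is non-decreasing (it is $0$ for $t\le p$, $t-p$ for $p\le t\le q$, and $q-p$ for $t\ge q$); since $c\le c'$ we are done. The untouched stages still satisfy $y_{i,k}l_k\le V$, so the maximum over all stages after the swap is still $\le V$.

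Finally, sort the placement by repeatedly applying such adjacent transpositions to inverted pairs. Each transposition strictly decreases the number of inversions and, by the step above, does not increase the optimal per-pipeline objective; after finitely many of them we reach the descending-order placement, whose optimal objective is therefore $\le V$. As $V$ was the optimum of an arbitrary placement, the descending order w.r.t.\ group straggling rates is optimal, which is the claim. I expect the exchange step to be the crux: the subtlety is recognizing that one must \emph{redistribute} layers between the two swapped stages — neither leaving them fixed nor moving all layers onto the faster stage keeps the objective $\le V$ — and then reducing the validity of that redistribution to the monotonicity of $f$, with the integrality handled cleanly by the floor functions.
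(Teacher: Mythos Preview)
Your proof is correct and follows the same overall architecture as the paper's: both identify that the stage ordering enters only through the position-dependent memory caps $c_1\le c_2\le\cdots\le c_P$, and both run an exchange argument to bubble-sort the groups into descending straggling-rate order.

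The execution of the exchange step differs in an instructive way. The paper's proof, for an inverted pair $p<q$ with $y_p<y_q$, does a simple two-case split on the current layer counts: if $l_p\le l_q$ it swaps only the groups (keeping layers fixed, feasibility is trivially preserved and $\max\{y_ql_p,y_pl_q\}\le y_ql_q$), and if $l_p>l_q$ it swaps both groups \emph{and} layers (feasibility follows from $l_p\le c_p\le c_q$, and the objective is unchanged). Your proof instead works with adjacent transpositions and gives a unified existence argument: you allow an arbitrary redistribution of the $S=l_j+l_{j+1}$ layers between the two positions and reduce feasibility to the monotonicity of $f(t)=\min(q,t)-\min(p,t)$. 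Your route is a bit more abstract but has the merit of handling the integrality constraint explicitly via floors and of avoiding any case distinction; the paper's route is shorter because its explicit constructions make feasibility checks immediate. Both arrive at the same conclusion by the same mechanism.
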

It suggests that faster groups should serve as the ending stages in a pipeline. The rationale behind Theorem~\ref{thm:group_ordering_in_pipeline} is that the beginning stages need to reserve more memory for the forward activations, so putting faster groups to the ending stages allows them to process more layers for better efficiency. 
Therefore, we can determine the ordering easily but effectively.

When some heavy stragglers are isolated, our GPU grouping may produce groups with unequal numbers of GPUs, so we cannot apply Theorem~\ref{thm:group_ordering_in_pipeline} directly.   
Fortunately, the number of GPUs in a group is exactly the TP degree of the corresponding pipeline stage, which is typically restricted in $\{1,2,4,8\}$. Consequently, we can enumerate the ordering of TP degrees. To be specific, for each pipeline, groups with the same number of GPUs are first bundled together, and groups in the same bundle are sorted via Theorem~\ref{thm:group_ordering_in_pipeline}. Then, we enumerate the ordering of bundles and solve the lower-level problem to evaluate the efficiency of each enumeration. Eventually, the best enumeration will be selected. 
Considering that there are 24 ordering of bundles at most, and solving the ILP problems in \S\ref{sec:planner_lower_level} is extremely fast, such an enumeration-based approach works well.

\subsubsection{\ul{\textbf{Putting Them Together}}}

We finish this section by summarizing the overall routine. 
In the GPU grouping process, we enumerate the maximum TP degrees in $\{1,2,4,8\}$ to obtain 4 grouping results. 
Then, the pipeline orchestration process takes each grouping result as input, and forms $\dpd$ pipelines\footnote{Since the memory consumption of model parameters increases w.r.t. the DP degree in hybrid parallel, we maintain the DP degree before and after the parallelization plan adjustment. It is also feasible to consider different DP degrees, e.g., by simply enumerating DP degrees within a small range.}. 
By doing so, there are 4 candidate solutions to the upper-level problem, which will be fed into the lower-level problem to determine the best one. 

As we will show in \S\ref{sec:expr_e2e}, our planning algorithm is very efficient, taking merely 10-30 seconds in all experiments. 
Additionally, in Supplementary Appendix~A.2, 
we present a time breakdown to analyze the time cost of our planning algorithm and evaluate its efficiency when more GPUs are involved.

\section{Malleableized Training}
\label{sec:malleable_training}

This section describes how \system manages the model states and hardware devices to achieve malleableized training, enabling the immediate adjustment of parallelization plans to handle dynamic changes in straggler situations.

\subsection{Model Management}
\label{sec:malleable_training_model_management}

\mysubsubsection{Model Sharding}
It is a popular choice to partition the model states via the ZeRO-1 optimizer~\cite{zero} in hybrid parallel. 
Suppose $\tpd$ is the TP degree of an arbitrary model layer, then the associated model states are sharded into $\dpd \times \tpd$ slices, and these slices are scattered to unique GPUs, as shown in Figure~\ref{fig:model_sharding}(a). 
Whilst in \system, we adjust the model sharding to accommodate varying TP degrees, as shown in Figure~\ref{fig:model_sharding}(b). 
To be precise, for an arbitrary model layer, suppose its TP degree in the $i$-th pipeline is $\tpd_i$ and let $\tpd_{max} = \max_{i} \{\tpd_i\}$, then the corresponding model states are sharded into $\dpd \times \tpd_{max}$ slices, where each GPU in the $i$-th pipeline is responsible for $\tpd_{max} / \tpd_i$ slices. 
 
After the backward propagation, each GPU holding two or more slices should invoke multiple \texttt{reduce-scatter} communications to synchronize the gradients, along with multiple \texttt{all-gather} communications to retrieve the updated models. 
\system automatically identifies these GPUs and handles the ordering of communication calls to avoid deadlocks.

\begin{figure}[!t]
\centering
\includegraphics{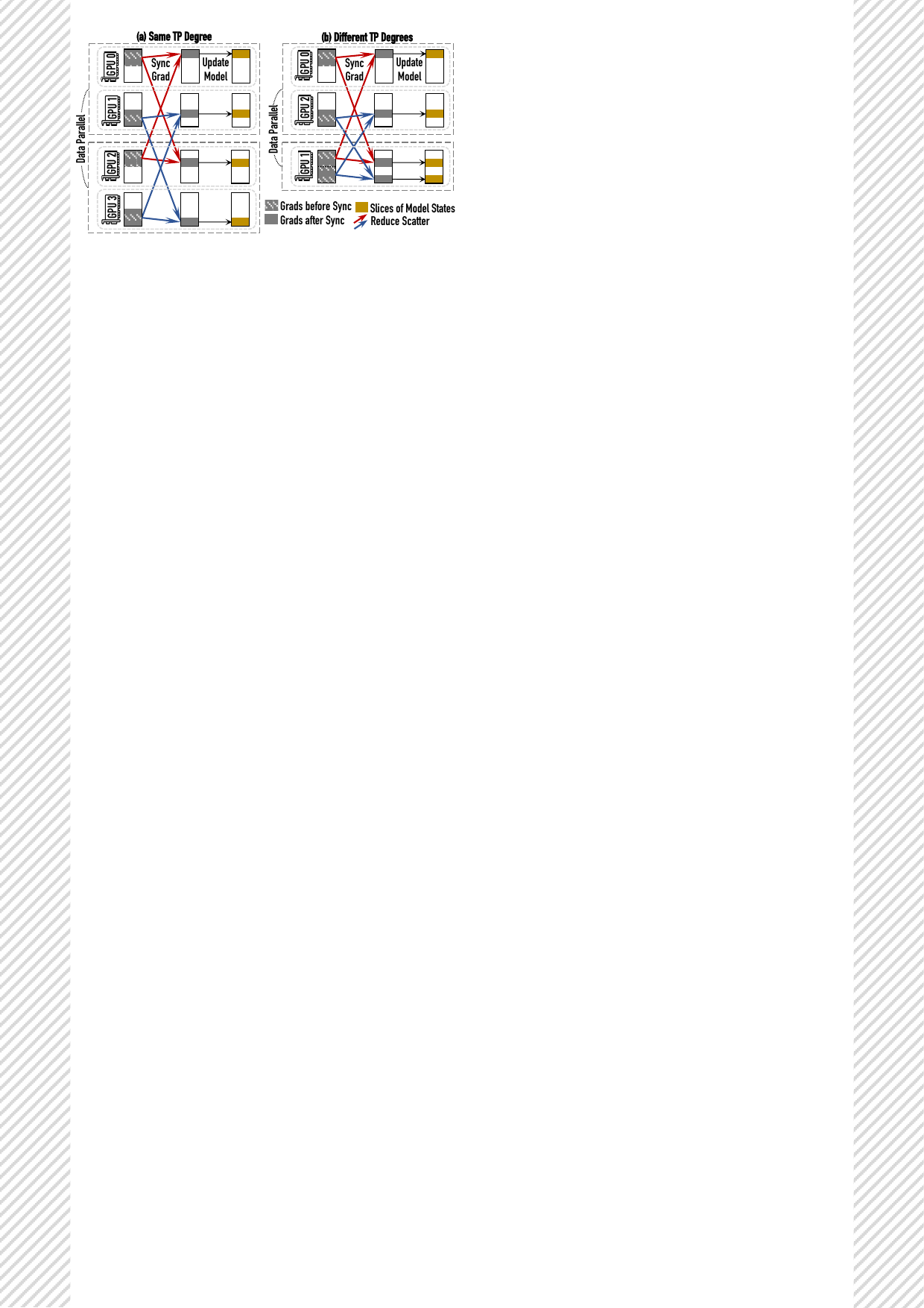}
\caption{\small{Illustration of model sharding. The red and blue arrows indicate different \texttt{reduce-scatter} communications for gradient synchronization. The \texttt{all-gather} communications after model update work inversely and are omitted.}}
\label{fig:model_sharding}
\end{figure}

\mysubsubsection{Model Migration}
Once the parallelization plan is adjusted, we must migrate the model slices to fulfill the new plan. 
For each layer, we locate the source and destination of each model slice, summarizing the many-to-many communication among the GPUs. Then, we fuse the migration of different slices with the \texttt{batched-send-recv} primitive for better efficiency. In addition, we pack the migration of multiple layers (4 layers by default) together to make full use of the network bandwidth. 

However, if a failure occurs, some GPUs may not be responding and the model states owned by them become unavailable. Although it is possible to incorporate the idea of storing redundant model states on each GPU~\cite{bamboo}, it increases memory consumption and leads to performance degradation. Thus, in such cases, we recover the training task by loading the latest model checkpoint onto the remaining GPUs and setting the straggling rates of unresponsive GPUs as infinite.

\subsection{Device Management}
\label{sec:malleable_training_device_management}

\mysubsubsection{Straggler Detection}
During the training, some devices may occasionally lag behind and become stragglers. 
To detect such dynamic stragglers, the profiler in \system records the hardware efficiency based on CUDA events.\footnote{Timing with CUDA events incurs negligible overhead, which can be validated by the results in \S\ref{sec:expr_e2e} that \system achieves comparable performance against Megatron-LM when there are no stragglers.}
In particular, we assess the computation and communication time cost of each GPU to distinguish the slower ones, and compute their GPU straggling rates by comparing them with the normal ones. Besides, we add a threshold for communication calls during training in order to detect failures.

\mysubsubsection{Elastic Scaling}
As introduced in \S\ref{sec:planner_lower_level}, \system strategically removes heavy stragglers by assigning zero layers. However, these GPUs could be back to normal or become light stragglers later. Thus, instead of removing these GPUs permanently, we maintain them as standby devices, and periodically conduct micro-benchmarks to assess their GPU straggling rates. During each time of re-planning, our planning algorithm is able to adaptively determine whether there are removed GPUs to be involved as well as whether there are new heavy stragglers to be removed. By doing so, \system supports elastic scaling of involved GPUs during the training.

\subsection{Re-planning with Overlapping}
\label{sec:malleable_training_replanning_overlapping}

When any of the GPU straggling rates have changed considerably (greater than 5\% in our implementation), \system triggers a re-planning process to accommodate the dynamicity, which involves the execution of the planning algorithm to derive a new parallelization plan as well as the migration of model states. However, although the time cost of our planning algorithm is not substantial (around 10-30 seconds in our experiments), it still leads to non-negligible idle periods if we halt the training task during the planning. To cope with this problem, we devise an asynchronous re-planning mechanism --- instead of leaving the GPUs idle, we continue training with the current parallelization plan, and execute the planning algorithm concurrently. 
Specifically, the re-planning is carried out using background processes on the CPUs, and meanwhile, the training keeps running on the GPUs with the current parallelization plan. 
Once re-planning is complete, if the new plan differs from the current one, the model migration to the updated plan occurs after the current training iteration finishes.
In practice, we find that the planning finishes within one training step, achieving satisfactory overlapping. Although the model migration cannot be overlapped, it only takes a short period of time (around 1-5 seconds in our experiments), which is acceptable.

\section{Implementation}
\label{sec:impl}

\system is designed to adapt to the dynamic stragglers, featuring a series of non-uniform partitioning and real-time adjustments in the parallelization plans. 
Fulfilling this aim requires complicated data management as well as graph information to facilitate model migration after re-planning. 
However, obtaining the computation graph information in PyTorch~\cite{pytorch} is non-trivial.
Hence, we implement \system atop our self-developed DL system, namely Hetu\footnote{https://github.com/PKU-DAIR/Hetu}~\cite{hetu,hotspa}. 
We develop the planner and the elastic executor with 3.5K LoC in Python, realize the model migration functionality with 2.3K LoC in C++/CUDA and 0.5K LoC in Python, and implement a computation graph-based system for managing non-uniform data, layers, stages, and device partitioning in distributed training with 27.3K LoC in C++/CUDA. 
To implement the planning algorithm, we use the PuLP~\cite{pulp} and Pyomo~\cite{pyomo} libraries to solve the ILP and MINLP problems, respectively. 
Hetu is particularly optimized for LLM training, with communication primitives implemented with NCCL~\cite{nccl} and computation kernels accelerated via libraries such as FlashAttention~\cite{flash_attn,flash_attn_v2}, cuBLAS~\cite{cublas}, and cutlass~\cite{cutlass}, matching the performance of Megatron-LM when there are no stragglers (evaluated in \S\ref{sec:expr_e2e}).
Note that our implementation and evaluation focus on LLMs due to their huge model sizes and the demand of training with massive GPUs, whilst the proposed designs for parallelization planning and malleableized training are applicable to more forms of deep learning models. We leave them as potential future extensions.

\section{Experimental Evaluation}
\label{sec:expr}

\subsection{Experimental Setup}
\label{sec:expr_setup}

\mysubsubsection{Hardware Environments}
We conduct all experiments on 8 GPU servers equipped with 8$\times$A800 (80G) GPUs, containing 64 GPUs in total. The GPUs within the same server are connected via NVLink with a bandwidth of 400GB/s, and the servers are connected via InifiBand with a bandwidth of 200GB/s. 
Note that although we consider NVIDIA GPUs in our evaluation, we do not make any assumptions about the choice of hardware accelerators. Thus, our work is also applicable to other hardware accelerators.

\mysubsubsection{Workloads}
We consider three LLMs in the LLaMA-2 architecture~\cite{llama2} with 32B, 70B, and 110B parameters, respectively. We train the 32B model over 32 GPUs and the other two models over 64 GPUs. The context length is set as 4K following most open-sourced LLMs. 
The global batch size (i.e., $B$) is set as 64 by default, constituting each batch with 256K tokens.

\mysubsubsection{Baselines}
To the best of our knowledge, none of the existing hybrid parallel training frameworks address the dynamic straggler problem. 
Thus, we mainly compare \system with two \textit{state-of-the-art (SOTA)} LLM training frameworks:
(1) Megatron-LM, a powerful LLM training framework that integrates DP, TP (empowered by sequence parallel~\cite{megatron_3}), and PP; 
(2) DeepSpeed, which utilizes the ZeRO-3 optimizer~\cite{zero} to scatter model states across the devices (a.k.a. Fully Sharded Data Parallel~\cite{pytorch_fsdp}) and requires gathering model parameters for each layer in both forward and backward propagation. 
Furthermore, for both baselines, we also assess their performance by manually excluding the nodes with straggling GPUs and restarting the training task, which are denoted as ``Megatron-LM w/ Restart'' and ``DeepSpeed w/ Restart'', respectively.

Besides, we further compare \system with a \textit{fault-tolerant} training framework, namely Oobleck~\cite{oobleck}, which supports dynamic migration to recover from failures. 
To be specific, we treat stragglers as faulty GPUs and let Oobleck perform migrations to exclude the stragglers and continue training.

\mysubsubsection{Straggler Simulation}
As analyzed in recent studies~\cite{megascale,imbue_report}, there are various kinds of root causes that could lead to stragglers. It is difficult to develop a benchmark for reproducing these root causes since they are hard to control. 
Thus, we control the dynamic straggler patterns by simulation to achieve a fair comparison.

Specifically, we launch extra computing processes on some GPUs to make them straggling, and we consider three levels of stragglers by launching 1-3 processes, indicated as level-1, -2, and -3 stragglers, respectively. We generate a trace containing six straggler situations to simulate diverse scenarios: 
(S1) one level-1 straggler; (S2) one level-3 straggler; (S3) one level-1 straggler and one level-3 straggler, residing in different nodes; (S4) one level-1 straggler, one level-2 straggler, and one level-3 straggler, residing in different nodes; (S5) eight level-1 stragglers on the same node and one level-2 straggler on another node; (S6) eight level-1 stragglers on the same node. 

The generated trace consists of GPU-granular stragglers (i.e., S1 to S4), node-granular stragglers (i.e., S6), and a complex situation with both GPU- and node-level stragglers (i.e., S5) to examine the robustness of the evaluated frameworks. Moreover, it contains the transitions where the stragglers appear or disappear, which matches the dynamicity in real-world scenarios. For instance, we place the most severe situation (i.e., S5) second to last, to assess whether our work can detect and adjust when the straggler disappears.

\mysubsubsection{Protocols}
We focus on the training time of each competitor under various straggler situations. For the baselines, we tune their configuration for each training task to achieve the best performance. 
The parallelization plans of \system are generated by our planning algorithm. It is noteworthy that when all straggling ratios are 1 (i.e., no stragglers), our planning algorithm produces the same 3D parallel configurations as Megatron-LM.

\begin{figure*}[!t]
\centering
\includegraphics[width=\textwidth]{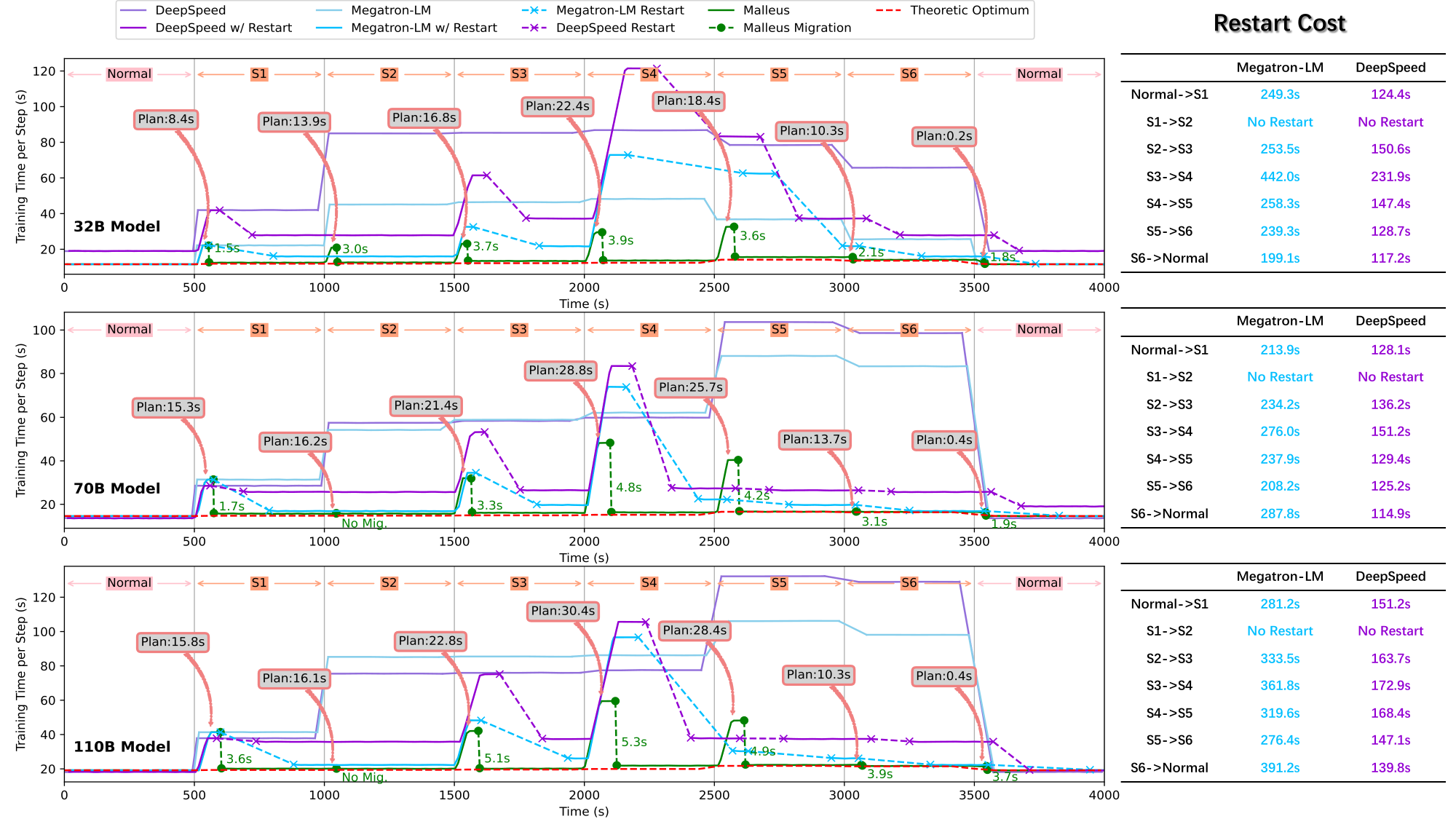}
\caption{\small{End-to-end evaluation on a trace consisting of six straggler situations (``Normal'' indicates there are no stragglers). The $x$-axis represents the trace and the $y$-axis represents the running time per step (in seconds). During each time of re-planning of \system, the time cost of planning (highlighted in background gray color) is overlapped by the training. Additionally, the time cost of migration of \system is provided in green (on the left), and the time cost of restarting of Megatron-LM and DeepSpeed is provided in blue and purple (on the right).}}
\label{fig:e2e_dynamic}
\end{figure*}

\begin{table*}[!t]
\small
\centering
\caption{\small{{Averaged running time per step (in seconds) under the straggler situations in Figure~\ref{fig:e2e_dynamic}. We also provide the Model FLOPS Utilization (MFU) when there are no stragglers. The values in parentheses represent the improvement achieved by \system compared to the baselines. 
``Avg. Improv.'' indicates the average improvement measured in terms of geometric mean.
``Theoretic Opt.'' indicates the theoretic optimum, which is computed as $T_{\textup{normal}} \times {N}/{((N-n) + \sum_{i=1}^{n}{1 / x_i})}$, where $T_{\textup{normal}}$ is the running time when there are no stragglers (i.e., considering the hardware capability to be inversely proportional to the straggling rates).}
}}
\label{tb:e2e}
\begin{tabular}{c||c|c||c|c|c|c|c|c|c}
\hline
\toprule
& & \specialcell{Normal\\(Time, MFU)} & S1 & S2 & S3 & S4 & S5 & S6 & \specialcell{Avg.\\Improv.}
\\
\hline
\hline
\multirow{10}*{\rotatebox[origin=c]{90}{\small{32B}}}
& \specialcell{DeepSpeed\\w/o Restart} & 19.0, 29.6\%  
& \specialcell{42.0\\(3.36$\times$)}
& \specialcell{84.9\\(6.73$\times$)} 
& \specialcell{85.3\\(6.36$\times$)} 
& \specialcell{86.8\\(6.38$\times$)} 
& \specialcell{78.5\\(5.03$\times$)} 
& \specialcell{65.8\\(4.67$\times$)} 
& 5.28$\times$ 
\\
\cline{2-10}
& \specialcell{Megatron-LM\\w/o Restart} & 11.6, 48.5\%
& \specialcell{22.2\\(1.77$\times$)} 
& \specialcell{45.1\\(3.58$\times$)} 
& \specialcell{46.4\\(3.46$\times$)} 
& \specialcell{48.2\\(3.54$\times$)} 
& \specialcell{36.8\\(2.35$\times$)} 
& \specialcell{25.6\\(1.81$\times$)} 
& 2.63$\times$ 
\\
\cline{2-10}
&\specialcell{DeepSpeed\\w/ Restart} & 19.0, 29.6\% 
& \specialcell{27.9\\(2.23$\times$)} 
& \specialcell{27.9\\(2.21$\times$)} 
& \specialcell{37.2\\(2.78$\times$)} 
& \specialcell{83.1\\(6.11$\times$)} 
& \specialcell{37.2\\(2.38$\times$)} 
& \specialcell{27.9\\(1.98$\times$)} 
& 2.71$\times$ 
\\
\cline{2-10}
& \specialcell{Megatron-LM\\w/ Restart} & 11.6, 48.5\%
& \specialcell{16.0\\(1.28$\times$)} 
& \specialcell{16.0\\(1.27$\times$)} 
& \specialcell{21.7\\(1.62$\times$)} 
& \specialcell{62.4\\(4.59$\times$)} 
& \specialcell{21.7\\(1.39$\times$)} 
& \specialcell{16.0\\(1.13$\times$)} 
& 1.63$\times$ 
\\
\cline{2-10}
& \system & 11.6, 48.5\%
& \textbf{12.5} 
& \textbf{12.6} 
& \textbf{13.4} 
& \textbf{13.6} 
& \textbf{15.6} 
& \textbf{14.1} 
& -
\\
\hhline{~|*9-}
& \cellcolor{gray!20}{Theoretic Opt.} & \cellcolor{gray!20}{-}
& \cellcolor{gray!20}{11.9}
& \cellcolor{gray!20}{11.9}
& \cellcolor{gray!20}{12.2}
& \cellcolor{gray!20}{12.4}
& \cellcolor{gray!20}{14.2}
& \cellcolor{gray!20}{13.6}
& \cellcolor{gray!20}{-}
\\
\hline
\hline
\multirow{10}*{\rotatebox[origin=c]{90}{\small{70B}}}
& \specialcell{DeepSpeed\\w/o Restart} & 13.6, 48.2\% 
& \specialcell{28.5\\(1.81$\times$)} 
& \specialcell{57.4\\(3.65$\times$)} 
& \specialcell{58.3\\(3.62$\times$)} 
& \specialcell{59.9\\(3.67$\times$)} 
& \specialcell{103.6\\(6.24$\times$)}  
& \specialcell{98.6\\(5.97$\times$)} 
& 3.85$\times$ 
\\
\cline{2-10}
& \specialcell{Megatron-LM\\w/o Restart} & 14.6, 44.9\% 
& \specialcell{31.3\\(1.99$\times$)} 
& \specialcell{54.2\\(3.45$\times$)} 
& \specialcell{58.9\\(3.66$\times$)} 
& \specialcell{62.1\\(3.81$\times$)} 
& \specialcell{88.1\\(5.30$\times$)} 
& \specialcell{83.3\\(5.05$\times$)} 
& 3.70$\times$ 
\\
\cline{2-10}
& \specialcell{DeepSpeed\\w/ Restart} & 13.6, 48.2\% 
& \specialcell{25.7\\(1.64$\times$)} 
& \specialcell{25.7\\(1.64$\times$)} 
& \specialcell{26.5\\(1.65$\times$)} 
& \specialcell{27.3\\(1.67$\times$)} 
& \specialcell{26.5\\(1.60$\times$)} 
& \specialcell{25.7\\(1.58$\times$)} 
& 1.63$\times$ 
\\
\cline{2-10}
& \specialcell{Megatron-LM\\w/ Restart} & 14.6, 44.9\%
& \specialcell{16.9\\(1.08$\times$)} 
& \specialcell{16.9\\(1.08$\times$)} 
& \specialcell{19.9\\(1.22$\times$)} 
& \specialcell{22.3\\(1.37$\times$)} 
& \specialcell{19.7\\(1.19$\times$)} 
& \specialcell{16.9\\(1.02$\times$)} 
& 1.15$\times$ 
\\
\cline{2-10}
& \system & 14.6, 44.9\% 
& \textbf{15.7} 
& \textbf{15.7} 
& \textbf{16.1} 
& \textbf{16.3} 
& \textbf{16.6}
& \textbf{16.5}
& -
\\
\hhline{~|*9-}
& \cellcolor{gray!20}{Theoretic Opt.} & \cellcolor{gray!20}{-}
& \cellcolor{gray!20}{14.7}
& \cellcolor{gray!20}{14.7}
& \cellcolor{gray!20}{14.9}
& \cellcolor{gray!20}{15.2}
& \cellcolor{gray!20}{16.5}
& \cellcolor{gray!20}{16.3}
& \cellcolor{gray!20}{-}
\\
\hline
\hline
\multirow{10}*{\rotatebox[origin=c]{90}{\small{110B}}}
& \specialcell{DeepSpeed\\w/o Restart} & 18.2, 52.9\%
& \specialcell{37.9\\(1.81$\times$)} 
& \specialcell{75.4\\(3.75$\times$)} 
& \specialcell{76.0\\(3.76$\times$)} 
& \specialcell{77.4\\(3.55$\times$)} 
& \specialcell{132.3\\(5.93$\times$)} 
& \specialcell{129.0\\(5.94$\times$)} 
& 3.84$\times$ 
\\
\cline{2-10}
& \specialcell{Megatron-LM\\w/o Restart} & 19.2, 50.8\% 
& \specialcell{41.4\\(2.06$\times$)} 
& \specialcell{85.1\\(4.23$\times$)} 
& \specialcell{85.4\\(4.22$\times$)} 
& \specialcell{86.1\\(3.95$\times$)} 
& \specialcell{106.1\\(4.75$\times$)} 
& \specialcell{98.1\\(4.52$\times$)} 
& 3.82$\times$ 
\\
\cline{2-10}
& \specialcell{DeepSpeed\\w/ Restart} & 18.2, 52.9\% 
& \specialcell{35.8\\(1.78$\times$)} 
& \specialcell{35.8\\(1.78$\times$)} 
& \specialcell{37.4\\(1.98$\times$)} 
& \specialcell{37.8\\(1.73$\times$)} 
& \specialcell{37.4\\(1.68$\times$)} 
& \specialcell{35.8\\(1.65$\times$)} 
& 1.76$\times$ 
\\
\cline{2-10}
& \specialcell{Megatron-LM\\w/ Restart} & 19.2, 50.8\%
& \specialcell{22.3\\(1.11$\times$)} 
& \specialcell{22.3\\(1.11$\times$)} 
& \specialcell{26.1\\(1.29$\times$)} 
& \specialcell{30.3\\(1.39$\times$)} 
& \specialcell{26.1\\(1.17$\times$)} 
& \specialcell{22.3\\(1.03$\times$)} 
& 1.17$\times$ 
\\
\cline{2-10}
& \system & 19.2, 50.8\% 
& \textbf{20.1} 
& \textbf{20.1} 
& \textbf{20.2} 
& \textbf{21.8} 
& \textbf{22.3}
& \textbf{21.7} 
& -
\\
\hhline{~|*9-}
& \cellcolor{gray!20}{Theoretic Opt.} & \cellcolor{gray!20}{-}
& \cellcolor{gray!20}{19.4}
& \cellcolor{gray!20}{19.4}
& \cellcolor{gray!20}{19.6}
& \cellcolor{gray!20}{20.0}
& \cellcolor{gray!20}{21.7}
& \cellcolor{gray!20}{21.5}
& \cellcolor{gray!20}{-}
\\
\bottomrule
\end{tabular}
\end{table*}

\subsection{End-to-end Evaluation}
\label{sec:expr_e2e}

We first conduct experiments under the six straggler situations. Figure~\ref{fig:e2e_dynamic} presents how the running time of each competitor changes when there is a shift in the straggler situation, and Table~\ref{tb:e2e} lists the running time in detail. These empirical results demonstrate that \system consistently achieves the best performance under all straggler situations for all models. 

\mysubsubsection{Comparison to SOTA Baselines without Restarts}
To begin with, we compare \system to the two state-of-the-art (SOTA) baselines without restarts. 
Both Megatron-LM and DeepSpeed suffer from the existence of stragglers, leading to significant performance degradation. For instance, when training the 110B model under the most severe straggler situations (S5), their running time bumps by 5.52$\times$ (from 19.2 to 106.1 seconds per step) and 7.27$\times$ (from 18.2 to 132.3 seconds per step), respectively. Even under the mildest straggler situations (S1), the performance reduction is still around 2$\times$, which is quite unsatisfactory since the other 63 non-straggling GPUs become extremely under-utilized. 
In contrast, the performance reduction of \system is merely 1.05-1.16$\times$ (from 19.2 to 20.1-22.3 seconds per step). It validates that \system is resilient to various straggler situations --- by adjusting the parallelization plan adaptively, it is capable of harnessing the stragglers and thereby maintaining a high performance. Eventually, \system outperforms Megatron-LM and DeepSpeed by up to 5.30$\times$ and 6.73$\times$, respectively. And the strength of \system is consistent over the three models, provisioning 2.63-5.28$\times$ of speed up on average compared to the two baselines. 

Besides, we find that DeepSpeed is more sensitive to stragglers --- when there are no stragglers, it runs a bit faster than \system and Megatron-LM on the 70B and 110B model, whilst gradually surpassed when the straggler situation becomes more severe. This is not surprising since the ZeRO-3 optimizer needs to gather model parameters for each layer, which is globally synchronous by nature. Instead, for hybrid parallel approaches, only GPUs within the same TP group need to synchronize per layer, so the idle periods are shorter compared with DeepSpeed. 
As a result, hybrid parallel is a better fit for straggler-resilient training of large-scale models.

\mysubsubsection{Comparison to SOTA Baselines with Restarts}
We then compare \system with the two SOTA baselines with restarts, which involves removing straggling nodes and/or reintegrating recovered nodes, and restarting the training task. 
The experiment results demonstrate that, by allowing restarting, the two baselines give better training efficiency since no stragglers will be involved. 
However, \system still consistently outperforms the baselines, achieving speedups of 1.15–1.63$\times$ on average over Megatron-LM w/ Restart and 1.76–2.71$\times$ over DeepSpeed w/ Restart, respectively. 
This is reasonable as \system supports adjusting the parallelization plan at the GPU granularity, whilst the baselines can only add/remove an entire node. 
Besides, as we will shown in \S\ref{sec:expr_case_and_ablation}, \system can also make use of the stragglers by assigning fewer workloads to them, whilst the baselines need to exclude all stragglers. 
Let alone the training efficiency, the restart-based approaches suffer from two critical drawbacks:
\begin{itemize}[leftmargin=*]
\item \textbf{Significant restart overhead:} 
To restart a training task, we need to save the latest model checkpoint, re-run the training framework (which involves initialization steps like resource allocation and communication group construction), and load the model checkpoint to continue the training. 
Consequently, the restart overhead is substantial, ranging from 199.1 to 442.0 seconds for Megatron-LM and 114.9 to 231.9 seconds for DeepSpeed. 
\item \textbf{Manual intervention for optimal parallel configuration:} 
Since a few nodes are excluded or included upon each restart, we must manually tune the parallel configuration to maximize efficiency given the memory constraints and the integer division requirements of model and data, which also incurs significant time.\footnote{For instance, training the 32B model with Megatron-LM under S4 requires activation checkpointing technique to avoid out-of-memory errors, whilst training the 70B model under S2 necessitates assigning fewer layers to the first pipeline stage so that 56 GPUs can use a PP degree of 7 to partition the 80 total layers. Finding such configurations necessitates expert experiences and laborious testing. In addition, the configurations vary across different straggler situations (we have provided the detailed configurations in Supplementary Appendix~A.3), 
which exacerbates the problem.}
Note that such time cost is not included in Figure~\ref{fig:e2e_dynamic}, meaning that the restart overhead should be much larger in practice. 
\end{itemize}
On the contrary, \system can automatically determine the stragglers, deduce a new parallelization plan for better efficiency, and migrate the model states in real time. 
More importantly, the asynchronous re-planning mechanism perfectly hides the planning time by overlapping, so the only overhead is brought by model migration, which is negligible (around 5 seconds or shorter). 
As a result, \system is superior in handling dynamic stragglers through malleableizing the parallelization plans on the fly, improving the stability and straggler-resilience in the training of large-scale models.

\begin{figure}[!t]
\centering
\includegraphics[width=0.6\linewidth]{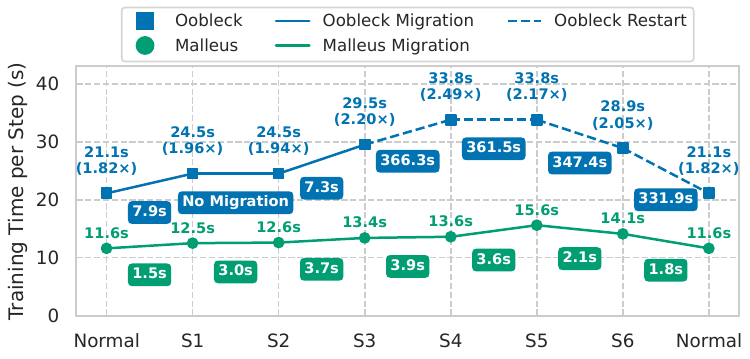}
\caption{\small{Comparison with Oobleck using the 32B model. 
The dashed lines indicate that Oobleck fails to dynamically migrate the model and needs to restart the training task. 
The training time per step is shown above the lines (values in parentheses indicate the improvement achieved by \system over Oobleck), whilst the migration/restart time cost is presented below the lines.}}
\label{fig:e2e_oobleck}
\end{figure}

\mysubsubsection{Comparison with the Fault-tolerant Baseline}
Subsequently, we compare \system with Oobleck, a fault-tolerant training framework. 
Specifically, when the straggler situation changes, Oobleck treats the real-time stragglers as faulty GPUs and performs migrations to exclude them. 
However, as shown in Figure~\ref{fig:e2e_oobleck}, Oobleck's performance is unsatisfactory. 
For one thing, its training time is 1.82-2.49$\times$ of that of \system across the straggler situations, exhibiting a slow training speed even when there are no stragglers. 
For another, it fails to migrate in many cases (S3->S4, S4->S5, S5->S6, and S6->Normal) and can only restart the entire training task to exclude the stragglers, incurring expensive overhead.

Such unsatisfactory performance is due to the fact that Oobleck trades off training efficiency for fault tolerance.
In particular, in order to be fault-tolerant, Oobleck requires specific modifications to model parallelization throughout the training process (even when there are no stragglers), which comes at the cost of degraded training efficiency.
In addition, when the straggler situation changes, Oobleck can only deploy a limited set of predefined pipeline templates, so it fails to handle various cases and must fall back to the expensive restarts. 
In contrast, \system is designed to optimize training efficiency for any straggler situations --- \system is comparable against Megatron-LM and DeepSpeed when there are no stragglers and is capable of quickly adapting to the straggler situation.

\begin{table}[!t]
\centering
\caption{\small{The ratios of time cost of training with stragglers to that of training without stragglers, where $R_{\text{actual}}$ is the actual ratio computed by the values in Table~\ref{tb:e2e}, $R_{\text{opt}}$ is the theoretically optimal ratio, and $R_{\text{est}}$ is the ratio estimated by our planning algorithm.}}
\label{tb:theoretic_ratios}
\small
\begin{tabular}{c||c||c||c|c||c|c}
\hline
\toprule
& & $R_{\text{actual}}$ & $R_{\text{opt}}$ & $1 - \frac{R_{\text{opt}}}{R_{\text{actual}}}$ & $R_{\text{est}}$ & $1 - \frac{R_{\text{est}}}{R_{\text{actual}}}$ 
\\
\hline
\multirow{5}*{\rotatebox[origin=c]{90}{\small{32B}}}
& S1 & 1.08 & 1.03 & 4.63\% & 1.06 & 1.85\%
\\
& S2 & 1.08 & 1.03 & 4.63\% & 1.06 & 1.85\%
\\
& S3 & 1.16 & 1.05 & 9.48\% & 1.13 & 2.58\%
\\
& S4 & 1.17 & 1.07 & 9.32\% & 1.18 & 0.00\%
\\
& S5 & 1.34 & 1.22 & 8.95\% & 1.37 & -2.24\%
\\
& S6 & 1.22 & 1.17 & 4.10\% & 1.20 & 1.64\%
\\
\hline
\multirow{5}*{\rotatebox[origin=c]{90}{\small{70B}}}
& S1 & 1.08 & 1.01 & 6.48\% & 1.03 & 4.63\%
\\
& S2 & 1.08 & 1.01 & 6.48\% & 1.03 & 4.63\%
\\
& S3 & 1.10 & 1.02 & 7.27\% & 1.04 & 5.45\%
\\
& S4 & 1.11 & 1.04 & 6.30\% & 1.04 & 6.30\%
\\
& S5 & 1.14 & 1.13 & 0.88\% & 1.15 & -0.88\%
\\
& S6 & 1.13 & 1.12 & 0.88\% & 1.13 & 0.00\%
\\
\hline
\multirow{5}*{\rotatebox[origin=c]{90}{\small{110B}}}
& S1 & 1.05 & 1.01 & 3.81\% & 1.03 & 1.90\%
\\
& S2 & 1.05 & 1.01 & 3.81\% & 1.03 & 1.90\%
\\
& S3 & 1.05 & 1.02 & 2.85\% & 1.05 & 0.00\%
\\
& S4 & 1.14 & 1.04 & 8.77\% & 1.08 & 3.70\%
\\
& S5 & 1.16 & 1.13 & 2.59\% & 1.17 & -0.01\%
\\
& S6 & 1.13 & 1.12 & 0.88\% & 1.13 & 0.00\%
\\
\bottomrule
\end{tabular}
\end{table}

\mysubsubsection{Comparison with Theoretic Optimum}
Suppose there are $N$ GPUs and $n$ of them are stragglers with rates $\{x_1, \cdots, x_n\}$. Theoretically speaking, if the hardware capability (e.g., TFLOPs) is inversely proportional to the straggling rates, then the optimal ratio of the time cost of running with stragglers to that of running without stragglers should be ${N}/{((N-n) + \sum_{i=1}^{n}{1 / x_i})}$. 
As shown in Table~\ref{tb:theoretic_ratios}, the performance loss of \system compared with the theoretic optimum is within 10\% under all situations and even within 5\% in more than half of the cases, verifying that the performance of \system is very close to the theoretic optimum. 
In addition, we also present the estimated performance obtained by our planning algorithm (i.e., via the solution to Eq.~\eqref{eq:lower_problem_one_case_origin}) in Table~\ref{tb:theoretic_ratios}, which shows that our cost model is accurate --- the estimated errors are not higher than 6.3\% in all experiments. Undoubtedly, this is vital to the deduction of parallelization plans.

\begin{table}[!t]
\centering
\caption{\small{Case studies of parallelization plans. 
Straggling GPUs are highlighted in red. Groups after splitting or containing stragglers are highlighted in blue background color.}}
\label{tb:case_study}
\small
\label{tb:case_studies}
\begin{tabular}{|c|c|c|c|c|}
\hline
\multicolumn{5}{|c|}{110B under S4 (\textcolor{red}{$x_{0}$} $ = 5.42$, \textcolor{red}{$x_{8}$} $ = 3.75$,
\textcolor{red}{$x_{16}$} $ = 2.57$)} \\
\hhline{|*5-}
\multirow{4}{*}{\specialcell{$m_1 = 33$\\(8 stages)}} &
\cellcolor{blue!10}{$x_7$} &
\cellcolor{blue!10}{$x_{15}$} &
\cellcolor{blue!10}{$x_{23}$} &
\cellcolor{blue!10}{$x_{1} \sim x_{4}$}
\\
& 
\cellcolor{blue!10}{$l_{1,1} = 2$} & 
\cellcolor{blue!10}{$l_{1,2} = 2$} & 
\cellcolor{blue!10}{$l_{1,3} = 2$} & 
\cellcolor{blue!10}{$l_{1,4} = 10$}
\\
\hhline{~|*4-}
& 
\cellcolor{blue!10}{$x_{9} \sim x_{12}$} &
\cellcolor{blue!10}{$x_{17} \sim x_{20}$} & 
$x_{40} \sim x_{47}$ &
$x_{56} \sim x_{63}$
\\
&
\cellcolor{blue!10}{$l_{1,5} = 11$} & 
\cellcolor{blue!10}{$l_{1,6} = 11$} &
$l_{1,7} = 21$ &
$l_{1,8} = 21$ 
\\
\hhline{|*5-}
\multirow{4}{*}{\specialcell{$m_2 = 31$\\(6 stages)}} &
\cellcolor{blue!10}{$x_{5}, x_{6}$} & 
\cellcolor{blue!10}{$x_{13}, x_{14}$} & 
\cellcolor{blue!10}{$x_{21}, x_{22}$} 
\\
& 
\cellcolor{blue!10}{$l_{2,1} = 4$} & 
\cellcolor{blue!10}{$l_{2,2} = 5$} & 
\cellcolor{blue!10}{$l_{2,3} = 5$} 
\\
\hhline{~|*3-}
& 
$x_{32} \sim x_{39}$ &
$x_{48} \sim x_{55}$ &
$x_{24} \sim x_{31}$
\\
& 
$l_{2,4} = 22$ & 
$l_{2,5} = 22$ & 
$l_{2,6} = 22$
\\
\hhline{|*4-}
\end{tabular}

\medskip

\begin{tabular}{|c|c|c|c|c|}
\hline
\multicolumn{5}{|c|}{32B under S5 (\textcolor{red}{$x_{0} \sim x_{7}$} $ = 2.62$, \textcolor{red}{$x_{8}$} $ = 3.8$)} \\
\hhline{|*5-}
\multirow{2}{*}{\specialcell{$m_1 = 7$\\(4 stages)}} &
\cellcolor{blue!10}{\textcolor{red}{$x_2$}, \textcolor{red}{$x_1$}} &
\cellcolor{blue!10}{\textcolor{red}{$x_4$}, \textcolor{red}{$x_3$}} &
\cellcolor{blue!10}{\textcolor{red}{$x_0$}, \textcolor{red}{$x_5$}} & 
\cellcolor{blue!10}{$x_{15}$} 
\\
& 
\cellcolor{blue!10}{$l_{1,1} = 15$} & 
\cellcolor{blue!10}{$l_{1,2} = 17$} & 
\cellcolor{blue!10}{$l_{1,3} = 17$} & 
\cellcolor{blue!10}{$l_{1,4} = 11$} 
\\
\hhline{|*5-}
\multirow{2}{*}{\specialcell{$m_2 = 17$\\(4 stages)}} &
\cellcolor{blue!10}{\textcolor{red}{$x_6$}, \textcolor{red}{$x_7$}} &
$x_{20} \sim x_{21}$ &
$x_{26} \sim x_{27}$ & 
$x_{10} \sim x_{9}$ 
\\
& 
\cellcolor{blue!10}{$l_{2,1} = 7$} & 
$l_{2,2} = 17$ & 
$l_{2,3} = 18$ & 
$l_{2,4} = 18$ 
\\
\hhline{|*5-}
\multirow{2}{*}{\specialcell{$m_3 = 20$\\(4 stages)}} &
$x_{16} \sim x_{17}$ &
$x_{22} \sim x_{23}$ &
$x_{28} \sim x_{29}$ & 
$x_{12} \sim x_{11}$ 
\\
& 
$l_{3,1} = 15$ & 
$l_{3,2} = 15$ & 
$l_{3,3} = 15$ & 
$l_{3,4} = 15$ 
\\
\hhline{|*5-}
\multirow{2}{*}{\specialcell{$m_4 = 20$\\(4 stages)}} &
$x_{18} \sim x_{19}$ &
$x_{24} \sim x_{25}$ &
$x_{30} \sim x_{31}$ & 
$x_{14} \sim x_{13}$ 
\\
& 
$l_{4,1} = 15$ & 
$l_{4,2} = 15$ & 
$l_{4,3} = 15$ & 
$l_{4,4} = 15$ 
\\
\hhline{|*5-}
\end{tabular}
\end{table}

\subsection{Case Studies and Ablation Studies}
\label{sec:expr_case_and_ablation}

\mysubsubsection{Case Studies}
Table~\ref{tb:case_study} presents two parallelization plans discovered by \system. When training the 110B model under the S4 situation, \system eliminates stragglers on all three nodes and forms new GPU groups with 1, 2, and 4 GPUs, respectively, summing up to 9 groups in total. \system distributes these 9 groups (in different sizes) to two pipelines to achieve load balancing. When training the 32B model under the S5 situation, \system eliminates the level-2 straggler on the second node and retains all level-1 stragglers on the first node. By assigning these stragglers fewer layers ($x_{6}, x_{7}$) and less data ($x_{0} \sim x_{5}$), the overall training time is minimized.

\mysubsubsection{Ablation Studies}
As introduced in \S\ref{sec:system_plan_design}, our parallelization features four types of non-uniform partitioning. 
We assess the effectiveness of them using the 110B model. To better demonstrate the agility of \system in dealing with complex straggler situations, we further introduce a severe straggler in level-8 (i.e., running 8 extra processes). Depicted in Figure~\ref{fig:ablation}, we experiment with three stragglers in level-1, level-3, and level-8, appearing on one node, two nodes, and three nodes, respectively. 

When the stragglers are on only one node, we find that the non-uniform partitioning of layers and data (solved by the lower-level problem) can greatly alleviate the impact of stragglers, with a gap of only about 10\% from the theoretic optimum. For instance, our planning algorithm strikes a good balance by strategically assigning only 2 layers to the slowest group (containing all three stragglers) whilst evenly distributing the remaining 78 layers to the other three non-straggling groups in the same pipeline. 

However, when the stragglers appear on multiple nodes, adjusting the partitioning of layers and data alone is no longer sufficient to produce satisfactory outcomes, with a significant gap of 20-40\% from the theoretic optimum. At this time, the introduction of non-uniform partitioning of devices and stages (solved by the upper-level problem) becomes particularly important. By isolating the stragglers and orchestrating pipelines in diverse forms, we can make better use of the non-straggling GPUs, reducing the gap from the theoretic optimum to at most 8.7\%.

\begin{figure*}[!t]
\centering
\includegraphics[width=\textwidth]{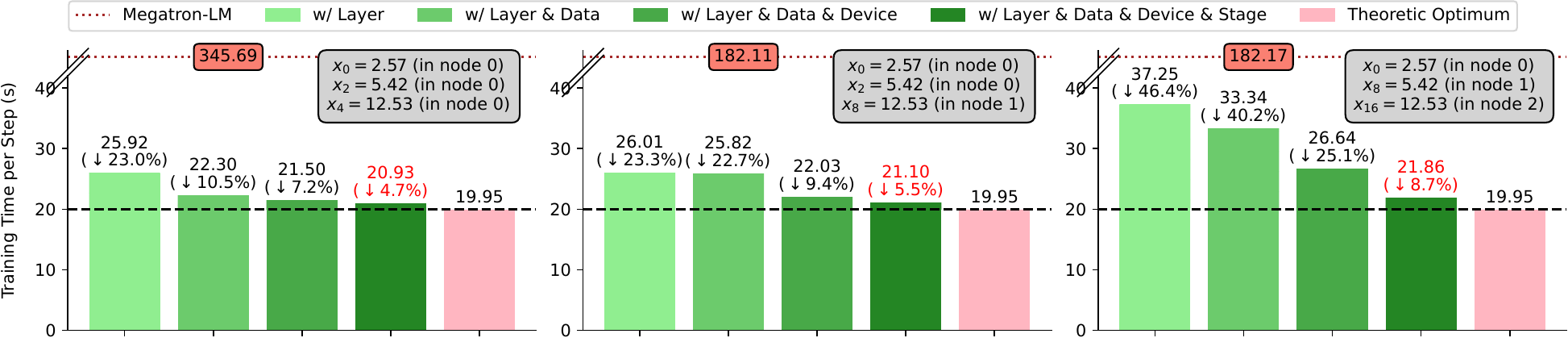}
\caption{\small{Effectiveness of the non-uniform partitioning on each dimension, evaluated on the 110B model. The ratios represents the gap from the theoretic optimum, computed by $1 - T_{\text{opt}}/T_{\text{actual}}$, where $T_{\text{opt}}$ and $T_{\text{actual}}$ denote the theoretic optimal time and actual running time.}}
\label{fig:ablation}
\end{figure*}

\mysubsubsection{More Experiment Results}
Due to the space constraint, we put more experiment results and analysis in Supplementary Appendix~A, 
including the effectiveness of our cost model, the scalability of our planning algorithm, and the detailed parallel configurations. Please refer to our supplementary material for more details.

\section{Related Works}
\label{sec:related}

\mysubsubsection{Heterogeneous Training}
Due to the rising concerns of the GPU shortage problem~\cite{skypilot,gpu_shortage_cross_region_cloud}, several approaches have been developed for distributed training over heterogeneous types of GPUs~\cite{acc_par,whale,hap,amp_hetero_model_parallel,hethub}. 
Our work differs from them for two reasons. Firstly, they focus on static heterogeneity --- the efficiency differences among GPUs do not change during training. In contrast, we consider a more complex scenario of dynamic stragglers, where the efficiency variation is dynamic and unforeseeable. Secondly, they assume the GPUs within the same node are of the same type and thereby provision identical hardware efficiency, whilst our work addresses the straggler problem at the nuanced, per-GPU granularity.

\mysubsubsection{Elastic Training}
Elastic training is an essential technique to handle failures or device defectiveness. Checkpointing is the most common choice for elastic training. For instance, TorchElastic~\cite{torch_elastic} and HorovodElastic~\cite{horovod_elastic} support restarting and loading the model checkpoints at failure. Gemini~\cite{gemini_failure_recovery} studies how to accelerate failure recovery. 
There are also approaches that focus on resilience when the failures are informed or detected. For instance, Varuna~\cite{varuna} and Bamboo~\cite{bamboo} consider training over spot instances (a kind of preemptible cloud instances), 
whilst Oobleck~\cite{oobleck} and Recycle~\cite{recycle} detect the failures in dedicated clusters. These approaches typically re-configure the training task before or upon failures. However, they are orthogonal to our work since straggler mitigation is not their focus. 
Elasticity is also important in job scheduling to improve cluster usages~\cite{pollux_goodput_scheduling,coddl_elastic_sharing,lyra_elastic_scheduling,heet_elastic_scheduling_in_hetero}, yet these works mainly focus on data parallel and do not consider the straggler-resilience of a single job. 

Besides, these approaches primarily tackle node-level failures and recoveries, rather than considering the fine-grained removal or addition of individual GPUs. In contrast, \system supports elasticity at the GPU granularity.
It is noteworthy that node failure problem is essentially a subset of GPU failure problem (i.e., the simultaneous failure of 8 GPUs means a node failure), and the GPU failure problem could be regarded as a special case of straggler problem as well (i.e., resulting in a straggling rate of infinity). 
The parallelization planning process used in \system can be effectively applied to these scenarios by simply setting the straggling rate of completely failed GPU(s) to infinity. And thanks to the GPU-granular elasticity offered by \system, we can handle the failure of individual GPUs, whilst existing approaches need to remove/replace the entire node. Hence, we believe that \system addresses a broader set of problems compared to previous elastic training approaches.

\mysubsubsection{Straggler Mitigation}
Relaxing the synchronization protocol in data parallel for straggler mitigation has been explored for long. This line of research breaks the barrier of gradient/model synchronization to reduce idle periods of non-straggling devices. 
Notable efforts include asynchronous parallel~\cite{dist_belief,project_adam} and stale synchronous parallel~\cite{ssp_ps,hetero_ps,dynamic_ssp,partial_reduce}. 
However, these works are developed for data parallel, and inevitably impact model convergence. 
On the contrary, our work focuses on straggler-resilient hybrid parallel training of large-scale models without affecting model convergence.

\nocite{ai_sys_for_llm_training_survey}

There is also a concurrent work, Falcon~\cite{falcon_straggler_hybrid_parallel}, that considers straggler mitigation in hybrid parallel training, confirming the importance of our target scenario. 
However, Falcon focuses on revealing the impact of stragglers and how to pinpoint them. It only considers several heuristic methods for straggler mitigation. 
In contrast, our work addresses the dynamic straggler problem at the nuanced, per-GPU granularity, formulates an optimization problem, and tackles it through a series of non-uniform partitioning, which are not considered in Falcon.

\section{Conclusion}
\label{sec:conc}

This work focuses on the straggler-resilience of hybrid parallel training for large-scale models. Specifically, we introduced \system, a hybrid parallel training framework that captures the dynamic stragglers at the nuanced, per-GPU granularity. We developed a parallelization planning algorithm that co-optimizes the non-uniform partitioning of GPU devices, pipeline stages, model layers, and training data, speeding up the training under various straggler situations. In response to the dynamicity in stragglers, we proposed a re-planning process that adjusts the parallelization plan and migrates model states on the fly. Empirical results show that \system can be on average 2.63-5.28$\times$ faster than existing parallel training frameworks and handle the straggler dynamicity efficiently.

\begin{acks}
This work is supported by National Science and Technology Major Project (2022ZD0116315), National Natural Science Foundation of China (U23B2048, 62402011), Beijing Municipal Science and Technology Project (Z231100010323002), China National Postdoctoral Program for Innovative Talents (BX20230012), China Postdoctoral Science Foundation (2024M750103), Beijing Natural Science Foundation (4244080), research grant No. SH-2024JK29, PKU-Tencent joint research Lab, and High-performance Computing Platform of Peking University. 
Fangcheng Fu and Bin Cui are the corresponding authors.
\end{acks}

\bibliographystyle{ACM-Reference-Format}
\bibliography{reference}

\clearpage
\appendix
\section{More Experiment Results}
\label{appendix:more_expr}

\subsection{Effectiveness of the Cost Model}
\label{appendix:cost_model_effectiveness}

As discussed in \S\ref{sec:expr_e2e}, the results in Table~\ref{tb:theoretic_ratios} show that our cost model is accurate. 
To further examine the effectiveness of our cost model, i.e., whether it is useful in helping us deduce the optimal plan, we conduct a representative testbed using the 32B model. In particular, we employ a fixed hybrid parallel strategy with DP, PP, and TP degrees of 4, 2, and 2, respectively, and we further decrease the sequence length from 4K, as used in previous experiments, to 1K to bypass all memory constraints. Additionally, we increase the global batch size from 64 to 512 and keep the micro-batch size as 1, allowing for a more refined granularity of data assignment and, consequently, a more rigorous validation for the precision of our cost model. In the experiment, we introduce a level-1 straggler, without the need of isolating heavy stragglers or non-uniform stages across the pipelines. Due to symmetry, we can actually enumerate all possibilities by traversing the layers and data allocated to the straggler GPU and measure the end-to-end performance for each partitioning.

For layer partitioning, as each pipeline only involves two stages, after enumerating the number of layers $l$ allocated to the stage containing the straggler, the remaining stage in the pipeline will be allocated $60 - l$ layers for sure, while for the other pipelines composed of non-straggling GPUs, the optimal layer partitioning remains evenly assigning 30 layers to each stage. As For data partitioning, since DP is 4, we only need to enumerate the number of micro-batches $m$ allocated to the pipeline containing the straggler, and the remaining three pipelines, being completely isomorphic, will evenly distribute the remaining 
$512 - m$ micro-batches (ideally, without considering the integer constraint of micro-batches, each normal pipeline should be assigned ${(512 - m)}/{3}$ micro-batches). 

Following the aforementioned approach, we first enumerate all layer partitioning possibilities and select the optimal from them. Based on this, we again enumerate all data partitioning possibilities. At each step of enumeration, we test the actual profiling time on norm (i.e., non-straggling) GPUs, profiling time on the straggling GPU, and the overall end-to-end time, and compare them with the estimated time given by our cost model. As shown in Figure~\ref{fig:hetero_layer_batch}, it can be observed that our cost model well approximates the actual running time, and the final layer and data partitioning solutions also coincide with the optimal solution found through actual end-to-end enumeration. This demonstrates that our cost model can effectively identify the optimal load balancing point, achieving the optimal solution in practical.

\begin{figure*}[!h]
\centering
\includegraphics[width=0.49\textwidth]{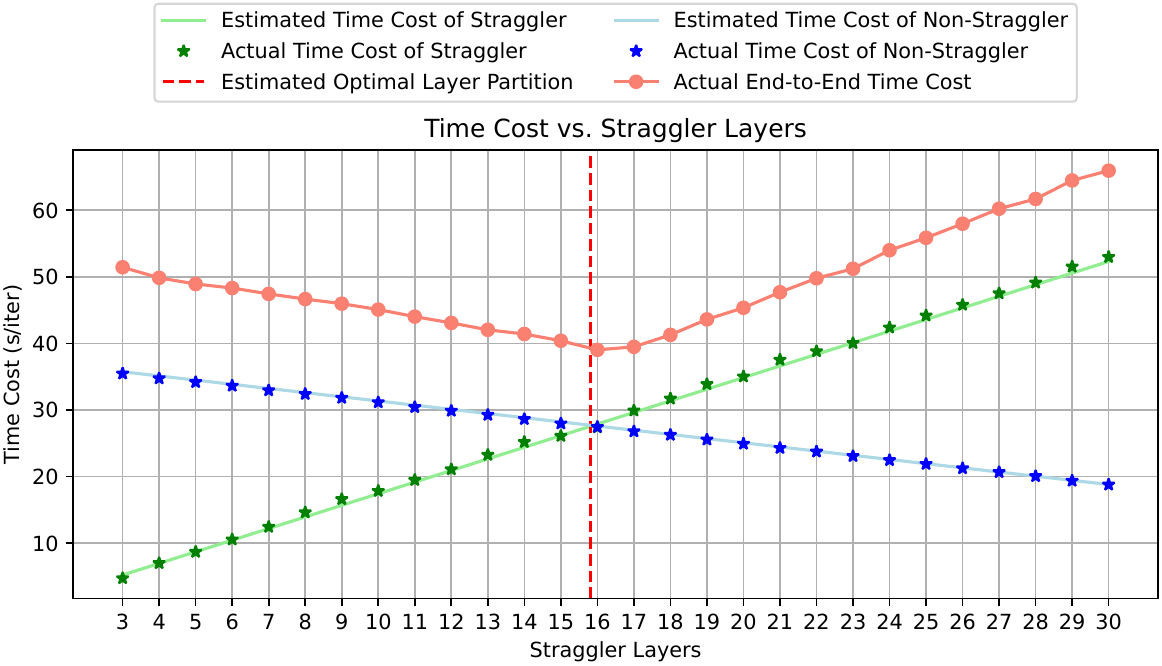}
\includegraphics[width=0.49\textwidth]{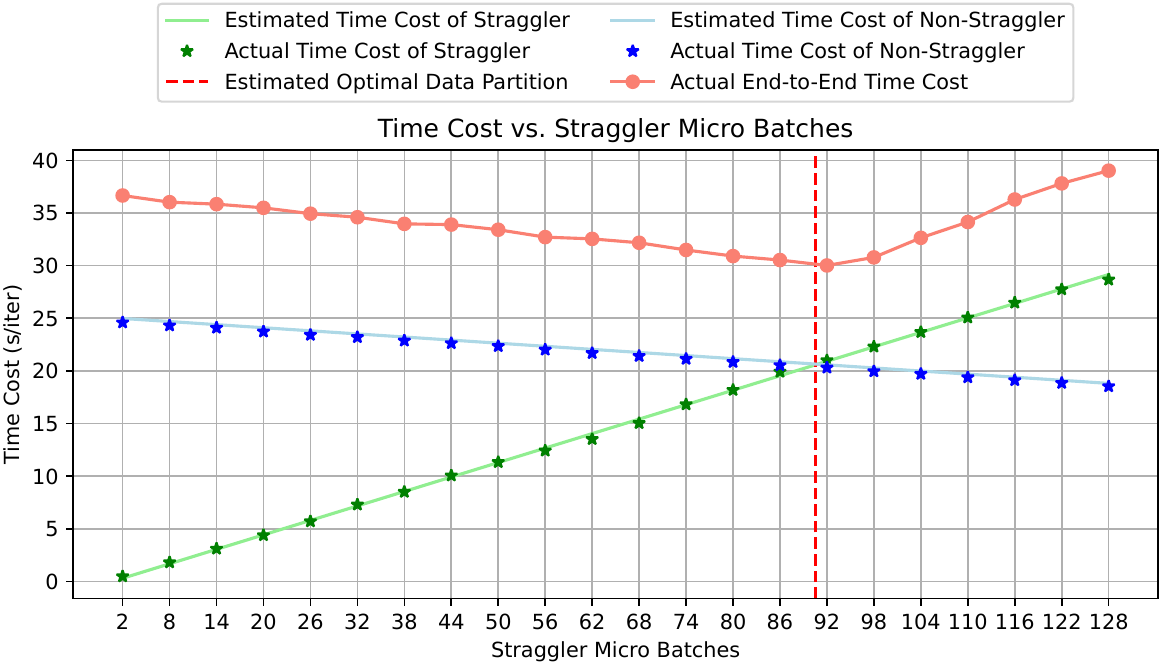}
\caption{\small{Enumeration on layer and data partitioning. The results given by our cost model coincide precisely with the optimal solution towards the final load balancing.}}
\label{fig:hetero_layer_batch}
\end{figure*}

\subsection{Scalability of Parallelization Planning}
\label{appendix:scalability}
According to the experiment results in \S\ref{sec:expr_e2e}, the parallelization planning process can be fully overlapped by the training of 1 iteration. 
In this experiment, we wish to examine the scalability of our parallelization planning algorithm to more GPUs. 
Unfortunately, due to the high expense of GPUs, we cannot evaluate the training efficiency of \system over more GPUs. 
Thus, we only focus on the time cost of planning here. 
Specifically, we assume a total of 1024 GPUs (128 8-GPU nodes) are used to train the 110B model. Meanwhile, we assume the global batch size is linearly scaled to 1024 (originally 64 in \S\ref{sec:expr}), constituting each batch with 4 million tokens, which is a reasonable configuration in LLM training. 
We further assume there are 32 stragglers (approximately 3\% of the cluster). Below, we discuss the breakdown of the time taken for each step (shown in Table~\ref{tb:scalability}) of our parallelization planning algorithm in this scenario:
\begin{itemize}[leftmargin=*]
    \item \textbf{Enumeration overhead of GPU grouping (\S\ref{sec:planner_upper_level_gpu_grouping})}: The overhead of this part is negligible (0.01 second). Although this part involves the enumeration of group splitting, each enumeration only requires using Theorem~\ref{thm:theoretic_optimality} to calculate a ratio, and the number of enumerations is not large. Specifically, we began with running four processes to solve the optimal strategy for TP limits of 1, 2, 4, and 8, respectively. With the overall TP degree fixed, we then considered how to split the TP groups for the stragglers. For example, when TP degree is 1, there is no group splitting, and when TP degree is 8, a node with straggler(s) needs to enumerate up to 6 times of group splitting (as discussed in \S\ref{sec:planner_upper_level_gpu_grouping} and detailed in Appendix~\ref{appendix:deduce_grouping-possibilities}). With 32 straggler GPUs across a maximum of 32 nodes, we only need to enumerate 32 times. Putting them together, there are only a few hundreds of enumerations.

    \item \textbf{Pipeline division overhead (first half of \S\ref{sec:planner_upper_level_pipeline_orchestration})}: This part has the largest overhead (51.23 seconds) due to the complexity of the MINLP problem and its correlation with the DP degree as well as the number of straggling TP groups. However, it can still finish within reasonable time even for more than one thousand GPUs.

    \item \textbf{Optimal group ordering overhead (second half of \S\ref{sec:planner_upper_level_pipeline_orchestration})}: After solving the MINLP problem, there are 16 straggling groups appeared in 5 pipelines (the actual result solved by MINLP was $2+3+3+4+4=16$). Then, our parallelization planning algorithm needs to enumerate the permutations within these 5 pipelines (up to 24 enumerations per pipeline as mentioned in \S\ref{sec:planner_upper_level_pipeline_orchestration} of our manuscript) and solve Eq.~\ref{eq:lower_problem_one_case_layer} for each permutation in order to achieve the best one. Since these 5 pipeline permutations are orthogonal and do not affect each other, we computed up to $24 \times 5 = 120$ ILPs in total, with multi-threading optimization, resulting in an overhead of 0.59s (a single ILP took about 0.40s).

    \item \textbf{Work (layer and data) assignment overhead (\S\ref{sec:planner_lower_level})}: At this point, we only need to solve a few ILP problems in Eq.~\ref{eq:lower_problem_one_case_layer} and Eq.~\ref{eq:lower_problem_one_case_data}, with the ones in Eq.~\ref{eq:lower_problem_one_case_layer} being orthogonal and solved using multi-threading optimization. This part took 0.75s.
\end{itemize}

\begin{table}[!h]
    \centering
    \small
    \caption{\small{Time taken for each part of our algorithm in the 64-GPU S3 scenario and the simulated 1024-GPU scenario.}}
    \begin{tabular}{|c|c|c|c|c|}
        \hline
        & GPU Grouping & Pipeline Division & Group Ordering & Work Assignment \\
        \hline
        1024 GPUs & 0.01s & 51.23s & 0.59s & 0.75s \\
        \hline
        64 GPUs & 0.01s & 22.61s & 0.07s & 0.11s \\
        \hline
    \end{tabular}
    \label{tb:scalability}
\end{table}

Table \ref{tb:scalability} shows that although the overhead increases, \system can still complete planning within a minute. Theoretically speaking, when training with a global batch size of 1024 on 1024 GPUs, each iteration's time is similar to that of training with a global batch size of 64 on 64 GPUs. Therefore, we can complete planning within 1-2 iterations, meaning that we can obtain a new optimal parallelization plan and perform migration within a maximum of 2 iterations. Consequently, our parallelizatio planning algorithm has a sound scalability.

\subsection{Optimal Training Configurations for Megatron-LM and DeepSpeed w/ Restart}
\label{appendix:configuration}

Table~\ref{tb:mega_configuration} and Table~\ref{tb:ds_configuration} illustrate the manually tuned configurations that achieve the best training efficiency under the corresponding straggling situations. In Table~\ref{tb:mega_configuration} and Table~\ref{tb:ds_configuration}, ``DP'' refers to the data parallel degree, ``TP'' refers to the tensor parallel degree, ``PP'' refers to the pipeline parallel degree, ``SP'' refers to Ulysses sequence parallel degree, ``mbs'' refers to the size of micro batches used in training, and ``AC'' indicates the use of activation checkpointing.
Additionally, the blue-highlighted PP indicates cases where the layers could not be evenly divided across the pipeline stages, requiring manual adjustment of the number of layers assigned to the first stage. Similarly, the blue-highlighted DP represents cases where the fixed global batch size (64 in our experiments) could not be evenly divided across the data parallel dimension, necessitating a slight increase in the global batch size.
These complex configurations demonstrate that it requires substantial expert experience and manual efforts if we wish to get rid of stragglers through restarting the training tasks, which is unacceptable for real-world training tasks. 
As a result, the parallelization planning algorithm in \system is of paramount significance for enhancing the stability and robustness of large-scale model training.

\begin{table}[ht]
    \centering
    \caption{\small{Manually tuned optimal configurations for Megatron-LM w/ Restart across different scenarios.}}
    \small
    \begin{tabular}{|c|c|c|c|c|c|}
        \hline
        & Noraml & S1, S2, S6 (Remove 1 Node) & S3, S5 (Remove 2 Nodes) & S4 (Remove 3 Nodes)\\
        \hline
        32B & DP2TP4PP4, mbs1 & DP2TP4PP3, mbs1 & TP4PP4, mbs1 & TP8+AC, mbs1 \\
        \hline
        70B & DP2TP8PP4, mbs1 & DP2TP8\blue{PP7}, mbs1 & DP2TP8\blue{PP6}, mbs1 & DP2TP8\blue{PP5}, mbs1 \\
        \hline
        110B & DP2TP8PP4, mbs1 & DP2TP8\blue{PP7}, mbs1 & DP2TP8\blue{PP6}, mbs1 & DP2TP8\blue{PP5}, mbs1 \\
        \hline
    \end{tabular}
    \label{tb:mega_configuration}
\end{table}

\begin{table}[ht]
    \centering
    \caption{\small{Manually tuned optimal configurations for DeepSpeed w/ Restart across different scenarios.}}
    \small
    \begin{tabular}{|c|c|c|c|c|c|}
        \hline
        & Noraml & S1, S2, S6 (Remove 1 Node) & S3, S5 (Remove 2 Nodes) & S4 (Remove 3 Nodes)\\
        \hline
        32B & DP16SP2+AC, mbs4 & \blue{DP12}SP2+AC, mbs6 & DP8SP2+AC, mbs8 & DP4SP2+AC, mbs4 \\
        \hline
        70B & DP32SP2+AC, mbs2 & \blue{DP14}SP4+AC, mbs2 & \blue{DP12}SP4+AC, mbs2 & \blue{DP20}SP2+AC, mbs2 \\
        \hline
        110B & DP32SP2+AC, mbs2 & \blue{DP14}SP4+AC, mbs2 & \blue{DP12}SP4+AC, mbs2 & \blue{DP20}SP2+AC, mbs2 \\
        \hline
    \end{tabular}
    \label{tb:ds_configuration}
\end{table}

\clearpage

\section{Proofs}
\label{appendix:thm}

In this section, we provide proofs for the theorems and some omitted deduction in our paper.

\setcounter{theorem}{0} 

\subsection{Proof for Theorem~\ref{thm:tp_in_one_node}}
\label{appendix:thm-1}

In Section~\ref{sec:planner_upper_level_gpu_grouping}, we propose to partition GPUs with similar performance to the same group according to Theorem~\ref{thm:tp_in_one_node}. Below we provide the proof for it.

\begin{theorem}
\label{thm:thm-1}
Suppose there are $n$ GPUs in a node with straggling rates $\{x_1, \cdots, x_n\}$, and we need to partition them into $n/k$ groups (each with $k$ GPUs). 
Denote $\{i_1, \cdots, i_n\}$ as the ordering satisfying $x_{i_1} \geq \cdots \geq x_{i_n}$. 
Then, the best grouping result that minimizes the running time is 
$\{ 
    \{ x_{i_1}, \cdots, x_{i_{k}} \},
    \{ x_{i_{k+1}}, \cdots, x_{i_{2k}} \}, 
    \cdots,
    \{ x_{i_{n - k + 1}}, \cdots, x_{i_n} \}
\}$.
\end{theorem}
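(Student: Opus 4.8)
The plan is to show that the proposed grouping is optimal because it simultaneously minimizes \emph{every} order statistic of the resulting group straggling rates, and because the optimal training time is monotone non-decreasing in these rates regardless of how the pipelines, layers, and data are subsequently assigned.

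\textbf{Step 1: reduce to the multiset of group-maxima.} Since all $n/k$ groups have the same size $k$, the efficiency-degradation coefficient $\rho_k$ is a common constant, so the group straggling rate of a group $G$ is $\rho_k \cdot \max_{i\in G} x_i$. Moreover, the cost model of \S\ref{sec:planner_lower_level} and all downstream decisions (pipeline orchestration, layer assignment, data assignment) depend on a grouping only through the multiset $\{y_j\}$: the identities of the GPUs inside a group are irrelevant once $y_j$ is fixed, and equal-size groups are interchangeable in every downstream (sub)problem. Hence it suffices to compare groupings via the sorted vector of group-maxima $(\,\max G_{(1)} \ge \max G_{(2)} \ge \cdots \ge \max G_{(n/k)}\,)$.

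\textbf{Step 2: the combinatorial core.} I claim that for \emph{any} partition of the $n$ GPUs into groups of size $k$, and for every $m \in \{1,\dots,n/k\}$, the $m$-th largest group-maximum is at least $x_{i_{(m-1)k+1}}$, with equality attained by the proposed grouping for all $m$. The lower bound is a pigeonhole argument: among the top $(m-1)k+1$ rates $x_{i_1},\dots,x_{i_{(m-1)k+1}}$, at least $m$ distinct groups must each contain one of them (any $m-1$ groups supply only $(m-1)k < (m-1)k+1$ slots), and each such group has maximum $\ge x_{i_{(m-1)k+1}}$; therefore at least $m$ group-maxima are $\ge x_{i_{(m-1)k+1}}$. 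For the proposed grouping the $m$-th group is exactly $\{x_{i_{(m-1)k+1}},\dots,x_{i_{mk}}\}$, whose maximum is $x_{i_{(m-1)k+1}}$, so equality holds. It follows that the sorted group-maxima vector of the proposed grouping is component-wise $\le$ that of every other grouping, and, after multiplying by $\rho_k$, so is the sorted vector of group straggling rates.

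\textbf{Step 3: monotonicity / plan transfer, then the obstacle.} Fix all decisions concerning the other nodes arbitrarily. Given any feasible parallelization plan built on an arbitrary grouping $A$ of this node, transfer it to the proposed grouping $B$ by matching the $m$-th fastest group of $B$ to the $m$-th fastest group of $A$ and copying its pipeline position and its assigned $l_{i,j}$ and $m_i$. Feasibility is preserved since all groups have the same size $k$, so the memory coefficients $\mu_{i,j}(b),\nu_{i,j}(b)$ and capacities $C_{i,j}$ are unchanged; and since each group of $B$ is no slower than its match in $A$ (Step 2), every per-stage time $t_{i,j}=y_{i,j}l_{i,j}\tau(b)$, hence every pipeline time and the overall training time, does not increase. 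Therefore the best achievable training time over grouping $B$ is at most that over $A$, giving the theorem. The main obstacle is making this third step rigorous in the stated generality — ``no assumptions on how the pipelines are constructed nor how layers and data are assigned'' — i.e., cleanly establishing that the downstream optimization of Eq.~\eqref{eq:lower_problem_one_case_origin} is invariant under relabeling of equal-size groups and monotone in the group straggling rates even under its integer and memory constraints; once that transfer principle is in place, Steps 1 and 2 are routine.
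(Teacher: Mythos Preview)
Your proof is correct and arrives at the same conclusion, but the combinatorial core differs from the paper's. The paper also isolates (i) that a group's rate is its maximum member rate and (ii) a monotonicity statement (``if one group straggling rate worsens, the minimum cost $T$ cannot decrease''), but from there it runs an \emph{exchange argument}: it flattens any grouping into an array $(x_{1,1},\dots,x_{1,k},x_{2,1},\dots)$ with groups internally sorted and ordered by their maxima, then repeatedly swaps inversions in this array; a short case split on whether the swapped element is a group leader shows each swap leaves one group-maximum unchanged and makes the other no larger, so by monotonicity $T$ never increases, and the process terminates at the sorted block-contiguous grouping. Your Step~2 replaces this with a direct pigeonhole bound on the $m$-th order statistic, which is cleaner and establishes the stronger structural fact of \emph{componentwise dominance} of the sorted group-rate vector, avoiding the case analysis. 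Your Step~3 is also more explicit than the paper's treatment: the paper simply \emph{posits} monotonicity as its Statement~2 (``no other assumptions are made about the properties of the function $T$''), whereas your plan-transfer construction actually proves it for the cost model of \S\ref{sec:planner_lower_level} and checks that memory feasibility is preserved under the relabeling---so the ``obstacle'' you flag is in fact no harder than what the paper already takes for granted.
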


\begin{proof}
\label{proof:proof-1}
The proof of this theorem only leverages the following two simple statements.
\begin{statement}{1}
\label{stat:1}
\textbf{The slowest straggler in the tensor parallel group dominates the time of the whole group.} Assume each GPU group consists of $k$ GPUs, define $T(y_{1}, \ldots, y_{\frac{n}{k}})$ as the minimum cost when the straggling rate of group $i$ is $y_{i}$. Denote $X_{i}$ as the $i$-th GPU with straggling rate $x_{i}$ and $Y_{i}$ as the $i$-th GPU group. Then we have $y_{i} = \max_{X \in Y_{i}}\left\{x\right\}$. Note here we drop the $\rho$ discussed in Section~\ref{sec:planner_lower_level} because all the groups have the same amount of GPUs (equals to $k$).
\end{statement}
\begin{statement}{2}
\label{stat:2}
\textbf{If a single straggler worsens, it is impossible to find a lower minimum cost.} For any given $i$, if $y_{i} \ge y'_{i}$, we have $T(y_{1}, \ldots, y_{\frac{n}{k}}) \ge T(y_{1}, \ldots, y_{i - 1}, y'_{i}, y_{i + 1}, \ldots, y_{\frac{n}{k}})$. Note that no other assumptions are made about the properties of the function $T$.
\end{statement}

For any GPU groups combination $(Y_{1}, \ldots, Y_{\frac{n}{k}})$, where $Y_{i} = (X_{i,1}, \ldots, X_{i,k})$, we guarantee that $\max_{X \in Y_{1}}{x} \ge \ldots \ge \max_{X \in Y_{\frac{n}{k}}}{x}$ and $x_{i,1} \ge \ldots \ge x_{i,k}$ to ensure the representation is free of rotation. Then, we could denote the grouping of GPUs as a single array
\begin{gather*}
G = 
\begin{cases}
    Y_{1} = \left(X_{1,1}, \ldots, X_{1,k}\right) \\
    \ldots \\
    Y_{\frac{n}{k}} = \left(X_{\frac{n}{k},1}, \ldots, X_{\frac{n}{k},k}\right)
\end{cases}
\implies \left(x_{1,1}, \ldots, x_{1,k}, x_{2,1}, \ldots, x_{\frac{n}{k},k}\right)
\end{gather*}
This sequence has a finite number of inverse pairs. We sequentially exchange these inverse pairs until the sequence is in descending order. Next, we will prove that each exchange of the inverse pair does not increase the eventual cost of $T$.

Assume that we are exchanging $X_{i,p}$ with $X_{j,q}$, where $x_{i,p} < x_{j,q}$. Let $Y_{i}$ and $Y_{j}$ be the groups before exchanging, $Y'_{i}$ and $Y'_{j}$ be the new groups after exchanging, we have
\begin{gather*}
x_{i,p} < x_{j,q} \\
\implies i < j \text{ and } p \ne 1 \text{, meaning } X_{i,p} \in Y'_{i} \\ 
\implies y_{i} = \max\limits_{X \in Y_{i}}\left\{x\right\} = x_{i,1} \ge y_{j} = \max\limits_{X \in Y_{j}}\left\{x\right\} = x_{j,1}
\end{gather*}
Two situations may occur during the exchange process. If $q = 1$, we have $x_{i,p} < x_{j,1}$. According to Statement~\ref{stat:2}, the exchange will benefit $T$ because
\begin{gather*}
y'_{i} = \max\limits_{X \in Y'_{i}}\left\{x\right\} = \max\left\{x_{i,1}, x_{j,1}\right\} = x_{i,1} = y_{i} \\
y'_{j} = \max\limits_{X \in Y'_{j}}\left\{x\right\} = \max\left\{x_{j,2}, \ldots, x_{j,k}, x_{i,p}\right\} \le \max\left\{x_{j,2}, \ldots, x_{j,k}, x_{j,1}\right\} = y_{j} \\
\implies T(y_{1}, \ldots, y_{\frac{n}{k}}) \ge T(y_{1}, \ldots, y_{j - 1}, y'_{j}, y_{j + 1}, \ldots, y_{\frac{n}{k}})
\end{gather*}
Otherwise, if $q \ne 1$, meaning that $X_{j,q} \in Y'_{j}$, the exchange will have no effect
\begin{gather*}
y'_{i} = \max\limits_{X \in Y'_{i}}\left\{x\right\} = \max\left\{x_{i,1}, x_{j,q}\right\} = x_{i,1} = y_{i} \\
x_{i,p} < x_{j,q} \le x_{j,1} \implies y'_{j} = \max\limits_{X \in Y'_{j}}\left\{x\right\} = \max\left\{x_{j,1}, x_{i,p}\right\} = x_{j,1} = y_{j} \\
\implies T(y_{1}, \ldots, y_{\frac{n}{k}}) \text{ remains the same}
\end{gather*}
Therefore, for any GPU groups combination $(Y_{1}, \ldots, Y_{\frac{n}{k}})$, after a finite number of inverse pair exchanges, it can be transformed into the grouping result mentioned in Theorem~\ref{thm:tp_in_one_node}, which has no inverse pairs and therefore is the best grouping result among all.
\end{proof}

\subsection{Proof for Theorem~\ref{thm:theoretic_optimality}}
\label{appendix:thm-2}

In Section~\ref{sec:planner_upper_level_gpu_grouping}, we compare two possible grouping results according to Theorem~\ref{thm:theoretic_optimality}. Below we provide the proof for it.

\begin{theorem}
\label{thm:thm-2}
Suppose there are two different grouping results that consist of $M^{\prime}$ and $M^{\prime\prime}$ groups with straggling rates of $\{ y^{\prime}_1, \cdots, y^{\prime}_{M^{\prime}} \}$ and $\{ y^{\prime\prime}_1, \cdots, y^{\prime\prime}_{M^{\prime\prime}} \}$, respectively. 
If we ignore the memory constraints in Eq.~\eqref{eq:lower_problem_one_case_origin}, and further assume the layer and training data assignments are not restricted to integers (i.e., $l_{i,j}, m_i$ in Eq.~\eqref{eq:lower_problem_one_case_origin} can be any positive real numbers), then the minimum training time of the two grouping results satisfy 
${T^{\prime}}/{T^{\prime\prime}} = {(\sum_{i=1}^{M^{\prime\prime}} 1 / y^{\prime\prime}_i)}/{(\sum_{i=1}^{M^{\prime}} 1 / y^{\prime}_i)}$.
\end{theorem}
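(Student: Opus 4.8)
The plan is to derive a closed-form expression for the optimal (continuous-relaxed) training time achievable by a \emph{single} grouping result, show that this expression depends on the grouping only through $\sum_i 1/y_i$, and then take the ratio. Concretely, fix one grouping result with $M$ groups of straggling rates $y_1, \dots, y_M$, fix the micro-batch size $b$, and fix an arbitrary way of organizing these $M$ groups into $\dpd$ pipelines; this is just a partition $S_1, \dots, S_{\dpd}$ of $\{1, \dots, M\}$, where $S_i$ collects the groups serving as the stages of pipeline $i$. Since memory constraints are dropped and $l_{i,j}, m_i \geq 0$ may be real, the objective of Eq.~\eqref{eq:lower_problem_one_case_origin} decouples exactly as already observed in \S\ref{sec:planner_lower_level}: for each pipeline $i$ one first solves the inner layer problem $\min\{\max_{j \in S_i} y_j l_{i,j} : \sum_{j \in S_i} l_{i,j} = L,\ l_{i,j} \geq 0\}$, and then the outer micro-batch problem.

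First I would solve the inner problem by a one-line balancing argument: setting $c := \max_{j \in S_i} y_j l_{i,j}$, feasibility forces $l_{i,j} \leq c/y_j$, so $L = \sum_{j \in S_i} l_{i,j} \leq c \sum_{j \in S_i} 1/y_j$, giving $c \geq o_i := L / \sum_{j \in S_i}(1/y_j)$, with equality attained at $l_{i,j} = o_i / y_j$. The outer problem $\min\{\tau(b)\max_i m_i o_i : \sum_i m_i = B/b,\ m_i \geq 0\}$ is solved by the identical argument, yielding optimal value $\tau(b)(B/b)/\sum_i (1/o_i)$. Substituting $1/o_i = \tfrac1L \sum_{j \in S_i}(1/y_j)$ and using that the $S_i$ partition $\{1,\dots,M\}$, the middle sums telescope to $\sum_i 1/o_i = \tfrac1L \sum_{j=1}^M (1/y_j)$, so the minimal training time is
\[
T \;=\; \frac{\tau(b)\,B\,L}{b \sum_{j=1}^{M} 1/y_j},
\]
which is manifestly independent of the chosen pipeline partition and of $\dpd$, hence equals the minimum over \emph{all} pipeline orchestrations and relaxed layer/data assignments for that grouping result.

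Finally I would apply this formula to each of the two grouping results with the same $b$ (the factor $\tau(b)BL/b$ being common to both), obtaining $T' = \tau(b)BL/(b\sum_{i=1}^{M'} 1/y'_i)$ and $T'' = \tau(b)BL/(b\sum_{i=1}^{M''} 1/y''_i)$, whence $T'/T'' = (\sum_{i=1}^{M''} 1/y''_i)/(\sum_{i=1}^{M'} 1/y'_i)$ as claimed; note the ratio is in fact independent of $b$, so it also survives a minimization over $b$. The only point requiring care — and the main (mild) obstacle — is making the balancing argument watertight where it is used twice: in particular, checking that shrinking any $S_i$ (assigning zero layers to a group, which Eq.~\eqref{eq:lower_problem_one_case_origin} permits) only increases $o_i$, and that leaving a group out of every pipeline strictly decreases $\sum_i 1/o_i$ and thus enlarges $T$, so the relaxed optimum genuinely uses every group and the telescoping identity $\sum_i 1/o_i = \tfrac1L\sum_{j=1}^M 1/y_j$ holds exactly. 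Everything else is routine.
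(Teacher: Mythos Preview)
Your proposal is correct and follows essentially the same approach as the paper: both solve the inner layer-balancing min--max to get $o_i = L/\sum_{j\in S_i}(1/y_j)$, then the outer micro-batch min--max to get $T = \tau(b)BL/(b\sum_{j=1}^M 1/y_j)$, observe that this is independent of the pipeline partition, and take the ratio. Your treatment is in fact slightly more careful than the paper's---you give the explicit inequality argument for the balancing step and address the edge case of groups receiving zero layers---but the core route is identical.
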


\begin{proof}
\label{proof:proof-2}
Denote $M$ as the number of total groups, $(y_{1}, \ldots, y_{M})$ as the straggling rate of all groups, $L$ as the number of total layers and $B$ as the number of training samples in one step. And let the running
time of the $j$-th stage in the $i$-th pipeline for one micro-batch
be $t_{i,j} = y_{i,j} \times l_{i,j} \times \tau(b)$, where $b$ is the size of micro-batch. Then, for 1F1B pipeline, considering the heterogeneous layers assignment, the minimum cost of a training iteration for the $i$-th pipeline could be formulated as 
\begin{gather*}
T_{i} = \min\limits_{l_{i,1}, \ldots, l_{i,\ppdi}}\left\{(m_{i} - 1) \times \max\limits_{1 \le j \le \ppdi}\left\{t_{i,j}\right\} + \sum\limits_{1 \le j \le \ppdi}{t_{i,j}}\right\} \\
= \min\limits_{l_{i,1}, \ldots, l_{i,\ppdi}}\left\{(m_{i} - 1) \times \max\limits_{1 \le j \le \ppdi}\left\{y_{i,j} \times l_{i,j}\right\} + \sum\limits_{1 \le j \le \ppdi}{y_{i,j} \times l_{i,j}}\right\} \times \tau(b) \\
\implies T_{i} \approx \min\limits_{l_{i,1}, \ldots, l_{i,\ppdi}}\max\limits_{1 \le j \le \ppdi}\left\{y_{i,j} \times l_{i,j}\right\} \times m_{i} \times \tau(b)  \text{, when $m_{i} >> \ppdi$}
\end{gather*}
Since we have ignored the memory constraints in Eq.~\eqref{eq:lower_problem_one_case_origin} and further assumed the layers assignment is not restricted to integers, given $l_{i,1} + \ldots + l_{i,\ppdi} = L$, the $T_{i}$ could be reached when
\begin{gather*}
\left(l_{i,1}, \ldots, l_{i,\ppdi}\right) = \left(\frac{\frac{L}{y_{i,1}}}{\sum\limits_{1 \le j \le \ppdi}\frac{1}{y_{i,j}}}, \ldots, \frac{\frac{L}{y_{i,\ppdi}}}{\sum\limits_{1 \le j \le \ppdi}\frac{1}{y_{i,j}}}\right) \\
\implies T_{i} = \frac{1}{\sum\limits_{1 \le j \le \ppdi}\frac{1}{y_{i,j}}} \times L \times m_{i} \times \tau(b) 
\end{gather*}
Then, the minimum cost of the whole system given current pipelines training cost $(T_{1}, \ldots, T_{\dpd})$ should be
\begin{gather*}
T = \min\limits_{m_{1} \ldots, m_{\dpd}}\max\limits_{1 \le i \le \dpd}\left\{T_{i}\right\} 
= \min\limits_{m_{1}, \ldots, m_{\dpd}}\max\limits_{1 \le i \le \dpd}\left\{\frac{m_{i}}{\sum\limits_{1 \le j \le \ppdi}\frac{1}{y_{i,j}}} \right\} \times L \times \tau(b)
\end{gather*}
Still, under the assumption that the micro-batches assignment is not restricted to integers and $(m_{1} + \ldots + m_{\dpd}) \times b = B$, we can directly solve the problem
\begin{gather*}
\left(m_{1}, \ldots, m_{\dpd}\right) = \left(\frac{\sum\limits_{1 \le j \le \ppdi}\frac{\frac{B}{b}}{y_{1,j}}}{\sum\limits_{1 \le 1 \le \dpd}\sum\limits_{1 \le j \le \ppdi}\frac{1}{y_{i,j}}}, \ldots, \frac{\sum\limits_{1 \le j \le \ppdi}\frac{\frac{B}{b}}{y_{\dpd,j}}}{\sum\limits_{1 \le 1 \le \dpd}\sum\limits_{1 \le j \le \ppdi}\frac{1}{y_{i,j}}}\right) \\
\implies T = \frac{\frac{B}{b} \times L \times \tau(b)}{\sum\limits_{1 \le 1 \le \dpd}\sum\limits_{1 \le j \le \ppdi}\frac{1}{y_{i,j}}} = \frac{\frac{B}{b} \times L \times \tau(b)}{\sum\limits_{1 \le i \le M}\frac{1}{y_{i}}}
\end{gather*}
We could find out that $T$ is independent of the layers and data assignment. And the ratio of the optimal training cost $T^{\prime}, T^{\prime\prime}$ between two grouping results will always be
\begin{gather*}
\frac{T^{\prime}}{T^{\prime\prime}} = \frac{\sum\limits_{1 \le i \le M^{\prime}}\frac{1}{y^{\prime}_{i}}}{\sum\limits_{1 \le i \le M^{\prime\prime}}\frac{1}{y^{\prime\prime}_{i}}}
\end{gather*}
\end{proof}

\subsection{Proof for Theorem~\ref{thm:group_ordering_in_pipeline}}
\label{appendix:thm-3}

In Section~\ref{sec:planner_upper_level_gpu_grouping}, we propose to sort GPU groups within the same pipeline according to their group straggling rates when the groups have the same number of GPUs, which is based on Theorem~\ref{thm:group_ordering_in_pipeline}. Below we provide the proof for it.

\begin{theorem}
\label{thm:thm-3}
Suppose the groups assigned to the same pipeline have the same number of GPUs, then the best ordering of pipeline stages satisfies that the groups are in descending order w.r.t. the group straggling rates. 
\end{theorem}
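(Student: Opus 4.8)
The plan is to prove the theorem by a local exchange (adjacent‑transposition) argument on the stage ordering, mirroring the proof of Theorem~\ref{thm:tp_in_one_node} but now run against the layer‑assignment problem Eq.~\eqref{eq:lower_problem_one_case_layer} instead of the group straggling rate. By the orthogonality of layer and data assignment (Supplementary Appendix~B.5), it suffices to show that, for a single pipeline, ordering its stages in descending order of the group straggling rate minimizes the per‑pipeline objective $o_i = \min \max_j\{y_{i,j}\,l_{i,j}\}$ of Eq.~\eqref{eq:lower_problem_one_case_layer}; since the data‑assignment problem Eq.~\eqref{eq:lower_problem_one_case_data} is monotone in each $o_i$, minimizing every $o_i$ separately is globally optimal. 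The first step is to record the structural fact that makes the ordering matter: because the groups in one pipeline all contain the same number of GPUs, their memory capacities $C_{i,j}$ coincide, and in the 1F1B schedule a stage with a smaller index must buffer the forward activations of more in‑flight micro‑batches; hence the per‑stage layer capacity $\bar{l}_j := \max\{\, l \in \mathbb{N}_0 : l\,\mu_{i,j}(b) + \nu_{i,j}(b) \le C_{i,j} \,\}$ is non‑decreasing in $j$ (with the precise memory model of Supplementary Appendix~B.4). Crucially, each $\bar{l}_j$ is tied to the stage position, not to the group sitting there, so the caps are unchanged under any permutation of the groups.

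Next I would carry out the exchange step. Suppose an optimal ordering contains an adjacent inversion: stages $j$ and $j+1$ carry groups of straggling rates $p$ and $q$ with $p < q$, i.e. the faster group sits earlier. Let $\{l_k\}_{k=1}^{\ppdi}$ be an optimal layer assignment for this ordering with objective $T = \max_k(\mathrm{rate}_k \cdot l_k)$; in particular $p\,l_j \le T$, $q\,l_{j+1}\le T$, $l_j \le \bar{l}_j$, $l_{j+1}\le \bar{l}_{j+1}$. Now swap the two groups, keep $l'_k = l_k$ for $k\notin\{j,j+1\}$, and redistribute the unchanged budget $S := l_j + l_{j+1}$ between the two swapped stages so that $q\,l'_j\le T$ and $p\,l'_{j+1}\le T$. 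Finding such integers $l'_j,l'_{j+1}\ge 0$ reduces to showing that the integer interval $\bigl[\max(0,\ S-\lfloor\bar{l}_{j+1}\rfloor,\ S-\lfloor T/p\rfloor),\ \min(\lfloor\bar{l}_j\rfloor,\ \lfloor T/q\rfloor,\ S)\bigr]$ is non‑empty; as its endpoints are integers, non‑emptiness already furnishes a valid integer point. Verifying the nine pairwise inequalities (three lower endpoints against three upper endpoints) is routine: each reduces to one of $l_j\le\bar{l}_j$, $l_{j+1}\le\bar{l}_{j+1}$, $p\,l_j\le T$, $q\,l_{j+1}\le T$, together with $\bar{l}_j\le\bar{l}_{j+1}$, and the one place the inversion hypothesis enters is the bound $l_{j+1}\le T/p$, valid because $p\,l_{j+1}\le q\,l_{j+1}\le T$. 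Since $\sum_k l'_k = L$ is preserved and every stage satisfies $\mathrm{rate}\cdot l' \le T$, the swapped ordering admits a feasible assignment of value $\le T$, so its optimal training time does not exceed that of the original ordering.

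I would then conclude by iteration: starting from any ordering and repeatedly removing an adjacent inversion (bubble sort) reaches, in finitely many steps, the ordering whose group straggling rates are in descending order, and no step increases the optimum; hence that ordering attains the minimum. This also makes precise the stated intuition — placing faster groups at the ending stages lets them absorb the layers that the memory‑constrained beginning stages cannot hold.

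The main obstacle I anticipate is not the exchange combinatorics but pinning down the monotonicity $\bar{l}_j \le \bar{l}_{j+1}$ cleanly: it depends on the exact form of $\mu_{i,j}(b)$ and $\nu_{i,j}(b)$ in the 1F1B accounting (the $(\ppdi-j+1)$ activation copies on stage $j$, plus embedding/head terms on the terminal stages), so I would isolate it as a short lemma citing Supplementary Appendix~B.4 and make sure the hypothesis "equal number of GPUs per group" is used exactly where needed — namely to force $C_{i,j}$ to be the same across $j$, so that $\bar{l}_j$ is governed solely by the activation‑buffering term and is therefore non‑decreasing in $j$.
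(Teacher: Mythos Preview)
Your proposal is correct and follows essentially the same approach as the paper: establish that the per-stage layer caps $\bar l_j$ are non-decreasing in $j$ (from the 1F1B memory model with equal group sizes), then run an exchange argument removing inversions to reach the descending-rate ordering without increasing the objective. The only difference is that the paper's exchange step is simpler --- it splits on whether $l_p \le l_q$ and either keeps the layer counts (swapping only the groups) or swaps both groups and layer counts, rather than your nine-inequality interval redistribution; both yield the same conclusion.
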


\begin{proof}
\label{proof:proof-3}
For a pipeline with $w$ stages, where each stage consists of an equal number of GPUs, let the straggling rate for the groups from the first to the last stage be denoted as $(y_{1}, \ldots, y_{w})$. Assume the optimal layer assignment in this scenario is represented by $(l_{1}, \ldots, l_{w})$. For a single GPU, let $a_{f}$ represent the peak memory consumption of activations during forward propagation for one layer when the micro-batch size is 1. Similarly, let $a_{f+b}$ denote the peak memory consumption of activations during both forward and backward propagation for one layer when the micro-batch size is 1, and let $s$ be the memory consumption of the parameters, gradients, and optimization states for one layer.
Given that the memory constraint is uniform, denoted as $C$, and assuming the memory consumption of non-uniform layers in the first and last stage (such as the embedding table) is negligible compared with a bunch of uniform layers, the constraints in Eq.\eqref{eq:lower_problem_one_case_origin} can be theoretically expressed as (please refer to Proposition~\ref{appendix:deduc_memory} for more details)
\begin{equation*}
\text{$\forall j \in \left[1, w\right]$, } l_{j} \times \left\{b \times \left[a_{f} \times (w - j) + a_{f + b}\right] + s\right\} \le C
\end{equation*}
From this expression, it is evident that the maximum number of layers is constrained by the stage number $j$. Therefore, let $(max\_l_{1}, \ldots, max\_l_{w})$ represent the maximum number of layers for each stage. We have two following properties: 
\begin{enumerate}
    \item $\forall j \in \left[1, w\right], l_{j} \le max\_l_{j}$.
    \item $max\_l_{1} \le \ldots \le max\_l_{w}$.
\end{enumerate}

Assume there exists a group pair satisfying $y_{p} < y_{q}$, where $p < q$. If $l_{p} \le l_{q}$, then by swapping the groups while maintaining the layers, the new solution becomes $(l_{1}, \ldots, l_{p}, \ldots, l_{q}, \ldots, l_{w})$, with groups arrangement $(y_{1}, \ldots, y_{q}, \ldots, y_{p}, \ldots, y_{w})$. We can prove that this will result in a better (or at least equal) training cost due to the objective function in Eq.~\eqref{eq:lower_problem_one_case_origin} when other variables remain constant
\begin{gather*}
y_{p} \times l_{q} < y_{q} \times l_{q} \\
y_{q} \times l_{p} \le y_{q} \times l_{q} \\
\implies max\left\{y_{p} \times l_{q}, y_{q} \times l_{p}\right\} \le max\left\{y_{p} \times l_{q}, y_{q} \times l_{p}\right\}
\end{gather*}
Otherwise, if $l_{p} > l_{q}$, we have
\begin{gather*}
max\_l_{p} \ge l_{p} > l_{q} \\
max\_l_{q} \ge max\_l_{p} \ge l_{p}
\end{gather*}
In this case, by swapping both groups and layers, we obtain another valid solution that still satisfies the memory constraints: $(l_{1}, \ldots, l_{q}, \ldots, l_{p}, \ldots, l_{w})$ under the groups arrangement $(y_{1}, \ldots, y_{q}, \ldots, y_{p}, \ldots, y_{w})$. Similarly, according to the objective function in Eq.~\eqref{eq:lower_problem_one_case_origin}, the new solution is as good as the previous one because $max\left\{y_{p} \times l_{p}, y_{q} \times l_{q}\right\} = max\left\{y_{q} \times l_{q}, y_{p} \times l_{p}\right\}$.

Thus, swapping any pair that satisfies $y_{p} < y_{q}$ and $p < q$ will enhance the overall performance. Consequently, the strategy that arranges the groups in total descending order of the straggling rate will be the most optimal among all configurations.
\end{proof}

\subsection{Memory Cost Model (Deduction of $\mu_{i,j}(b), \nu_{i,j}(b), C_{i,j}$)}
\label{appendix:deduc_memory}

In this section, we provide more details about how to calculate the coefficients of the memory cost model in Section~\ref{sec:planner_lower_level}. 

\begin{proposition}
\label{thm:deduc_memory}
In the hybrid parallel training scenario of large-scale models, when the size of the micro-batch $b$ is given, the memory constraint condition on a single GPU is a linear function that depends only on the number of layer assignments $l_{i,j}$. And the number of GPUs that share the layers in the same group determines the upper limit of the memory constraint condition. 
\end{proposition}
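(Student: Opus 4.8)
The plan is to decompose the peak single-GPU memory footprint at a given pipeline stage into three additive groups of terms --- model states, forward activations retained by the pipeline schedule, and ``non-uniform'' overheads (embedding and language-model-head tables, the static runtime context, communication buffers) --- and to show that only the first two of these grow with $l_{i,j}$, each linearly, so that the constraint collapses to the affine form $l_{i,j}\,\mu_{i,j}(b) + \nu_{i,j}(b) \le C_{i,j}$. Along the way the per-layer coefficients will visibly carry the group size $\tpdij$ (the number of GPUs sharing the stage), which will give the second half of the statement.

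First I would fix a pipeline $i$ and a stage $j$, write $\tpdij$ for the number of GPUs in its TP group, and treat the two $l_{i,j}$-linear contributions. For the model states I would use the ZeRO-1 accounting: with TP the parameters and gradients of one layer are sharded into $\tpdij$ pieces, and ZeRO-1 additionally shards the optimizer states across the $\dpd$ data-parallel replicas, so the model-state memory charged to one GPU of the stage is $l_{i,j}\cdot s_{i,j}$, where $s_{i,j}$ collects the per-layer parameter/gradient/optimizer-state bytes divided by the appropriate combination of $\tpdij$ and $\dpd$; the point is that $s_{i,j}$ is a constant independent of $b$ and of $l_{i,j}$ once $\tpdij,\dpd$ are fixed. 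For the activation memory I would invoke the structure of the 1F1B schedule (the same reasoning used in the proof of Theorem~\ref{thm:group_ordering_in_pipeline}): in steady state the $j$-th of $\ppdi$ stages must simultaneously retain the forward activations of $\ppdi - j + 1$ in-flight micro-batches --- one in the forward-plus-backward state and the other $\ppdi - j$ merely stashed for a later backward pass. Writing $a_f$ and $a_{f+b}$ for the per-layer, per-sample activation peaks in these two states, and noting that activations are the only memory term that scales with the micro-batch size, the activation memory charged to one GPU is $l_{i,j}\cdot b\cdot\bigl[a_f(\ppdi - j) + a_{f+b}\bigr]/\tpdij$, the $1/\tpdij$ coming from sequence-parallel sharding of activations across the TP group. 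Adding the two terms yields $\mu_{i,j}(b) = s_{i,j} + b\,\bigl[a_f(\ppdi - j) + a_{f+b}\bigr]/\tpdij$ up to exact constants --- a fixed number once $b$ and $j$ are pinned, and one that carries $\tpdij$.

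Next I would argue that the remaining contributions are $l_{i,j}$-independent and so form the intercept $\nu_{i,j}(b)$: the embedding/positional tables and the language-model head appear only on the first/last stage and have a size fixed by the vocabulary and hidden dimension, not by $l_{i,j}$, while the CUDA context, NCCL buffers, and the point-to-point send/recv buffers of the pipeline are likewise $l_{i,j}$-independent (their size being governed by $b$ and the hidden/sequence dimensions). Finally, for the capacity I would observe that the physical memory of a single GPU is a hardware constant, but because each GPU of the stage stores only a $1/\tpdij$ fraction of the stage's per-layer state, the effective budget that the uniform layers may occupy --- what we call $C_{i,j}$ --- scales with the number of co-located GPUs; equivalently, one may keep $\mu_{i,j},\nu_{i,j}$ in ``logical'' (un-sharded) units and set $C_{i,j} = \tpdij\cdot C_{\mathrm{GPU}}$ minus the aggregated fixed overhead. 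Either presentation makes $\tpdij$ enter the bound, which is the assertion about ``the number of GPUs that share the layers''. Assembling the three parts then gives precisely $l_{i,j}\,\mu_{i,j}(b) + \nu_{i,j}(b) \le C_{i,j}$, linear in $l_{i,j}$ with the bound controlled by the group size.

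The step I expect to be the main obstacle is the activation bookkeeping for the 1F1B schedule --- getting the exact per-stage count of retained micro-batches right together with the clean split between the forward-only ($a_f$) and forward-plus-backward ($a_{f+b}$) states --- and, secondarily, reconciling the two independent levels of sharding (TP acting on parameters and activations versus ZeRO-1 acting on optimizer states over the DP dimension) without double-counting. Everything else is a routine separation of terms according to their dependence on $b$ and on $l_{i,j}$.
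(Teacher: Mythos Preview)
Your proposal is correct and follows essentially the same approach as the paper: both decompose the per-GPU memory into model states and 1F1B activations (linear in $l_{i,j}$, with the $(\ppdi-j)$ forward-only plus one forward-plus-backward accounting) versus non-uniform first/last-stage terms (the intercept), and both absorb the TP degree into the capacity by normalizing to un-sharded units so that $C_{i,j}$ scales with the group size. The only cosmetic differences are that the paper places the CUDA/NCCL overhead as a reserved gap subtracted from the capacity rather than in $\nu_{i,j}$, and takes $C_{i,j}=k_{i,j}\cdot(\min_{X\in Y_{i,j}}C_X - G)$ to allow for stragglers with reduced memory---both of which you already anticipate as equivalent presentations.
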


\begin{proof}
\label{proof:deduc_memory}
A large-scale model can usually be divided into many uniform layers (e.g., Transformer blocks) and a handful of non-uniform layers (e.g., the embedding table and LM head). And in the context of a pipeline parallelism, these non-uniform layers will only appear in the first and last stages of the pipeline, while the remaining uniform layers are partitioned into all stages.

For any GPU group $Y$ with $k$ GPUs, on each GPU, let $a_{f_{k}}$ represent the peak memory consumption of activations during forward propagation for one layer when the micro-batch size is 1. Similarly, let $a_{f+b_{k}}$ denote the peak memory consumption of activations during both forward and backward propagation for one layer when the micro-batch size is 1, and let $s_{k}$ be the memory consumption of the parameters, gradients, and optimization states for one layer. Theoretically, the memory consumption is proportional to the number of GPUs in a group, denoted as $k$. This is because the sizes of activations, parameters, gradients and optimization states are all directly related to the hidden states dimension, which is evenly distributed across the GPUs within a group. Consequently, for two GPU groups with $k^{\prime}$ and $k^{\prime\prime}$ GPUs, rates $k^{\prime} / k^{\prime\prime}$ would hold for $a_{f_{k^{\prime}}} / a_{f_{k^{\prime\prime}}}$, $a_{f+b_{k^{\prime}}} / a_{f+b_{k^{\prime\prime}}}$ and $s_{k^{\prime}} / s_{k^{\prime\prime}}$. 

As for the non-uniformed layers, let $\dot{a}_{f_{k}}$ and $\Ddot{a}_{f_{k}}$ represent the peak memory consumption of activations during forward propagation for the first and last several non-uniform layers of the model, respectively, when the micro-batch size is 1. And let $\dot{a}_{f+b_{k}}$ and $\Ddot{a}_{f+b_{k}}$ denote the peak memory consumption of activations during both forward and backward propagation for the first and last several non-uniform layers of the model, respectively, when the micro-batch size is 1. Finally, let $\dot{s}_{k}$ and $\Ddot{s}_{k}$ be the memory consumption of the parameters, gradients, and optimization states for the first and last several non-uniform layers of the model, respectively, when the micro-batch size is 1.

Considering the $i$-th 1F1B pipeline with $\ppdi$ stages in total, for a GPU within stage $j$ ($1 \le j \le \ppdi$), it will first accumulate $\ppdi - j$ rounds forward activations with a micro-batch size of $b$. Then, for the remaining micro-batches, it will execute a forward propagation followed by a backward propagation, until all the micro-batches are processed. Therefore, the peak memory consumption for a GPU within a group consisting of $k_{i,j}$ GPUs in stage $j$ would be
\begin{gather*}
\begin{aligned}
& l_{i,1} \times \left\{b \times \left[a_{f_{k_{i,j}}} \times \left(\ppdi - 1 \right) + a_{f+b_{k_{i,j}}}\right] + s_{k_{i,j}}\right\} \\&\;\;\;\;\;\;\;\;\;\;\;\; + b \times \left[\dot{a}_{f_{k_{i,j}}} \times \left(\ppdi - 1 \right) + \dot{a}_{f+b_{k_{i,j}}}\right] + \dot{s}_{k_{i,j}}, && \text{for } j = 1 \\
& l_{i,\ppdi} \times \left(b \times a_{f+b_{k_{i,j}}} + s_{k_{i,j}}\right) + b \times \Ddot{a}_{f+b_{k_{i,j}}} + \Ddot{s}_{k_{i,j}}, && \text{for } j = \ppdi \\
&l_{i,j} \times \left\{b \times \left[a_{f_{k_{i,j}}} \times \left(\ppdi - j \right) + a_{f+b_{k_{i,j}}}\right] + s_{k_{i,j}}\right\}, && \text{for } 2 \le j \le \ppdi - 1
\end{aligned}
\end{gather*}
Assume that the memory limit for GPU $X$ is denoted by $C_{X}$. In most cases, $C_{X} = C$, where $C$ represents the GPU memory bound. However, if a GPU straggler experiences memory pressure, the GPU memory usage within the group should be limited by the minimum $C_{X}$. To accommodate practical scenarios and prevent out-of-memory (OOM) errors, we also introduce a reserved memory gap $G$ (4096MiB in our experimental setup) to allocate memory for essential operations such as NCCL and CUDA contexts. Thus, From the perspective of $k = 1$, the memory constraint can be formulated as
\begin{gather*}
\begin{aligned}
&l_{i,1} \times \left\{b \times \left[\frac{a_{f_{1}}}{k_{i,j}} \times \left(\ppdi - 1 \right) + \frac{a_{f+b_{1}}}{k_{i,j}}\right] + \frac{s_{1}}{k_{i,j}}\right\} \\&\;\;\;\;\;\;\;\;\;\;\;\; + b \times \left[\frac{\dot{a}_{f_{1}}}{k_{i,j}} \times \left(\ppdi - 1 \right) + \frac{\dot{a}_{f+b_{1}}}{k_{i,j}}\right] + \frac{\dot{s}_{1}}{k_{i,j}} \le \min\limits_{X \in Y_{i, j}}\left\{C_{X}\right\} - G, && \text{for } j = 1 \\
&l_{i,\ppdi} \times \left(b \times \frac{a_{f+b_{1}}}{k_{i,j}} + \frac{s_{1}}{k_{i,j}}\right) + b \times \frac{\Ddot{a}_{f+b_{1}}}{k_{i,j}} + \frac{\Ddot{s}_{1}}{k_{i,j}} \le \min\limits_{X \in Y_{i, j}}\left\{C_{X}\right\} - G, && \text{for } j = \ppdi \\
&l_{i,j} \times \left\{b \times \left[\frac{a_{f_{1}}}{k_{i,j}} \times \left(\ppdi - j \right) + \frac{a_{f+b_{1}}}{k_{i,j}}\right] + \frac{s_{1}}{k_{i,j}}\right\} \le \min\limits_{X \in Y_{i, j}}\left\{C_{X}\right\} - G, && \text{for } 2 \le j \le \ppdi - 1
\end{aligned}
\end{gather*}
We can directly derive $\mu_{i,j}(b)$, $\nu_{i,j}(b)$ and $C_{i, j}$ from the above equation as 
\begin{align*}
\begin{cases}
\mu_{i,1}(b) = b \times \left[a_{f_{1}} \times \left(\ppdi - 1 \right) + a_{f+b_{1}}\right] + s_{1}, & \text{for } j = 1 \\
\mu_{i,\ppdi}(b) = b \times a_{f+b_{1}} + s_{1}, & \text{for } j = \ppdi \\
\mu_{i,j}(b) = b \times \left[a_{f_{1}} \times \left(\ppdi - j \right) + a_{f+b_{1}}\right] + s_{1}, & \text{for } 2 \le j \le \ppdi - 1
\end{cases}
\\
\begin{cases}
\nu_{i,1}(b) = b \times \left[\dot{a}_{f_{1}} \times \left(\ppdi - 1 \right) + \dot{a}_{f+b_{1}}\right] + \dot{s}_{1}, & \text{for } j = 1 \\
\nu_{i,\ppdi}(b) = b \times \Ddot{a}_{f+b_{1}} + \Ddot{s}_{1}, & \text{for } j = \ppdi \\
\nu_{i,j}(b) = 0, & \text{for } 2 \le j \le \ppdi - 1
\end{cases}
\end{align*}
and
\begin{gather*}
C_{i,j} = k_{i,j} \times \left(\min\limits_{X \in Y_{i, j}}\left\{C_{X}\right\} - G\right), \text{where } k_{i,j} = \left| \left\{X | X \in Y_{i,j}\right\} \right|
\end{gather*}
\end{proof}

\subsection{Deduction of Equivalence Between Eq.~\eqref{eq:lower_problem_one_case_origin} and Eq.~\eqref{eq:lower_problem_one_case_layer}, ~\eqref{eq:lower_problem_one_case_data}}
\label{appendix:deduce_eq-1-2-3}

In Section~\ref{sec:planner_lower_level}, we decouple the problem in Eq.~\eqref{eq:lower_problem_one_case_origin} into the sub-problems in Eq.~\eqref{eq:lower_problem_one_case_layer} and Eq.~\eqref{eq:lower_problem_one_case_data}. Below we provide the detailed deduction. 

\begin{proposition}
\label{thm:deduce_eq-1-2-3}
Any optimal solution of Eq.~\eqref{eq:lower_problem_one_case_layer} combined with Eq.~\eqref{eq:lower_problem_one_case_data} will also be one of the optimal solutions of Eq.~\eqref{eq:lower_problem_one_case_origin}
\end{proposition}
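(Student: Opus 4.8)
The plan is to exploit two structural facts about Eq.~\eqref{eq:lower_problem_one_case_origin}: its feasible region is a Cartesian product, and its objective separates multiplicatively across pipelines. First I would observe that the constraints $\sum_j l_{i,j} = L$ and $l_{i,j}\mu_{i,j}(b)+\nu_{i,j}(b)\le C_{i,j}$ together with $l_{i,j}\in\mathbb{N}_0$ concern only the layer variables of pipeline $i$ and are entirely decoupled across $i$, while the remaining constraint $\sum_i m_i b = B$ with $m_i\in\mathbb{N}_0$ concerns only the data variables $m_\cdot$. Consequently a pair $(\{l_{i,j}\},\{m_i\})$ is feasible for Eq.~\eqref{eq:lower_problem_one_case_origin} if and only if each slice $\{l_{i,j}\}_j$ is feasible for the $i$-th instance of Eq.~\eqref{eq:lower_problem_one_case_layer} and $\{m_i\}$ is feasible for Eq.~\eqref{eq:lower_problem_one_case_data}; in particular the concatenation of optimal solutions $\{\tilde l_{i,j}\}$ of the $\dpd$ layer sub-problems with an optimal solution $\{\tilde m_i\}$ of the data sub-problem is a feasible point of Eq.~\eqref{eq:lower_problem_one_case_origin}. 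I would also note the degenerate branch: if some instance of Eq.~\eqref{eq:lower_problem_one_case_layer} is infeasible, then Eq.~\eqref{eq:lower_problem_one_case_origin} is too and the current $b$ is simply skipped, so feasibility may be assumed throughout.

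Next I would rewrite the objective of Eq.~\eqref{eq:lower_problem_one_case_origin} as $\tau(b)\cdot\max_i\{g_i(l_{i,\cdot})\,m_i\}$, where $g_i(l_{i,\cdot}):=\max_j\{y_{i,j}l_{i,j}\}$ depends only on the layer assignment of pipeline $i$, and recall that by construction $o_i$ is the minimum of $g_i$ over the feasible set of Eq.~\eqref{eq:lower_problem_one_case_layer}, attained at $\tilde l_{i,\cdot}$, so $g_i(\tilde l_{i,\cdot})=o_i$. The core of the argument is then a short chain of inequalities: for an arbitrary feasible $(\{l_{i,j}\},\{m_i\})$ of Eq.~\eqref{eq:lower_problem_one_case_origin}, each $\{l_{i,j}\}_j$ is feasible for the $i$-th layer sub-problem, hence $g_i(l_{i,\cdot})\ge o_i$; multiplying by $m_i\ge0$ and maximizing over $i$ gives $\max_i\{g_i(l_{i,\cdot})m_i\}\ge\max_i\{o_i m_i\}$, and since $\{m_i\}$ is feasible for Eq.~\eqref{eq:lower_problem_one_case_data} while $\{\tilde m_i\}$ is optimal for it, $\max_i\{o_i m_i\}\ge\max_i\{o_i\tilde m_i\}$. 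Putting these together, the objective at $(\{l_{i,j}\},\{m_i\})$ is at least $\tau(b)\max_i\{o_i\tilde m_i\}$, which is exactly the objective at the concatenated solution because $g_i(\tilde l_{i,\cdot})=o_i$; as the feasible point was arbitrary, the concatenation is optimal. The same reasoning goes through verbatim when the integrality requirements are dropped, which is the setting needed for Theorem~\ref{thm:theoretic_optimality}.

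The main obstacle I anticipate is not computational but conceptual: making airtight the intuition that one never profits from a deliberately suboptimal layer assignment in some pipeline in order to enable a cheaper data assignment. This is precisely what the product structure forbids --- the data constraint $\sum_i m_i b = B$ does not mention $\{l_{i,j}\}$, so the per-pipeline layer sub-problems are genuinely independent of $\{m_i\}$ and can be solved first, with the resulting $o_i$ entering Eq.~\eqref{eq:lower_problem_one_case_data} as fixed coefficients. I would therefore state this decoupling explicitly before the inequality chain. The only other things worth checking are routine edge cases already covered by the inequalities: a pipeline assigned $m_i=0$ (a removed pipeline), and a group assigned zero layers within a pipeline; both are fine since $o_i\ge0$ and $m_i\ge0$ keep all the multiplications direction-preserving, while $\sum_j l_{i,j}=L\ge1$ with positive $y_{i,j}$ guarantees $o_i>0$ so the data sub-problem is well posed.
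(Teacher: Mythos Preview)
Your argument is correct and rests on the same structural observation as the paper---that the feasible region factors across pipelines and the objective is $\tau(b)\max_i\{g_i(l_{i,\cdot})\,m_i\}$---but the proof strategies differ slightly. The paper proceeds by an \emph{exchange argument}: it fixes an optimal solution $(\overline{l},\overline{m})$ of Eq.~\eqref{eq:lower_problem_one_case_origin}, then swaps each $\overline{l}_{i,\cdot}$ for an optimizer $\hat l_{i,\cdot}$ of Eq.~\eqref{eq:lower_problem_one_case_layer} and argues this cannot worsen the objective, so $(\hat l,\overline{m})$ is still optimal; finally it observes that with layers frozen at $\hat l$ the residual problem in $m$ is exactly Eq.~\eqref{eq:lower_problem_one_case_data}. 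You instead give a \emph{direct lower bound}: for an arbitrary feasible point you chain $g_i(l_{i,\cdot})m_i\ge o_i m_i$ and $\max_i\{o_i m_i\}\ge\max_i\{o_i\tilde m_i\}$ to show the concatenated sub-problem optima already achieve the global infimum. Your route is a bit more self-contained (it never invokes the existence of a global optimizer of Eq.~\eqref{eq:lower_problem_one_case_origin}) and makes the role of $m_i\ge0$ explicit, while the paper's exchange argument more visibly mirrors the two-phase algorithm actually implemented.
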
.

\begin{proof}
\label{proof:deduce_eq-1-2-3}
Let $(\overline{l}_{1,1}, \ldots, \overline{l}_{\dpd,\ppd_{\dpd}})$ and $(\overline{m}_{1}, \ldots, \overline{m}_{\dpd})$ be one of the optimal solutions of Eq.~\eqref{eq:lower_problem_one_case_origin}. And let $\overline{o}_{i} := \max_{1 \le j \le \ppdi}\{y_{i,j} \times \overline{l}_{i,j} \times \tau(b) \times \overline{m}_{i}\}$. Let $(\hat{l}_{i,1}, \ldots, \hat{l}_{i,\ppdi})$ be one of the optimal solutions of Eq.~\eqref{eq:lower_problem_one_case_layer}, the $i$-th sub-problem, where $i$ is fixed here. Assume $(\hat{l}_{i,1}, \ldots, \hat{l}_{i,\ppdi}) \ne (\overline{l}_{i,1}, \ldots, \overline{l}_{i,\ppdi})$, then we consider $\hat{o}_{i} := \max_{1 \le j \le \ppdi}\{y_{i,j} \times \hat{l}_{i,j} \times \tau(b) \times \overline{m}_{i}\}$. Since $(\hat{l}_{i,1}, \ldots, \hat{l}_{i,\ppdi})$ is one of the optimal solutions of Eq.\eqref{eq:lower_problem_one_case_layer}, and $(\overline{l}_{i,1}, \ldots, \overline{l}_{i,\ppdi})$ is also a feasible solution for it, we have $\max_{1 \le j \le \ppdi}\{y_{i,j} \times \hat{l}_{i,j}\} \le \max_{1 \le j \le \ppdi}\{y_{i,j} \times \overline{l}_{i,j}\}$. Therefore, it must be the case that
\begin{gather*}
\hat{o}_{i} = \max\limits_{1 \le j \le \ppdi}\left\{y_{i,j} \times \hat{l}_{i,j} \times \tau(b) \times \overline{m}_{i}\right\} \le \overline{o}_{i} = \max\limits_{1 \le j \le \ppdi}\left\{y_{i,j} \times \overline{l}_{i,j} \times \tau(b) \times \overline{m}_{i}\right\}
\end{gather*}
This implies that the solution $(\overline{l}_{i,1}, \ldots, \overline{l}_{i,\ppd_{1}}, \ldots, \hat{l}_{i,1}, \ldots, \hat{l}_{i,\ppd_{i}}, \ldots, \overline{l}_{\dpd,1}, \ldots, \overline{l}_{\dpd,\ppd_{\dpd}})$ is a better (or at least equal) solution for Eq.\eqref{eq:lower_problem_one_case_origin} compared to $(\overline{l}_{i,1}, \ldots, \overline{l}_{i,\ppd_{1}}, \ldots, \overline{l}_{i,1}, \ldots, \overline{l}_{i,\ppd_{i}}, \ldots, \overline{l}_{\dpd,1}, \ldots, \overline{l}_{\dpd,\ppd_{\dpd}})$. Since we have already assumed that one of the optimal solutions of Eq.\eqref{eq:lower_problem_one_case_origin} is $(\overline{l}_{i,1},\allowbreak \ldots,\allowbreak \overline{l}_{i,\ppd_{1}},\allowbreak \ldots,\allowbreak \overline{l}_{i,1},\allowbreak \ldots,\allowbreak \overline{l}_{i,\ppd_{i}},\allowbreak \ldots,\allowbreak \overline{l}_{\dpd,1},\allowbreak \ldots,\allowbreak \overline{l}_{\dpd,\ppd_{\dpd}})$, it follows that $(\overline{l}_{i,1}, \ldots, \overline{l}_{i,\ppd_{1}}, \ldots, \hat{l}_{i,1}, \ldots, \hat{l}_{i,\ppd_{i}},$
$ \ldots, \overline{l}_{\dpd,1}, \ldots, \overline{l}_{\dpd,\ppd_{\dpd}})$ is equally optimal. And the combined solution $(\hat{l}_{i,1}, \ldots, \hat{l}_{i,\ppd_{1}}, \ldots, \hat{l}_{i,1}, \ldots, \hat{l}_{i,\ppd_{i}}, \ldots, \hat{l}_{\dpd,1}, \ldots, \hat{l}_{\dpd,\ppd_{\dpd}})$ after solving Eq.\eqref{eq:lower_problem_one_case_layer} for each given $i$, should also be one of the optimal solutions of Eq.\eqref{eq:lower_problem_one_case_origin}. Therefore, we can solve Eq.\eqref{eq:lower_problem_one_case_origin} by solving all Eq.\eqref{eq:lower_problem_one_case_layer} (totally $\dpd$ sub-problems) and obtain the optimal solution $(\hat{l}_{i,1}, \ldots, \hat{l}_{i,\ppd_{1}}, \ldots, \hat{l}_{i,1}, \ldots, \hat{l}_{i,\ppd_{i}}, \ldots, \hat{l}_{\dpd,1}, \ldots, \hat{l}_{\dpd,\ppd_{\dpd}})$. Subsequently, with $l_{i,j}$ fixed, the only variables are $m_{i}$. The problem then reduces to a single ILP problem (Eq.\eqref{eq:lower_problem_one_case_data}) represented as
\begin{gather*}
\argmin_{m_i} \max_{i \in \left[1, \dpd\right]} \left\{ \max\limits_{1 \le j \le \ppd_{i}}\left\{y_{i,j} \times \hat{l}_{i,j}\right\} \times m_i \right\} \times \tau(b) \\
\text{ s.t. } \sum_{i \in \left[1, \dpd\right]} m_i \times b = B, \\
m_i \in \mathbb{N}_0 \text{ for } \forall i \in \left[1, \dpd\right]
\end{gather*}
\end{proof}

\subsection{Deduction of Eq.~\eqref{eq:pipeline_division_estimation}}
\label{appendix:deduce_eq-4}

In Section~\ref{sec:planner_upper_level_pipeline_orchestration}, we formulate the problem of finding the best pipeline division into Eq.~\eqref{eq:pipeline_division_estimation} by relaxing a few constraints in Eq.~\eqref{eq:lower_problem_one_case_origin}. Below we provide the detailed deduction.

\begin{proposition}
\label{thm:deduce_eq-4}
If we ignore the memory constraints in Eq.~\eqref{eq:lower_problem_one_case_origin}, and further assume the layer assignments are not restricted to integers (i.e., $l_{i,j}$ in Eq.~\eqref{eq:lower_problem_one_case_origin} can be any positive real numbers), then the problem of finding the best pipeline division can be written as Eq.~\eqref{eq:pipeline_division_estimation}.
\end{proposition}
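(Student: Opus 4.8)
The plan is to derive Eq.~\eqref{eq:pipeline_division_estimation} in two moves: first eliminate the layer-assignment variables by solving the (now memory-free, continuous) inner problem in closed form, and then re-parametrise the choice of pipeline division by group counts and slot indicators, exploiting that the majority of groups are indistinguishable.

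First I would drop the memory constraints and the integrality of $l_{i,j}$ from Eq.~\eqref{eq:lower_problem_one_case_origin}. For any fixed data assignment $(m_1,\dots,m_{\dpd})$ the objective is a maximum over $i$ of terms each involving only the disjoint variable block $l_{i,\cdot}$ (and the constant $m_i$), and the sole remaining coupling of the $l$'s is the per-pipeline constraint $\sum_j l_{i,j}=L$; hence the problem decouples over pipelines, and for each $i$ we must minimise $\max_j\{y_{i,j}l_{i,j}\}$ subject to $l_{i,\cdot}>0$, $\sum_j l_{i,j}=L$. This is exactly the sub-problem solved inside the proof of Theorem~\ref{thm:theoretic_optimality}: by a balancing argument the optimum equalises all $y_{i,j}l_{i,j}$, is attained (in the interior of the closed feasible simplex) at $l_{i,j}=(L/y_{i,j})/\sum_{j'}(1/y_{i,j'})$, and equals $L/\sum_j(1/y_{i,j})$. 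Substituting this back, the running time of pipeline $i$ reduces to $L\,\tau(b)\,m_i/\sum_j(1/y_{i,j})$; since $L$ is a fixed constant it drops out of the $\argmin$, leaving the objective $\max_i\{\tau(b)\,m_i/\sum_j(1/y_{i,j})\}$ to be minimised over integer $m_i$ with $\sum_i m_i=B/b$ and over all ways of dividing the $M$ groups among the $\dpd$ pipelines.

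Second I would re-encode the division. Of the $M$ groups, $M-M_s$ carry the common rate $\hat y$ and only $M_s$ differ. The reduced objective depends on a division solely through the per-pipeline quantities $\sum_j(1/y_{i,j})$, which in turn depend only on how many fast groups pipeline $i$ receives and which slow groups it receives --- not on the identities of the fast groups, and, crucially, not on the ordering of the stages (the stage index entered Eq.~\eqref{eq:lower_problem_one_case_origin} only through the memory coefficients $\nu_{i,j},C_{i,j}$, which we have discarded). It is therefore without loss of generality to replace a division by the integers $h_i\in\mathbb{N}_0$ (number of fast groups in pipeline $i$, so $\sum_i h_i=M-M_s$) and the indicators $q_{i,k}\in\{0,1\}$ (with $q_{i,k}=1$ iff slow group $k$ lies in pipeline $i$, so $\sum_i q_{i,k}=1$); with this substitution $\sum_j(1/y_{i,j})$ becomes exactly the linear expression in the $h_i$ and the $q_{i,k}$ appearing in the denominator of Eq.~\eqref{eq:pipeline_division_estimation}, and retaining the integer variables $m_i$ yields Eq.~\eqref{eq:pipeline_division_estimation} as stated.

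The step I expect to be the main obstacle is the ``without loss of generality'' reduction of divisions to $(h_i,q_{i,k})$: one must argue carefully that, once memory is ignored, the relaxed lower-level optimum is genuinely a function of the multiset of group rates of each pipeline alone, so that stage ordering can legitimately be deferred to the separate group-ordering step and the exponentially many divisions collapse onto the small set parametrised by $(h_i,q_{i,k})$. A secondary but worth-stating point is to explain why $m_i$ is kept integral here while $l_{i,j}$ is relaxed: by Theorem~\ref{thm:theoretic_optimality} the fully-continuous optimum is $(B/b)\,L\,\tau(b)/\sum_{\text{all }M\text{ groups}}(1/y)$, which is independent of the division, so the division only ``bites'' through the integrality of $m_i$ (and of $h_i$), and relaxing $m_i$ as well would make Eq.~\eqref{eq:pipeline_division_estimation} degenerate.
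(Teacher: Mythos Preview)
Your proposal is correct and follows essentially the same route as the paper: relax the inner layer-assignment problem to obtain the closed form $L/\sum_j(1/y_{i,j})$, substitute into the outer data-assignment problem, and then re-parametrise the pipeline division via $(h_i,q_{i,k})$. The paper leans on the previously-proved decoupling of Eq.~\eqref{eq:lower_problem_one_case_origin} into Eq.~\eqref{eq:lower_problem_one_case_layer} and Eq.~\eqref{eq:lower_problem_one_case_data} rather than arguing separability from scratch, and it does not spell out your two ``obstacle'' points (that only the multiset of rates matters once memory is dropped, and why $m_i$ must stay integral lest the problem degenerate by Theorem~\ref{thm:theoretic_optimality}); those additions of yours are sound and sharpen the exposition but do not change the underlying argument.
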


\begin{proof}
\label{proof:deduce_eq-4}
We have already proved the equivalence between the ultimate problem, Eq.~\eqref{eq:lower_problem_one_case_origin} and two sub-problems, Eq.~\eqref{eq:lower_problem_one_case_layer} and Eq.~\eqref{eq:lower_problem_one_case_data}. Thus, when first solving Eq.~\eqref{eq:lower_problem_one_case_layer}, if we neglect the memory constraints and further assume the layer assignments are not restricted to integers, the optimal solution could be directly obtained as follows
\begin{gather*}
\left(l_{i,1}, \ldots, l_{i,\ppdi}\right) = \left(\frac{\frac{L}{y_{i,1}}}{\sum\limits_{1 \le j \le \ppdi}\frac{1}{y_{i,j}}}, \ldots, \frac{\frac{L}{y_{i,\ppdi}}}{\sum\limits_{1 \le j \le \ppdi}\frac{1}{y_{i,j}}}\right) 
\implies o_{i} = \frac{1}{\sum\limits_{1 \le j \le \ppdi}\frac{1}{y_{i,j}}} \times L 
\end{gather*}
Substituting these results into  Eq.~\eqref{eq:lower_problem_one_case_data}, the problem becomes
\begin{gather*}
\argmin_{m_i}
\max_{i \in \left[1, \dpd\right]} \left\{ \frac{m_i \times \tau(b)}{\sum\limits_{1 \le j \le \ppdi}\frac{1}{y_{i,j}}} \right\} \\
\text{ s.t. } \sum_{i \in \left[1, \dpd\right]} m_i \times b = B \\
m_i \in \mathbb{N}_0 \text{ for } \forall i \in \left[1, \dpd\right]
\end{gather*}
Assuming there are $M$ GPU groups in total, with $M_{s}$ groups having stragglers. For the $i$-th pipeline, let $h_{i}$ be the number of normal groups (without stragglers). Considering the placement of each straggler group, we can form a $\{q_{i,k}\}_{\dpd, M{s}}$ matrix. In this matrix, each column $j$ contains exactly one 1, representing which pipeline the $j$-th straggler group is assigned. Each row consists of several 1s, indicating how many straggler groups the pipeline contains, summing up to $h_{i}$. Thus, for the $i$-th pipeline, its term in the objective function can be transformed to
\begin{gather*}
\frac{m_i \times \tau(b)}{\sum\limits_{1 \le j \le \ppdi}\frac{1}{y_{i,j}}} = \frac{m_i \times \tau(b)}{h_i \times \hat{y} + \sum_{k=1}^{M_{s}} {q_{i,k}}/{y_{k}}}
\end{gather*}
We also have additional constraints for $h_{i}$ and $\{q_{i,k}\}_{\dpd, M_{s}}$
\begin{gather*}
\sum_{i=1}^{\dpd} h_i = M - M_{s} \\
q_{i,k} \in \{0, 1\} \text{ for } \forall i \in [1, \dpd], \forall k \in [1, M_{s}] \\
\sum_{i=1}^{\dpd} q_{i,k} = 1 \text{ for } \forall k \in [1, M_{s}] \\
h_i \in \mathbb{N}_0 \text{ for } \forall i \in [1, \dpd]
\end{gather*}
\end{proof}

\subsection{Deduction of Possible Grouping Results after Splitting}
\label{appendix:deduce_grouping-possibilities}

In Section~\ref{sec:planner_upper_level_gpu_grouping}, we make the statement that there exist up to 6 possible grouping results to classify 7 GPUs into three groups with 1, 2, and 4 GPUs, respectively. Below we first prove a related proposition, and then elaborate the 6 possible grouping results.

\begin{proposition}
\label{thm:deduce_grouping-possibilities}
For an optimal grouping solution, regardless of the number of GPUs each group possesses (non-uniform device partitioning may happens), the GPUs within each group must be arranged in descending order of their straggling rates and form a consecutive sequence.
\end{proposition}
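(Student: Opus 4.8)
The plan is to prove the proposition by generalizing the exchange idea behind Theorem~\ref{thm:tp_in_one_node} from equal-size to arbitrary-size groups, and in fact to replace the step-by-step inversion removal used there by a single comparison against the ``block'' grouping. First I would isolate the two structural ingredients, both of which are the size-parametrized versions of the two statements used in the proof of Theorem~\ref{thm:tp_in_one_node}. (a) For a group $G$ with $|G|=k$ GPUs, its group straggling rate is $y_G=\rho_k\cdot\max_{i\in G}\{x_i\}$, so it depends on $G$ only through $k$ and through the single largest GPU straggling rate in $G$; in particular, permuting the GPUs inside a group changes nothing, which already yields the ``arranged in descending order of their straggling rates'' part of the claim as a mere canonical labelling. (b) Monotonicity (generalizing the second statement in that proof): writing $C(y_1,\dots,y_M;k_1,\dots,k_M)$ for the minimum training cost of $M$ groups with the given group straggling rates and sizes — whether $C$ is the optimum of Eq.~\eqref{eq:lower_problem_one_case_origin} or the relaxed closed form behind Theorem~\ref{thm:theoretic_optimality} — the function $C$ is non-decreasing in each $y_i$ with the remaining arguments held fixed.

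Next I would sort all GPUs globally so that $x_1\ge x_2\ge\cdots\ge x_n$, fixing any tie-break so that the ``rank'' of a GPU is single-valued. Let $\mathcal{G}$ be an arbitrary grouping with the prescribed size multiset, and call the smallest-rank member of a group its \emph{leader}; the leader attains the group's maximum straggling rate, and every member of a group has rank at least its leader's rank. Index the groups of $\mathcal{G}$ as $G_1,\dots,G_M$ so that the leader ranks satisfy $\ell_1<\ell_2<\cdots<\ell_M$, and put $s_j=|G_j|$. Let $\mathcal{B}$ be the \emph{block grouping} that cuts $\{1,\dots,n\}$ into consecutive blocks of sizes $s_1,\dots,s_M$ in that order, so the $j$-th block is $\{L_j,\dots,L_j+s_j-1\}$ with $L_j:=1+\sum_{l<j}s_l$; its leader has rank $L_j$ and its size is $s_j$. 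Hence $y^{\mathcal{B}}_j=\rho_{s_j}x_{L_j}$ while $y^{\mathcal{G}}_j=\rho_{s_j}x_{\ell_j}$, and $\mathcal{B}$ realizes exactly the ``each group is a consecutive sequence'' structure the proposition asks for.

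The crux is the rank inequality $\ell_j\le L_j$ for every $j$. Any rank $t<\ell_j$ lies in a group whose leader has rank $\le t<\ell_j$, hence in one of the $j-1$ groups $G_1,\dots,G_{j-1}$; therefore $\{1,\dots,\ell_j-1\}\subseteq G_1\cup\cdots\cup G_{j-1}$, so $\ell_j-1\le s_1+\cdots+s_{j-1}$, i.e., $\ell_j\le L_j$. Since $x_r$ is non-increasing in $r$, this gives $x_{L_j}\le x_{\ell_j}$ and thus $y^{\mathcal{B}}_j\le y^{\mathcal{G}}_j$ for all $j$, with the same sizes; by monotonicity (b), $C(y^{\mathcal{B}}_1,\dots,y^{\mathcal{B}}_M;s_1,\dots,s_M)\le C(y^{\mathcal{G}}_1,\dots,y^{\mathcal{G}}_M;s_1,\dots,s_M)$. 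Taking $\mathcal{G}$ to be a globally optimal grouping then shows the optimum is always attained by a consecutive (block) grouping, which is the assertion. Furthermore, when the straggling rates are distinct and $C$ depends strictly on each group rate — as for the relaxed objective of Theorem~\ref{thm:theoretic_optimality}, where $C\propto 1/\sum_i 1/y_i$ — any non-consecutive $\mathcal{G}$ must have $\ell_j<L_j$ for some $j$, hence $y^{\mathcal{B}}_j<y^{\mathcal{G}}_j$ strictly, hence strictly smaller cost, so \emph{every} optimal grouping is consecutive.

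The step I expect to require the most care is ingredient (b): for the relaxed objective of Theorem~\ref{thm:theoretic_optimality} it is immediate, but for the constrained integer program Eq.~\eqref{eq:lower_problem_one_case_origin} one must verify that lowering one group straggling rate, with all sizes and the other rates fixed, never forces a costlier layer/data assignment — the natural argument being that any assignment feasible before the change remains feasible afterwards and its objective value does not increase. A lesser nuisance is the tie-breaking used to define ``rank'' and therefore ``leader''; when several GPUs share a straggling rate, both ``consecutive sequence'' and ``descending order'' should be read up to permutation of equal-rate GPUs, which is harmless by ingredient (a).
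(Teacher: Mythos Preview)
Your proof is correct and takes a genuinely different route from the paper's. The paper argues by an iterative local swap: pick any non-consecutive group $Y$ with leader at global rank $p$, form the consecutive block $\{p,p+1,\dots,p+k-1\}$, and swap members; the new $Y'$ keeps the same leader (hence the same group rate) while every donor group receives a GPU with no larger rate, so the cost does not increase --- then repeat. You instead make a single global comparison: order the groups by leader rank, form the block grouping $\mathcal{B}$ with the same size sequence, and prove the rank inequality $\ell_j\le L_j$, which gives $y^{\mathcal{B}}_j\le y^{\mathcal{G}}_j$ pointwise and hence $C(\mathcal{B})\le C(\mathcal{G})$ by monotonicity. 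The two constructions even differ in where the blocks are anchored (the paper anchors at each leader's position $p$, you anchor at the prefix-sum position $L_j$). What your approach buys is cleanliness: no iteration, no implicit termination argument, and no need to worry about the paper's swap step creating new non-consecutive groups or colliding with elements already in $Y$. What the paper's approach buys is a slightly more concrete ``local move'' intuition. Both rest on the same two structural ingredients --- group rate depends only on size and the maximum, and the downstream cost is coordinatewise monotone in the group rates --- and you are right to flag that the monotonicity for the constrained program Eq.~\eqref{eq:lower_problem_one_case_origin} follows because the feasible region there is independent of the $y_{i,j}$ while the objective is monotone in them.
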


\begin{proof}
\label{proof:deduce_grouping-possibilities}
The underlying principle is that we can demonstrate the existence of an optimal solution by means similar to the proof in Theorem \ref{thm:tp_in_one_node}, that is, through the exchange of GPU pairs. Suppose there exists a non-consecutive group $Y = ({X_{i_{1}}, \ldots, X_{i_{k}}})$, and assuming the corresponding straggling rates satisfy $x_{i_{1}} \ge \ldots \ge x_{i_{k}}$. Then we consider the GPU with the highest straggling rate in the group, $X_{i_{1}}$. Assuming it is in the $p$-th position in the GPU sequence arranged in descending order of the straggling rate. We then consider $(x_{p}, x_{p-1}, \ldots, x_{p-k+1})$. Because the non-consecutive group already has $k-1$ GPUs with a smaller straggling rate than $X_{p}$, it is certain that such a consecutive sequence can be extracted without the issue of insufficient GPU amount. And for the new continuous GPU group with descending order $Y^{\prime} = (X_{p}, X_{p-1}, \ldots, X_{p-k+1})$, it necessarily possesses the property that
\begin{gather*}
x_{p} = x_{i_1} \\
x_{p-1} \ge x_{i_2} \\
x_{p-2} \ge x_{i_3} \\
\ldots
\\
x_{p-k+1} \ge x_{i_k} \\
x_{p} \ge x_{p-1} \ge \ldots \ge x_{p-k+1}
\end{gather*}
Then, we swap each $X_{i_j}$ with $X_{p - (j - 1)}$. For the group who owns $X_{p - (j - 1)}$ before, it gets a better GPU with smaller straggling rate, and therefore would have a less (or at least equal) group straggling rate in total. And for the newly formed group $Y^{\prime}$, since the $X_{p}$ is unchanged, it shares the same group straggling rate with $Y$ before. Thus, the total training cost will only get better after swapping to turn $Y$ into $Y^{\prime}$. And a grouping result that only consists of consecutive GPUs in each group would be better compared to all.
\end{proof}

Then, we consider the circumstance that an original group consisting of 8 GPUs is split due to a heavy straggler in it, and we need to classify the 7 remaining GPUs into three groups with 1, 2, and 4 GPUs, respectively. 
Note that the following enumeration of grouping is generalized and applicable to various partitioning scenarios and straggler situations. 

Without loss of generality, assuming that the remaining GPUs are sorted in descending order by straggling rate $x_{1} \ge \ldots \ge x_{7}$, corresponding sequentially to GPU $X_1, \ldots, X_7$. Let $Y_1$, $Y_2$ and $Y_3$ represent the groups with 1, 2 and 4 GPUs, respectively. 
According to Proposition~\ref{thm:deduce_grouping-possibilities}, if $X_1$ is partitioned to $Y_3$, then we must partition $X_2, X_3, X_4$ to $Y_3$ to achieve better performance. Similarly, if $X_1$ is partitioned to $Y_2$, then we must partition $X_2$ to $Y_2$. Consequently, we only need to consider the following 6 grouping results: 
\begin{align*}
\begin{cases}
Y_1 = \left(X_1\right) \\
Y_2 = \left(X_2, X_3\right) \\
Y_3 = \left(X_4, X_5, X_6, X_7\right) \\
\end{cases}
\begin{cases}
Y_1 = \left(X_1\right) \\
Y_2 = \left(X_6, X_7\right) \\
Y_3 = \left(X_2, X_3, X_4, X_5\right) \\
\end{cases}
\\
\begin{cases}
Y_1 = \left(X_3\right) \\
Y_2 = \left(X_1, X_2\right) \\
Y_3 = \left(X_4, X_5, X_6, X_7\right) \\
\end{cases}
\begin{cases}
Y_1 = \left(X_5\right) \\
Y_2 = \left(X_6, X_7\right) \\
Y_3 = \left(X_1, X_2, X_3, X_4\right) \\
\end{cases}
\\
\begin{cases}
Y_1 = \left(X_7\right) \\
Y_2 = \left(X_1, X_2\right) \\
Y_3 = \left(X_3, X_4, X_5, X_6\right) \\
\end{cases}
\begin{cases}
Y_1 = \left(X_7\right) \\
Y_2 = \left(X_5, X_6\right) \\
Y_3 = \left(X_1, X_2, X_3, X_4\right) \\
\end{cases}
\end{align*}

We will subsequently apply Theorem~\ref{thm:thm-2} to identify the most favorable among these candidates. 

To further substantiate the efficacy of Theorem~\ref{thm:thm-2} in selecting the optimal one of all possibilities, we evaluate the end-to-end performance and the estimated performance provided by Theorem~\ref{thm:thm-2} on the 110B model. We introduce three stragglers with straggling rates of 
$2.57$, $5.42$ and $12.53$ within a single node. Then we assess the performance of \system by crafting parallelization plans with the three different grouping results illustrated in Figure~\ref{fig:split_and_group}. From Figure~\ref{fig:theorem_2}, we can find that the correlation between the training time estimated using Theorem~\ref{thm:thm-2} and the actual training time of \system is coherent, indicating that the lower the estimated time, the lower the actual training time of \system. Consequently, we can deduce that Theorem~\ref{thm:thm-2} efficiently guides us in identifying the optimal grouping result without the necessity to evaluate all potentialities by solving the pipeline division problem as well as the lower-level model. Instead, the overhead incurred by employing Theorem~\ref{thm:thm-2} to filter the best grouping result is negligible. 

\begin{figure*}[!h]
\centering
\includegraphics[width=0.8\textwidth]{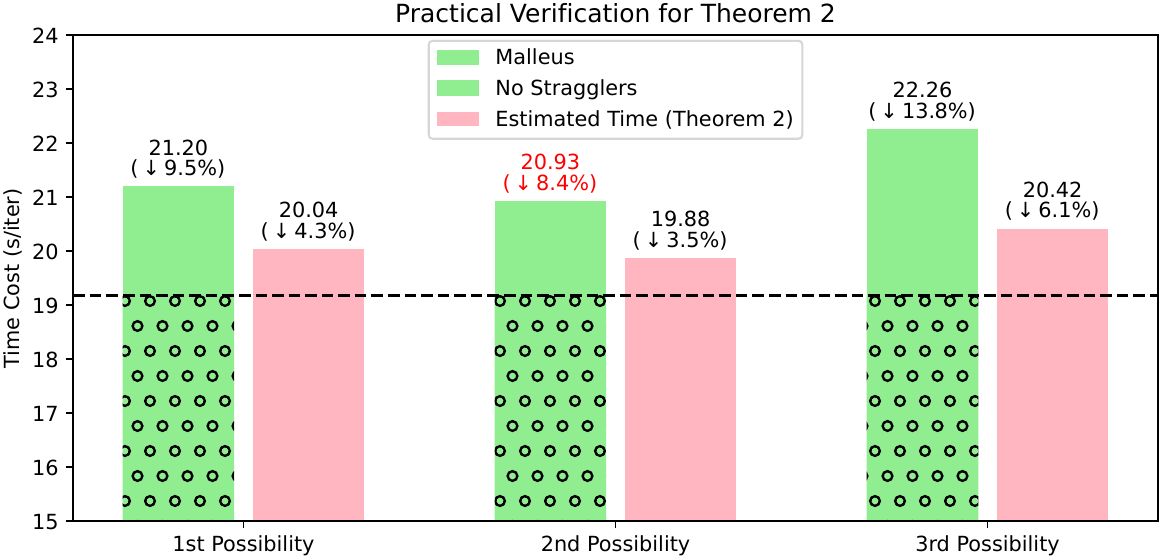}
\caption{\small{Effectiveness of Theorem~\ref{thm:thm-2}, evaluated on the 110B model. Three scenarios correspond to three different grouping possibilities after splitting depicted in Figure~\ref{fig:split_and_group}.}}
\label{fig:theorem_2}
\end{figure*}

\end{document}